\newif\ifchange
\newcommand{\hide}[1]{}
\newtheorem{theorem}{Theorem}
\newtheorem{corollary}{Corollary}
\newtheorem{lemma}{Lemma}
\newtheorem{observation}{Observation}
\newenvironment{proof}{\noindent {\bf Proof.}}{\hfill$\Box$}
\title{Compatibility of convergence algorithms 
for autonomous mobile robots\thanks{An extended abstract of this paper
appeared in the Proc. International Colloquium on Structural Information 
and Communication Complexity (SIROCCO2023)~\cite{AY23}.}}
\author{Yuichi Asahiro and Masafumi Yamashita}
\date{May 27, 2023}
\begin{document}

\maketitle

\begin{abstract}
We investigate autonomous mobile robots in the Euclidean plane.
A robot has a function called {\em target function} to decide
the destination from the robots' positions.
Robots may have different target functions.
If the robots whose target functions are chosen from a set $\Phi$ 
of target functions always solve a problem $\Pi$,
we say that $\Phi$ is compatible with respect to $\Pi$.
If $\Phi$ is compatible with respect to $\Pi$,
every target function $\phi \in \Phi$ is an algorithm for $\Pi$.
Even if both $\phi$ and $\phi'$ are algorithms for $\Pi$,
$\{ \phi, \phi' \}$ may not be compatible with respect to $\Pi$.
From the view point of compatibility,
we investigate the convergence, the fault tolerant ($n,f$)-convergence (FC($f$)),
the fault tolerant ($n,f$)-convergence to $f$ points (FC($f$)-PO),
the fault tolerant ($n,f$)-convergence to a convex $f$-gon (FC($f$)-CP),
and the gathering problems, assuming crash failures.
Obtained results classify these problems into three groups:
The convergence, FC(1), FC(1)-PO, and FC($f$)-CP compose the first group:
{\bf Every} set of target functions which always shrink the convex hull of a configuration
is compatible.
The second group is composed of the gathering and FC($f$)-PO for $f \geq 2$:
{\bf No} set of target functions which always shrink the convex hull of a configuration
is compatible.
The third group, FC($f$) for $f \geq 2$, is placed in between.
Thus, FC(1) and FC(2),
FC(1)-PO and FC(2)-PO, 
and FC(2) and FC(2)-PO are respectively in different groups, 
despite that FC(1) and FC(1)-PO are in the first group.

\end{abstract}

\section{Introduction}

\subsection{Convergence problem and compatibility}

Over the last three decades,
swarms of autonomous mobile robots, e.g., automated guided vehicles and drones, 
have obtained much attention in a variety of contexts 
\cite{AP04,AOSY99,BA98,BFKPSW22,CFK97,FPSW99,KLOTW19,KIF93,MKK94,Parker93,PT00,SS96,SY99,WB96}.
Among them is understanding solvable problems by a swarm consisting of many 
simple and identical robots in a distributed manner,
which has been constantly attracting researchers in distributed computing society e.g.,
\cite{AP04,AOSY99,ASY22,BDT13,BFKPSW22,CP05,CP08,CDFH11,DFSY15,DPCMP16,DPT19,Flocchini19,FPS12,FPSW99,ISKI12,KLOTW19,Katreniak11,Prencipe19,SY99,Viglietta19,YS10,YUKY17}.

Many of the works mentioned above adopt the following robot model.
The robots look identical and indistinguishable.
Each robot is represented by a point that moves in the Euclidean plane.
It lacks identifier and communication devices,
and operates in Look-Compute-Move cycles.
When a robot starts a cycle,
it identifies the multiset of the robots' positions 
in its $x$-$y$ local coordinate system
(i.e., it has the strong multiplicity detection capability),
computes the destination point using a target function\footnote{
Roughly, a target function is a function from $(R^2)^n$ to $R^2$,
where $R$ is the set of real numbers and $n$ is the number of robots,
i.e., given a snapshot in $(R^2)^n$, it returns a destination point in $R^2$.
Later, we define a target function a bit more carefully.} 
based only on the multiset identified,
and then moves towards the destination point.
Here, the $x$-$y$ local coordinate system is right-handed (i.e., it has the chirality)
and its origin is always the position of the robot (i.e., it is self-centric),
and all robots are typically requested to take the same target function.

If each cycle starts at a time $t$ and finishes, reaching the destination,
before (not including) $t+1$, for some integer $t$, 
the scheduler is said to be {\em semi-synchronous} (${\cal SSYNC}$).
If cycles can start and end any time (even on the way to the destination),
it is {\em asynchronous} (${\cal ASYNC}$).

This paper investigates several {\em convergence problems},
e.g., \cite{AOSY99,CP05,CP08,CDFH11,DPCMP16,Flocchini19,FPS12,Katreniak11,SY99}.
The simplest convergence problem requires the robots to converge to a single point.
Under the ${\cal SSYNC}$ model,
the problem is solvable for robots with unlimited visibility \cite{SY99},
and is also solvable for robots with limited visibility \cite{AOSY99}.
Even under the ${\cal ASYNC}$ model, 
it is solvable by a target function called CoG,
which always outputs the center of gravity of the robots' positions  \cite{CP05}.
Finally, \cite{Katreniak11} gives a convergence algorithm,
under the ${\cal ASYNC}$ model, 
even for robots with limited visibility.

The authors of \cite{CP05} showed that CoG correctly works 
under the sudden-stop model,
under which the movement of a robot towards the center of gravity might stop 
on the way after traversing at least some fixed distance.
This implies that the robots can correctly converge to a point,
even if they take different target functions,
as long as they always move robots towards the current center of gravity
over distance at least some fixed constant.
This idea is extended in \cite{CDFH11}:
The authors proposed the $\delta$-{\em inner} property\footnote{
Let $P$, $D$, and $\bm{o}$ be the multiset of robots' positions,
the axes aligned minimum box containing $P$, and its center, respectively.
Define $\delta * D = \{ (1-2\delta) \bm{x} + 2\delta \bm{o} : \bm{x} \in D \}$.
A function $\phi$ is $\delta$-inner,
if $\phi(P) \in \delta * D$ for any $P$.}
of target functions,
and showed that the robot system converges to a point, 
if all robots take $\delta$-inner target functions, 
provided $\delta \in (0,1/2]$.

Consider a problem $\Pi$ and a set of target functions $\Phi$.
If the robots whose target functions are chosen from $\Phi$ always solve $\Pi$,
we say that $\Phi$ is {\em compatible with respect to} $\Pi$.
For example, every (non-empty) set of target functions satisfying 
the $\delta$-{\em inner} property is compatible with respect to 
the convergence problem under the ${\cal ASYNC}$ model \cite{CDFH11}.

If a singleton $\{ \phi \}$ is compatible with respect to $\Pi$,
we abuse to say that target function $\phi$ is an {\em algorithm}\footnote{
Here, we abuse term ``algorithm,''
since an algorithm must have a finite description.
A target function may not.
(See Section~\ref{SScontributions} for the definition of a target function.)
To compensate the abuse,
when to show the existence of a target function,
we will give a finite procedure to compute it.
To show its non-existence,
we will show the non-existence of a function (not only an algorithm).}
for $\Pi$.
If a set $\Phi$ of target functions is compatible with respect to $\Pi$, 
every target function $\phi \in \Phi$ is an algorithm for $\Pi$ by definition.
(The converse is not always true.)
Thus there is an algorithm for $\Pi$,
if and only if there is a compatible set $\Phi$ with respect to $\Pi$.
We sometimes say that a problem $\Pi$ is {\em solvable},
if there is a compatible set $\Phi$ with respect to $\Pi$,
which means that there is an algorithm for $\Pi$.

We would like to find a large compatible set $\Phi$ with respect to $\Pi$.
That $\Pi$ has a large compatible set with respect to $\Pi$ implies 
that $\Pi$ has many algorithms.
The difficulty of problems might be compared in terms of the sizes 
of their compatible sets.
A problem $\Pi$ which has a large compatible set $\Phi$ 
seems to have some practical merits, as well.
Two swarms both of which are controlled by target functions in $\Phi$
(which may be produced by different makers)
can merge to form a larger swarm, keeping the correctness of solving $\Pi$.
When a robot breaks down, 
we can safely replace it with another robot,
as long as it has a target function from $\Phi$.

\subsection{Convergence problems in the presence of crash faults}

This paper investigates three fault-tolerant convergence problems,
besides the convergence and the gathering problems.
This paper considers only crash faults:
A faulty robot can stop functioning at any time, becoming permanently inactive.
A faulty robot may not cause a malfunction, forever.
We cannot distinguish such a robot from non-faulty ones.
Let $n$ and $f (\leq n-1)$ be the number of robots and the number of faulty robots.

The {\em fault-tolerant (n,f)-convergence problem} (FC$(f)$) is the problem to find 
an algorithm which ensures that, as long as at most $f$ robots are faulty,
{\em all non-faulty robots} converge to a point.

The {\em fault-tolerant (n,f)-convergence problem to $f$ points} (FC$(f)$-PO) is the problem 
to find an algorithm which ensures that, as long as at most $f$ robots are faulty,
{\em all robots} (including faulty ones) converge to at most $f$ points.
All non-faulty robots need not converge to the same point.
If $f$ faulty robots have crashed at different positions,
each non-faulty robot must converge to one of the faulty robots.

The {\em fault-tolerant (n,f)-convergence problem to a convex f-gon} (FC$(f)$-CP) 
is the problem to find an algorithm which ensures that, 
as long as at most $f$ robots are faulty,
the convex hull of the positions of {\em all robots} (including faulty ones) converges 
to a convex $h$-gon $CH$ for some $h \leq f$, 
in such a way that, for each vertex of $CH$, there is a robot that converges to the vertex.

Since an algorithm for the FC(1)-PO solves the FC(1),
the former is not easier than the latter.
(Note that for $f \geq 2$, 
an algorithm for the FC($f$)-PO may not solve the FC($f$).)
Since an algorithm for the FC($f$)-PO solves the FC($f$)-CP,
again the former is not easier than the latter.
In \cite{CP05},
the authors showed that, for all $f \leq n-2$,
CoG is an algorithm for the FC($f$) under the ${\cal ASYNC}$ model.
To the best of the authors' knowledge,
the FC$(f)$-PO and the FC$(f)$-CP have not been investigated so far.

\subsection*{Gathering problems}

The {\em gathering problem} is similar to the convergence problem.
It requires the robots to gather in the exactly the same location.
The gathering problem has been investigated under a variety of assumptions,
e.g., \cite{AP04,BDT13,CFPS12,DFSY15,DPCMP16,Flocchini19,FPS12,SY99,YS10,YUKY17}.
Under the ${\cal SSYNC}$ model, 
the gathering problem is not solvable if $n = 2$.
If $n > 2$, it is solvable,
provided that all robots initially occupy distinct positions \cite{SY99}.
Under the ${\cal ASYNC}$ model, 
the same results hold \cite{CFPS12}.

Many other works investigate the gathering problem in the presence of crash faults.
See surveys \cite{DPCMP16,Flocchini19} for information on the fault-tolerant gathering problems.
The \emph{fault-tolerant $(n,f)$-gathering problem}, 
which is sometimes called the {\em weak gathering problem},
is the problem to find an algorithm  which ensures, as long as at most $f$ robots are faulty,
all non-faulty robots gather at a point.
In \cite{AP04}, the authors proposed a fault-tolerant $(n,1)$-gathering algorithm,
assuming that $n\geq 3$ and the robots initially occupy distinct positions.
Provided the chirality,
the fault-tolerant $(n,f)$-gathering problem can be solved for any $f < n$ 
except for the bivalent configuration\footnote{In the bivalent configuration, there are exactly two distinct points, where $n/2$ robots are located at each point.}~\cite{BDT13}.

Since the gathering problem is substantially harder than the convergence problem,
it may not be a good idea to use a gathering algorithm to solve the convergence problem,
especially when the size of solvable instances by a convergence algorithm is a major concern.

The gathering and convergence problems in the presence of Byzantine faults have also been
investigated or surveyed, e.g., in \cite{AP04,BPT09,BPT10,DPT19,Flocchini19}.

\subsection{Our contributions}
\label{SScontributions}

Let $R$ be the set of real numbers.
Formally, a {\em target function} $\phi$ is a function from $(R^2)^n$ 
to $R^2 \cup \{ \bot \}$ for all $n \geq 1$ such that $\phi(P) = \bot$, 
if and only if $(0,0) \not\in P$.
Here, $\bot$ is a special symbol to denote that $(0,0) \not\in P$. 
Suppose that a robot $r$ identifies a multiset $P$ of $n$ points,
which are the positions of the robots in its $x$-$y$ local coordinate system $Z$, 
in Look phase.
Then $(0,0) \in P$.\footnote{
Recall that $Z$ is self-centric;
that $(0,0) \not\in P$ means an error of eye sensor, 
which we assume will not occur, in this paper.}
Using its target function $\phi$, $r$ computes the target point 
$\bm{x} = \phi(P)$ in Compute phase.
Then it moves to $\bm{x}~(\not= \bot)$ in $Z$ in Move phase.

Let the convex hull and the center of gravity of $P$ be $CH(P)$ and $g(P)$, respectively.
For any $0 \leq d$, 
let $d * CH(P) = \{ d \bm{x} + (1-d)g(P) : \bm{x} \in CH(P)\}$.
The {\em scale} $\alpha(\phi)$ of a target function $\phi$ is defined by
\[
\alpha(\phi) = \sup_{P \in (R^2)^n} \alpha(\phi,P), 
\]
where $\alpha(\phi,P)$ is the smallest $d$ satisfying $\phi(P) \in d * (CH(P))$.\footnote{
For the sake of completeness, we assume that $\alpha(\phi,P) = 0$ when $\phi(P) = \bot$.}
Then the scale of a set $\Phi$ of target functions $\phi$ is defined by
\[
\alpha(\Phi) = \sup_{\phi \in \Phi} \alpha(\phi). 
\]
The only target function $\phi$ satisfying $\alpha(\phi) = 0$ is CoG.
Thus the set $\Phi$ of target functions satisfying $\alpha(\Phi) = 0$ is 
a singleton $\{ {\rm CoG} \}$.
The idea of scale is similar to that of $\delta$-inner property in \cite{CDFH11},
and more directly embodies the idea behind the $\delta$-inner target function,
in the sense that $\phi$ never expands and tends to shrink $CH(P)$ if $\alpha(\phi) < 1$.

Our contributions can be summarized in Table~\ref{Table0010}.

\begin{table}
\caption{The compatibility of a set $\Phi$ of target functions 
with respect to a problem $\Pi$, taking its scale $\alpha(\Phi)$ as a parameter.
Each entry contains the status A, E, N, or ? of the compatibility 
of $\Phi$ with respect to $\Pi$
(and the theorem/corollary/observation/citation number establishing the result
in parentheses).
Letter 'A' means that every $\Phi$ such that $\alpha(\Phi)$ is in the range is
compatible with respect to $\Pi$.
Letter 'N' means that any $\Phi$ such that $\alpha(\Phi)$ is in the range is 
not compatible with respect to $\Pi$,
which indicates the absence of an algorithm.
Letter 'E' means that some $\Phi$ is compatible,
while some other is not,
which indicates the existence of an algorithm.
Letter '?' means that the answer is unknown.}
\label{Table0010} 

\smallskip

\centering
\begin{tabular}{|c|l|l|l|}
\hline
 \multirow{2}{*}{problem $\Pi$} & \multicolumn{3}{c|}{scale $\alpha(\Phi)$}\\ \cline{2-4}
       & \multicolumn{1}{c|}{$\alpha(\Phi) = 0$} & \multicolumn{1}{c|}{$0 < \alpha(\Phi) < 1$} & \multicolumn{1}{c|}{$\alpha(\Phi) = 1$} \\ \hline \hline
  Convergence & A (Thm.~\ref{T0020}~\cite{CP05}) & A (Thm.~\ref{T0030}) & 
E (Thm.~\ref{T0040})  \\
  FC$(1)$    & A (\cite{CP05}) & A (Cor.~\ref{C1030}) & E (Thm.~\ref{T1020})\\
 FC$(1)$-PO & A (Thm.~\ref{T1030}) & A (Thm.~\ref{T1030}) & E (Thm.~\ref{T1020}) \\
 FC$(f)$-CP ($f \geq 2$) & A (Thm.~\ref{T3010}) & A (Thm.~\ref{T3010}) & E (Thm.~\ref{O3030}) \\
  FC$(f)$ ($f \geq 2$) & A (Thm.~\ref{T2010}~\cite{CP05}) & E (Thm.~\ref{T2020}) 
  & E (Cor.~\ref{C2025}) \\
  FC$(2)$-PO & N (Thm.~\ref{T4020}) & N (Thm.~\ref{T4020}) & E (Obs.~\ref{O4026}, Thm.~\ref{T4050})\\
  FC$(f)$-PO ($f \geq 3$) & N (Thm.~\ref{T4020}) & N (Thm.~\ref{T4020}) & ? \\\hline \hline
  Gathering  & N (Thm.~\ref{T5010}) & N (Thm.~\ref{T5010}) & E (Thm.~\ref{T5005}~\cite{SY99})  \\ 
\hline 
\end{tabular}

\end{table}

In Table~\ref{Table0010},
FC$(f)$, FC$(f)$-PO, and FC$(f)$-CP are respectively abbreviations of
the fault-tolerant $(n,f)$-convergence problem,
the fault-tolerant $(n,f)$-convergence problem to $f$ points, and
the fault-tolerant $(n,f)$-convergence problem to a convex $f$-gon. 
Letter 'A' in the entry of a problem $\Pi$ and a range of $\alpha(\Phi)$
means that every $\Phi$ such that $\alpha(\Phi)$ is in the range is 
compatible with respect to $\Pi$.
Letter 'N' means that any $\Phi$ such that $\alpha(\Phi)$ is in the range is 
not compatible with respect to $\Pi$,
which indicates the absence of an algorithm.
Letter 'E' means some $\Phi$ is compatible,
while some other is not,
which indicates the existence of an algorithm.
Letter '?' means that the answer is unknown.

For example, the entry of Convergence and $\alpha(\phi) = 0$ is A.
Thus $\{ {\rm CoG} \}$ is compatible with respect to the convergence problem,
which is equivalent to say that CoG is an algorithm for the convergence problem,
as \cite{CP05} shows.
Not only the case of $\alpha(\phi) = 0$,
but also the case of $0 < \alpha(\Phi) < 1$, 
every $\Phi$ is compatible with respect to the convergence problem.

\subsubsection*{Organization.}

After introducing the robot model we adopt in this paper in Section~\ref{Smodel},
we investigate the convergence problem, which has been studied extensively,
in Section~\ref{Sconvergence},
from the new viewpoint of understanding its compatibility.
In Section~\ref{Sonecrash},
we discuss the compatibilities of two convergence problems 
in the presence of at most one crash fault,
i.e., the FC(1) and the FC(1)-PO,
and show that they have the same property as the convergence problem.
Sections~\ref{SFCf} and \ref{SFCPf} respectively
investigate the compatibilities of the FC($f$) and the FC($f$)-CP for $f \geq 2$.
Every set $\Phi$ of target functions such that $0 \leq \alpha(\Phi) <1$ is
compatible with respect to the FC$(f)$-CP (like the FC(1), the FC(1)-PO, 
and the convergence problem),
while this property does not hold for the FC($f$) for $f \geq 2$.
Section~\ref{SFPOf} first shows that a target function $\phi$ is an algorithm 
for the FC($f$)-PO for $f \geq 2$, only if $\alpha(\phi) \geq 1$.
Thus any set $\Phi$ of target functions such that $0 \leq \alpha(\Phi) <1$ is 
{\bf not} compatible with respect to the FC($f$)-PO for $f \geq 2$, unlike the FC($f$).
We then present an algorithm $\psi_{(n,2)}$ for the FC(2)-PO.
Section~\ref{Sgather} investigates the gathering problem
to show the difference between this and the convergence problems 
from the viewpoint of compatibility.
We conclude the paper by presenting a list of open problems,
in Section~\ref{Sconclusion}.

\section{The model}
\label{Smodel}

Consider a robot system $\cal R$ consisting of $n$ robots $r_1, r_2, \ldots , r_n$.
Each robot $r_i$ has its own unit of length, and a local compass
defining an $x$-$y$ local coordinate system $Z_i$,
which is assumed to be right-handed and self-centric,
i.e., its origin $(0,0)$ is always the position of $r_i$.
We also assume that $r_i$ has the strong multiplicity detection capability,
i.e., it can count the number of robots resides at a point.

Given a target function $\phi_i$, 
each robot $r_i \in {\cal R}$ repeatedly executes a Look-Compute-Move cycle:
\begin{description}
 \item[Look:]  Robot $r_i$ identifies the multiset $P$ of the robots' positions 
(including the one of $r_i$) in $Z_i$.
Since $r_i$ has the strong multiplicity detection capability,
it can identify $P$ not only distinct positions of $P$.

\item[Compute:] Robot $r_i$ computes $\bm{x}_i = \phi_i(P)$.
(We do not mind even if $\phi_i$ is not computable.
We simply assume that $\phi_i(P)$ is given by an oracle.)

\item[Move:] Robot $r_i$ moves to $\bm{x}_i$.
We assume that $r_i$ always reaches $\bm{x}_i$ before this Move phase ends.
\end{description}

We assume a discrete time $0, 1, \ldots$.
At each time $t \geq 0$, 
the scheduler nondeterministically activates some subset 
(that may be none or all) of robots.
Then activated robots execute a cycle which starts at 
$t$ and ends before (not including) $t+1$,
i.e., the scheduler is semi-synchronous (${\cal SSYNC}$).

Let $Z_0$ be the 
$x$-$y$ global coordinate system,
which is right-handed and is not accessible by any robot $r_i$.
The coordinate transformation from $Z_i$ to $Z_0$ is denoted by $\gamma_i$.
We use $Z_0$ and $\gamma_i$ just for the purpose of explanation.

The position of robot $r_i$ at time $t$ in $Z_0$ is denoted by $\bm{x}_t(r_i)$.
Then $P_t = \{ \bm{x}_t(r_i) : 1 \leq i \leq n \}$ is a multiset representing
the positions of all robots at time $t$,
and is called the {\em configuration} at $t$.

Given an initial configuration $P_0$,
an assignment $\cal A$ of a target function $\phi_i$ to each robot $r_i$,
and an $\cal SSYNC$ schedule (produced by the $\cal SSYNC$ scheduler),
which decides the set of robots activated (to start a new Look-Compute-Move cycle) 
at each time instant $t$,
the execution of $\cal R$ is a sequence 
${\cal E}: P_0, P_1, \ldots , P_t, \ldots$ of configurations starting from $P_0$.
Here, for all $r_i$ and $t \geq 0$,
if $r_i$ is not activated at $t$, $\bm{x}_{t+1}(r_i) = \bm{x}_t(r_i)$.
Otherwise, if it is activated,
$r_i$ identifies $Q^{(i)}_t = \gamma^{-1}_i(P_t)$ in $Z_i$, 
computes $\bm{y} = \phi_i(Q^{(i)}_t)$, 
and moves to $\bm{y}$ in $Z_i$.\footnote{
Since $(0,0) \in Q^{(i)}_t$ by definition,
$\bm{y} \not= \bot$.}
Then $\bm{x}_{t+1}(r_i) = \gamma_i(\bm{y})$.
We assume that the scheduler is fair:
It activates every robot infinitely many times.
Throughout the paper, we regard the scheduler as an adversary.

Before closing this section,
we introduce several notations which we will use in the following sections.
Let $P \in (R^2)^n$.
The distinct points of $P$ is denoted by $\overline{P}$.
Then $|P|$ (resp. $|\overline{P}|$) denotes the number of points
(resp. the number of distinct points) in $P$.

Let $CH(P)$ be the convex hull of $P$.
While $P$ is a multiset of $n$ points,
$CH(P)$ is a convex region (including its inside).
We sometimes denote $CH(P)$ by a sequence of vertices of $CH(P)$ 
appearing on the boundary counter-clockwise.
Obviously, $CH(P) = CH(\overline{P})$.

The center of gravity $g(P)$ of $P$ is defined by 
$g(P) = \sum_{\bm{x} \in P} \bm{x}/n$.
Note that $g(P) \not= g(\overline{P})$ in general.

For two points $\bm{x}$ and $\bm{y}$ in $R^2$, 
$dist(\bm{x},\bm{y})$ denotes the Euclidean distance between $\bm{x}$ and $\bm{y}$.
For a set $B (\subseteq R^2)$ of points and a point $\bm{a} \in R^2$,
$dist(\bm{a},B) = \min_{\bm{x} \in B} dist(\bm{a},\bm{x})$.

Finally, let $\mathcal{P} = \{ P \in (R^2)^n: (0,0) \in P,\  n \geq 1 \}$.
Then $\mathcal{P}$ is the set of multisets of $n$ points that a robot may
identify in a Look phase.
We regard $\mathcal{P}$ as the domain of target functions.

\section{Convergence problem}
\label{Sconvergence}

We start our investigation with the convergence problem,
provided that all robots are non-faulty.
For any $0 \leq \alpha \leq 1$,
consider a target function CoG$_{\alpha}$ defined by
\[
{\rm CoG}_{\alpha}(P) = (1-\alpha)g(P),
\]
for any $P \in {\mathcal P}$.
The scale of CoG$_{\alpha}$ is $\alpha$,
and CoG$_0 =$ CoG.
The following theorem holds,
since CoG works correctly under the sudden stop model.

\begin{theorem}[\cite{CP05}]
\label{T0020}
For any $0 \leq \alpha < 1$,
let $\Phi_{\alpha} = \{ {\rm CoG}_{\alpha} \}$.
Then $\Phi_{\alpha}$  is compatible with respect to the convergence problem,
or equivalently, ${\rm CoG}_{\alpha}$ is an algorithm 
for the convergence problem.
\end{theorem}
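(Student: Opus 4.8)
The plan is to reduce the behaviour of ${\rm CoG}_{\alpha}$ to that of plain CoG under the sudden-stop model, and then invoke the result of \cite{CP05}. The crucial preliminary step is to show that, although ${\rm CoG}_{\alpha}$ is defined in each robot's local coordinate system, the resulting motion in the global system $Z_0$ is coordinate-free. Concretely, I would first establish that, whenever $r_i$ is activated at time $t$,
\[
\bm{x}_{t+1}(r_i) = \alpha\, \bm{x}_t(r_i) + (1-\alpha)\, g(P_t),
\]
where $g(P_t)$ is the center of gravity of the configuration in $Z_0$.

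To derive this identity I would use the fact that the center of gravity is equivariant under the affine transformation $\gamma_i$, i.e.\ $g(\gamma_i^{-1}(P_t)) = \gamma_i^{-1}(g(P_t))$, since $g$ commutes with affine maps and $r_i$ counts all $n$ robots correctly by the strong multiplicity detection capability. Substituting $\phi_i(Q^{(i)}_t) = (1-\alpha)\, g(Q^{(i)}_t)$ and applying $\gamma_i$, the scale and rotation of $Z_i$ cancel, leaving the displacement $\bm{x}_{t+1}(r_i) - \bm{x}_t(r_i) = (1-\alpha)\,(g(P_t) - \bm{x}_t(r_i))$. Two consequences follow at once: first, since $\bm{x}_{t+1}(r_i)$ is a convex combination of $\bm{x}_t(r_i)$ and $g(P_t) \in CH(P_t)$, we get $CH(P_{t+1}) \subseteq CH(P_t)$, so the convex hulls form a nested sequence; second, each activated robot moves from its current position straight toward $g(P_t)$, stopping after covering exactly the fixed fraction $1-\alpha > 0$ of the distance.

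The second consequence identifies any execution of ${\rm CoG}_{\alpha}$ with an execution of CoG in which every activated robot suffers a sudden stop at the fixed fraction $1-\alpha$ of its intended trip to the center of gravity. Since $\alpha < 1$, this fraction is bounded below by a positive constant, so the hypotheses of the sudden-stop convergence result of \cite{CP05} are met; together with the fairness of the ${\cal SSYNC}$ scheduler this yields convergence of all robots to a single point, establishing that $\Phi_{\alpha}$ is compatible with respect to the convergence problem. The main obstacle is the bookkeeping in matching ${\rm CoG}_{\alpha}$ to the precise formulation of the sudden-stop model of \cite{CP05}: one must verify that ``stopping after a fixed \emph{fraction} of the trip'' (which is what the identity gives) is the condition covered by that model, rather than a fixed \emph{absolute}-distance lower bound, which ${\rm CoG}_{\alpha}$ would violate near convergence. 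Confirming this equivalence, and hence that \cite{CP05} applies verbatim, is where the real care is needed.
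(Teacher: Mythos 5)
Your preliminary computation is correct and worth making explicit: since $\gamma_i$ is a similarity (hence affine) map, $g(\gamma_i^{-1}(P_t))=\gamma_i^{-1}(g(P_t))$, and unwinding the transformation indeed yields $\bm{x}_{t+1}(r_i)=\alpha\,\bm{x}_t(r_i)+(1-\alpha)\,g(P_t)$, from which the nesting $CH(P_{t+1})\subseteq CH(P_t)$ follows. Up to this point you are following the same route as the paper, whose entire justification of Theorem~\ref{T0020} is the one-line appeal to the sudden-stop result of \cite{CP05}.

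The difficulty you flag at the end, however, is not mere bookkeeping: it is a genuine gap, and as written your reduction does not close. The sudden-stop model of \cite{CP05}, as this paper itself describes it, guarantees that an activated robot either reaches the center of gravity or traverses at least some fixed \emph{absolute} distance $S>0$. Under ${\rm CoG}_{\alpha}$ with $\alpha>0$, a robot at distance $d$ from $g(P_t)$ traverses exactly $(1-\alpha)d$, which is strictly less than $\min(d,S)$ whenever $d<S/(1-\alpha)$; since the configuration shrinks, every robot is eventually that close to the center of gravity, so no choice of $S$ realizes an execution of ${\rm CoG}_{\alpha}$ as an execution of CoG with sudden stops. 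Hence the theorem of \cite{CP05} cannot be applied verbatim: you must either rework its internal lemmas to show that a fixed-\emph{fraction} progress guarantee suffices, or argue directly. The cleanest repair is the direct argument that the paper gives for the more general Theorem~\ref{T0030}, which subsumes the present statement: the nested hulls $CH_t$ converge to a convex set $CH$; if $CH$ had a vertex $\bm{p}$, then for small $\epsilon$ every robot activated after the hulls enter $N_{\epsilon}(CH)$ lands in $\alpha * CH_t$, at distance bounded away from $\bm{p}$ and never returns to $N_{\epsilon}(\bm{p})$, so by fairness no robot converges to $\bm{p}$, contradicting $CH_t\rightarrow CH$ unless $CH$ is a single point. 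Substituting that argument (specialized to $\Phi_\alpha$) for your final citation completes the proof.
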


We extend Theorem~\ref{T0020} to have the following theorem.
Let $N_{\epsilon}(B)$ be the $\epsilon$-neighbor of a set $B$,
i.e., $N_{\epsilon}(B) = \{ \bm{x}: dist(\bm{x},\bm{b}) < \epsilon, \bm{b} \in B \}$.
When $B$ is a singleton $\{ \bm{b} \}$,
we denote $N_{\epsilon}(B)$ by $N_{\epsilon}(\bm{b})$.

\medskip

\begin{theorem}
\label{T0030} 
Let $\Phi$ be a set of target functions such that $0 \leq \alpha(\Phi) < 1$.
Then $\Phi$ is compatible with respect to the convergence problem.
\end{theorem}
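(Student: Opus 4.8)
The plan is to run the standard ``shrinking convex hull'' scheme, but to quantify the shrinkage through the support function rather than through a per-round contraction factor (which would degrade to $1$ and fail). Write $\alpha = \alpha(\Phi) < 1$. First I would establish monotonicity of the convex hull, $CH(P_{t+1}) \subseteq CH(P_t)$ for every $t$. The two facts to check are that the scaling operation $d * CH(\cdot)$ commutes with each local-to-global similarity $\gamma_i$ (it is built from the centroid $g$ and an affine dilation about it, both preserved by similarities), and that $d * CH(P) \subseteq d' * CH(P)$ whenever $d \le d'$, because $g(P) \in CH(P)$ and the point in question is a convex combination of a hull point and $g(P)$. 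Consequently, when an activated robot $r_i$ evaluates $\phi_i$ on $\gamma_i^{-1}(P_t)$ and moves, its new global position lies in $\alpha(\phi_i,\cdot) * CH(P_t) \subseteq \alpha * CH(P_t) \subseteq CH(P_t)$, while inactive robots trivially stay inside $CH(P_t)$; hence $P_{t+1} \subseteq CH(P_t)$ and the hulls are nested.

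From monotonicity, the compact convex sets $CH(P_t)$ decrease to a nonempty compact convex limit $C_{\infty} = \bigcap_t CH(P_t)$, and convergence of the whole system reduces to showing that $C_{\infty}$ is a single point: a nested family of compact convex sets whose intersection is a single point has diameter tending to $0$, which pins every robot to that point. This last reduction is the real obstacle, and it is where $\alpha < 1$ and the fairness of the scheduler must be used together.

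To handle it I would argue by contradiction through the support function. Fix a unit direction $\bm u$ and set $h_t(\bm u) = \max_{\bm x \in P_t}\langle \bm x, \bm u\rangle$ and $b_t(\bm u) = \min_{\bm x \in P_t}\langle \bm x, \bm u\rangle$; by monotonicity $h_t$ is non-increasing and $b_t$ non-decreasing, so the width $h_t - b_t$ is non-increasing. If $C_{\infty}$ were not a single point it would contain two distinct points, yielding a direction $\bm u$ whose limiting width is some $w > 0$, so $h_t(\bm u) - b_t(\bm u) \ge w$ for all $t$. The key estimate is that whenever a robot is activated at time $s$ and moves to $\bm y = \alpha' \bm z + (1-\alpha') g(P_s) \in \alpha * CH(P_s)$ with $\alpha' \le \alpha$, one gets $\langle \bm y, \bm u\rangle \le h_s - (1-\alpha)\bigl(h_s - \langle g(P_s),\bm u\rangle\bigr)$, and since the single minimal robot already forces $h_s - \langle g(P_s),\bm u\rangle \ge \tfrac{1}{n}(h_s - b_s) \ge \tfrac{w}{n}$, the robot lands at height at most $h_s - \tfrac{(1-\alpha)w}{n}$ in direction $\bm u$. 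By fairness, for each large $k$ pick a time $t_k'$ by which every robot has been activated at least once after $t_k$; applying the estimate at each robot's \emph{last} activation $s$ in $[t_k,t_k')$, after which it does not move until $t_k'$, and using $h_s \le h_{t_k}$ gives $h_{t_k'} \le h_{t_k} - \tfrac{(1-\alpha)w}{n}$ for \emph{all} robots simultaneously. Letting $k \to \infty$, both $h_{t_k}$ and $h_{t_k'}$ tend to the common limit of $h_t$, forcing $0 \le -\tfrac{(1-\alpha)w}{n} < 0$, a contradiction. Hence $C_{\infty}$ is a single point $\bm c$, the diameters of $CH(P_t)$ shrink to $0$, every robot converges to $\bm c$, and $\Phi$ is compatible with respect to the convergence problem. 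I expect the only delicate bookkeeping to be the coordinate-system handling in the monotonicity step and the ``last activation before $t_k'$'' freezing of each robot's height; the inequality $h_s - \langle g(P_s),\bm u\rangle \ge \tfrac1n(h_s-b_s)$ is what converts $\alpha < 1$ into a uniform, configuration-independent drop, and is the crux of the whole argument.
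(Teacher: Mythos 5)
Your proof is correct, and while it rests on the same core geometric fact as the paper's proof, it executes the contradiction quite differently. Both arguments begin with the nested-hull observation $CH(P_{t+1})\subseteq CH(P_t)$ and both convert $\alpha(\Phi)<1$ into a definite inward motion via the same $1/n$ bound on the depth of the centroid (your $h_s-\langle g(P_s),\bm u\rangle \ge \frac{1}{n}(h_s-b_s)$ is exactly the paper's $dist(\bm p,\bm g_t) > L/n - \epsilon$ in disguise). The divergence is in how the contradiction is extracted. The paper first asserts that the nested hulls converge to a convex $k$-gon $CH$, assumes $k\ge 2$, and then argues qualitatively: after every robot has been activated once past a suitable $t_0$, no robot ever re-enters the $\epsilon$-neighborhood of any vertex of $CH$, contradicting Hausdorff convergence to $CH$. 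You instead project onto a single direction $\bm u$ witnessing positive width of $C_\infty=\bigcap_t CH(P_t)$ and show that each fair round of activations decreases the support value $h_t(\bm u)$ by the fixed quantum $(1-\alpha)w/n$, which is incompatible with $h_t(\bm u)$ converging to a finite limit. Your route buys two things: it never needs the limit set to be a polygon (the paper's claim that nested convex hulls of $n$ points converge to a convex $k$-gon is true but requires its own small compactness argument, which the paper omits), and it replaces the paper's informal ``$\epsilon \ll (1-\alpha)L/n$'' bookkeeping with an explicit, configuration-independent decrement. The paper's version, in exchange, yields the slightly stronger localized statement (no robot remains near any vertex of the limit) that it explicitly flags for reuse in the proofs of Theorems~\ref{T1030} and~\ref{T3010}; your support-function estimate would need to be re-derived in a per-vertex form to serve the same purpose there. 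All the steps you flag as delicate check out: the similarity $\gamma_i$ does commute with $d*CH(\cdot)$, and the ``last activation in $[t_k,t_k')$'' freezing argument is sound under the paper's ${\cal SSYNC}$ assumption that a robot activated at time $s$ occupies its target from time $s+1$ onward.
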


\begin{proof}
Let $\phi_i \in \Phi$ be the target function taken by robot $r_i$ 
for $i = 1, 2, \ldots , n$.
Let $\alpha(\phi_i) = \alpha_i$ and $\alpha = \max_{1 \leq i \leq n} \alpha_i$.
Then $\alpha \leq \alpha(\Phi) < 1$.

Consider any execution ${\cal E}: P_0, P_1, \ldots$ starting from 
any initial configuration $P_0$.
We show that $P_t$ converges\footnote{
Formally, you should read as ``a sequence $\{ P_t: t = 0, 1, \ldots \}$ converges.''
This convention applies in what follows.
Observe that if $P_t$ converges to a point,
then each robot converges to the point.
Later, when we discuss the convergence to multiple points,
the convergence of $P_t$ is not sufficient 
to show the convergence of each robot.}
to a point.

Suppose that $P_t = \{ \bm{x}, \bm{x}, \ldots , \bm{x} \}$ at some time $t$, 
i.e., $|\overline{P_t}|=1$.
Since $\bm{g}_t = g(P_t) = \bm{x}$, $P_{t+1} = P_t$.
Thus convergence has already been achieved.
We assume without loss of generality that $|\overline{P_t}| \geq 2$ for all $t \geq 0$.

Let $A_t \subseteq {\cal R}$ be the set of robots activated at time $t$.
If $\bm{x}_t(r) = \bm{g}_t$ for all $r \in A_t$, $P_{t+1} = P_t$ holds.
However, there is a robot $r$ such that $\bm{x}_t(r) \not= \bm{g}_t$ 
since $|\overline{P_t}| \geq 2$,
and $r$ is eventually activated by the fairness of scheduler.
Thus, without loss of generality, 
we assume that there is a robot $r \in A_t$ such that $\bm{x}_t(r) \not= \bm{g}_t$,
and that $P_{t+1} \not= P_t$ holds for all $t \geq 0$.\footnote{
Formally, let ${\cal E}'$ be a sequence constructed from $\cal E$
by removing all subsequences $P_{t+1}, P_{t+2}, \ldots , P_{t'}$ 
such that $P_{t-1} \not= P_t$,
$P_t = P_{t+1} = \ldots = P_{t'}$,
and $P_{t'} \not= P_{t'+1}$,
where we assume that $P_{-1} \not= P_0$.
Then ${\cal E}'$ is also an execution,
i.e., there is an activation schedule (that the $\cal SSYNC$ scheduler can produce)
that produces ${\cal E}'$,
and $\cal E$ converges if and only if ${\cal E}'$ does.
We consider ${\cal E}'$ instead of $\cal E$.
This is what this and similar assumptions in this paper formally mean.}

We denote $CH(P_t)$ by $CH_t$.
Since $\alpha < 1$,
$CH_{t+1} \subseteq CH_t$,
which implies that $CH_t$ converges to a convex $k$-gon $CH$
(including a point and a line segment) for some positive integer $k$.
We show that $CH$ is a point, i.e., $k = 1$.

Let $\bm{p}_0, \bm{p}_1, \ldots, \bm{p}_{k-1}$ be the vertices of $CH$ 
aligned counter-clockwise on the boundary.
To derive a contradiction,
we assume that $k \geq 2$.
For any pair $(i,j)~ (0 \leq i < j \leq k-1)$,
let $L_{(i,j)} = dist(\bm{p}_i, \bm{p}_j)$,
and $L = \min_{0 \leq i < j \leq k-1} L_{(i,j)}$.
Since $CH_t$ converges to $CH$,
for any $0 < \epsilon \ll (1-\alpha)L/n$,\footnote{
In the proof, all arguments below hold, e.g., for any $\epsilon$ such that
$0 < \epsilon < \frac{1-\alpha}{3-\alpha}\cdot \frac{L}{n}$. 
In what follows, like in this inequality, 
we use notation ``$\ll$ (much less than)'' or ``$\gg$ (much greater than)''
if deriving a bound is obvious and is not our concern.}
there is a time instant $t_0$ such that, for all $t \geq t_0$,
$CH \subseteq CH_t \subseteq N_{\epsilon}(CH)$.
Observe that for any vertex $\bm{p}$ of $CH$,
\[
dist(\bm{p},\alpha * CH_t) > (1 - \alpha)(L/n - \epsilon) - \epsilon \gg \epsilon,
\]
because $dist(\bm{p},\bm{g}_t) > L/n - \epsilon$.

Suppose that a robot $r$ is activated at some time $t \geq t_0$.
Then $\bm{x}_{t+1}(r) \in \alpha * CH_t$,
which implies that $\bm{x}_{t+1}(r) \not\in N_{\epsilon}(\bm{p})$,
for any vertex $\bm{p}$ of $CH$.
If $r$ is reactivated at some time $t' > t$ for the first time after $t$,
since $CH_{t'} \subseteq CH_t$ and $\bm{x}_{t'+1}(r) \in \alpha * CH_{t'}$,
$\bm{x}_{t'+1}(r) \not\in N_{\epsilon}(\bm{p})$,
for any vertex $\bm{p}$ of $CH$.
Therefore, for any $t' > t$ and any vertex $\bm{p}$ of $CH$,
$\bm{x}_{t'}(r) \not\in N_{\epsilon}(\bm{p})$.

On the other hand, 
all robots will be activated infinitely many times after time $t$,
by the fairness of scheduler.
It is a contradiction to the assumption that $CH_t$ converges to $CH$,
since there is a time instant $t' > t$ such that
for any robot $r$ and any vertex $\bm{p}$ of $CH$,
$\bm{x}_{t'}(r) \not\in N_{\epsilon}(\bm{p})$ holds.\footnote{
We showed a statement which is slightly stronger than 
what we needed to show here, to use this fact later.}
\end{proof}

\medskip

The following corollary holds by Theorem~\ref{T0030}.

\begin{corollary}
\label{C0020}
\begin{enumerate}
 \item 
Let $\phi$ be a target function such that $0 \leq \alpha(\phi) < 1$.
Then $\Phi = \{ \phi \}$  is compatible with respect to the convergence problem,
or equivalently, $\phi$ is a convergence algorithm.
\item
Let $\Phi$ and $\Phi'$ be two sets of target functions such that 
$0 \leq \alpha(\Phi) < 1$ and $0 \leq \alpha(\Phi') < 1$ hold.
Then $\Phi \cup \Phi'$ is also compatible with respect to the convergence problem,
not only $\Phi$ and $\Phi'$.
\end{enumerate}
\end{corollary}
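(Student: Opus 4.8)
The plan is to derive both statements directly from Theorem~\ref{T0030}, whose hypothesis is phrased in terms of a set $\Phi$ of target functions with $0 \le \alpha(\Phi) < 1$. The only work required is to check that the two constructions appearing in the corollary---forming a singleton and forming a union---keep the scale strictly below $1$, after which Theorem~\ref{T0030} applies verbatim.

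For part~1, I would observe that $\Phi = \{\phi\}$ is itself a (one-element) set of target functions, and by the definition $\alpha(\Phi) = \sup_{\psi \in \Phi} \alpha(\psi)$ the supremum over a singleton equals $\alpha(\phi)$. Hence $\alpha(\{\phi\}) = \alpha(\phi) < 1$, and Theorem~\ref{T0030} immediately gives that $\{\phi\}$ is compatible with respect to the convergence problem. By the definition of an algorithm for a problem (a singleton set being compatible), this is precisely the assertion that $\phi$ is a convergence algorithm.

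For part~2, the key step is to compute the scale of the union. Again from the definition of $\alpha$ as a supremum, $\alpha(\Phi \cup \Phi') = \sup_{\psi \in \Phi \cup \Phi'} \alpha(\psi) = \max\{\sup_{\psi \in \Phi}\alpha(\psi),\ \sup_{\psi \in \Phi'}\alpha(\psi)\} = \max\{\alpha(\Phi),\alpha(\Phi')\}$. Since both $\alpha(\Phi) < 1$ and $\alpha(\Phi') < 1$ by hypothesis, their maximum is also strictly less than $1$, so $\alpha(\Phi \cup \Phi') < 1$. Applying Theorem~\ref{T0030} to the set $\Phi \cup \Phi'$ then yields its compatibility with respect to the convergence problem.

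There is no genuine obstacle here, as the statement is a corollary and its entire content reduces to the elementary fact that the scale $\alpha$, being a supremum taken over the set of target functions, is monotone under inclusion and sends unions to maxima. The one point worth stating with care is that the supremum over the union genuinely \emph{equals} the maximum of the two suprema, rather than merely being bounded by it; this is immediate, but it is exactly what allows the strict inequality ``$<1$'' to propagate from $\Phi$ and $\Phi'$ to $\Phi \cup \Phi'$.
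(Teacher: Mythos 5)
Your proposal is correct and matches the paper's intent exactly: the paper gives no separate proof, stating only that the corollary ``holds by Theorem~\ref{T0030},'' and your verification that $\alpha(\{\phi\}) = \alpha(\phi)$ and $\alpha(\Phi \cup \Phi') = \max\{\alpha(\Phi), \alpha(\Phi')\} < 1$ is precisely the routine check being elided.
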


Corollary~\ref{C0020} states that every target function $\phi$ 
such that $\alpha(\phi) < 1$ is a convergence algorithm.
However, some target function $\phi$ such that $\alpha(\phi) = 1$ is not 
a convergence algorithm.
Indeed, CoG$_1$ is an example as the following observation states.

\begin{observation}
\label{O0010} 
Target function \mbox{\rm CoG$_1$} is not a convergence algorithm.
Thus there is a set $\Phi$ of target functions such that $\alpha(\Phi) = 1$ 
and that it is not compatible with respect to the convergence problem.
\end{observation}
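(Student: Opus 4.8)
The plan is to observe that ${\rm CoG}_1$ is a ``do nothing'' target function and therefore can never drive an arbitrary initial configuration to a single point. Concretely, by the definition ${\rm CoG}_{\alpha}(P) = (1-\alpha)g(P)$ we have ${\rm CoG}_1(P) = (1-1)g(P) = (0,0)$ for every $P \in \mathcal{P}$. Since each robot's local coordinate system $Z_i$ is self-centric, the point $(0,0)$ in $Z_i$ is exactly the current position of $r_i$; hence a robot that computes its destination by ${\rm CoG}_1$ always moves to where it already is, i.e.\ it never changes position, no matter which schedule the adversary chooses.

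Building on this, I would take $\Phi = \{ {\rm CoG}_1 \}$, whose scale is $\alpha(\Phi) = \alpha({\rm CoG}_1) = 1$ (as noted right after the definition of ${\rm CoG}_{\alpha}$), and exhibit a bad execution. Fix any initial configuration $P_0$ with $|\overline{P_0}| \geq 2$, so that at least two distinct positions are occupied. Because every robot runs ${\rm CoG}_1$ and thus never moves, we get $\bm{x}_{t+1}(r_i) = \bm{x}_t(r_i)$ for all $i$ and all $t$, independently of which robots are activated; consequently $P_t = P_0$ for all $t \geq 0$. Since $P_0$ has two or more distinct points, the robots never converge to a single point, so $\Phi$ is not compatible with respect to the convergence problem. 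This simultaneously shows that ${\rm CoG}_1$ is not a convergence algorithm and that the exhibited $\Phi$ is a set with $\alpha(\Phi) = 1$ that fails to be compatible, which is the second assertion of the observation.

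There is essentially no obstacle here; the entire content is the one-line remark that scaling $g(P)$ by factor $0$ lands precisely on the self-centric origin. The only step that deserves a line of justification is the claim $\alpha({\rm CoG}_1) = 1$: one checks that for a configuration $P$ in which $(0,0)$ is a vertex of $CH(P)$, the ray from $g(P)$ through $(0,0)$ leaves $CH(P)$ exactly at $(0,0)$, so the smallest $d$ with ${\rm CoG}_1(P) = (0,0) \in d * CH(P)$ equals $1$; taking the supremum over all such $P$ then gives $\alpha({\rm CoG}_1) = 1$, placing this example in the $\alpha(\Phi) = 1$ column of Table~\ref{Table0010}.
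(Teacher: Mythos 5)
Your proof is correct and follows essentially the same route as the paper's: both hinge on the observation that $\mathrm{CoG}_1(P) = (0,0)$ is the robot's own (self-centric) position, so no robot ever moves, and any initial configuration with at least two distinct points is stuck forever; the paper simply instantiates this with the concrete two-robot configuration $P_0 = ((0,0),(1,0))$. Your extra verification that $\alpha(\mathrm{CoG}_1) = 1$ is a harmless addition (the paper already asserts $\alpha(\mathrm{CoG}_\alpha) = \alpha$ when defining the family).
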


\begin{proof}
Consider any execution for two robots starting from configuration $P_0 = ((0,0),(1,0))$.
Since both robots take CoG$_1$ as their target functions
and CoG$_1$ does not move any robot in $P_0$,
it is not a convergence algorithm.
\end{proof}

\medskip

Let $\Phi$ and $\Phi'$ be two sets of target functions.
If $0 \leq \alpha(\Phi) < 1$ and $0 \leq \alpha(\Phi') < 1$,
$\Phi, \Phi'$, and $\Phi \cup \Phi'$ are all
compatible with respect to the convergence problem by Corollary~\ref{C0020}.
However, the following claim does {\bf not} hold:
\begin{quote}
If both of $\Phi$ and $\Phi'$ are compatible with respect to the convergence problem,
so is $\Phi \cup \Phi'$.
\end{quote}
To observe this fact, 
examine two target functions $\phi_T$ and $\phi_S$.
For a configuration $P$, define a condition $\Psi$ as follows:
\begin{description}
\item[$\Psi$:] 
$|P| = 7$, 
$(0,0) \in P$,
$P = T \cup S$, 
$T$ is an equilateral triangle,
$S$ is a square, 
$T$ and $S$ have the same side length, and
finally $T$ and $S$ do not overlap each other.
\end{description}

\medskip
\noindent
{\bf [Target function $\phi_T$]}
\begin{enumerate}
\item
If $P$ satisfies $\Psi$:
\begin{enumerate}
 \item 
If $(0,0) \in T$,
$\phi_T(P) = g(T)/2$,
which is the middle point on the line segment connecting $(0,0)$ and $g(T)$.
\item
If $(0,0) \in S$,
$\phi_T(P) = g(P)$. 
\end{enumerate}
\item
If $P$ does not satisfy $\Psi$:
$\phi_T(P) = g(P)$.
\end{enumerate}

\medskip
\noindent
{\bf [Target function $\phi_S$]}
\begin{enumerate}
\item
If $P$ satisfies $\Psi$:
\begin{enumerate}
 \item 
If $(0,0) \in S$,
$\phi_S(P) = g(S)/2$, 
which is the middle point on the line segment connecting $(0,0)$ and $g(S)$.
\item
If $(0,0) \in T$,
$\phi_S(P) = g(P)$. 
\end{enumerate}
\item
If $P$ does not satisfy $\Psi$:
$\phi_S(P) = g(P)$.
\end{enumerate}

Recall that $g(P)$, $g(T)$, and $g(S)$ are the centers of gravity 
of $P$, $T$, and $S$, respectively,
and that when a robot identifies $P$ in Look phase, $(0,0)$ always in $P$, 
which corresponds to its current position.
Let us observe that $\alpha(\phi_T) = 1$.
Since $\phi_T(P) \in CH(P)$ for all $P$, $\alpha(\phi_T) \leq 1$.
To see that $\alpha(\phi_T) \geq 1$,
consider any number $0 < a < 1$.
It is easy to construct a $P$ satisfying $\Psi$
such that $\frac{dist((0,0),g(T))}{dist((0,0)),g(P))} < a$,
which implies that $\alpha(\phi_T) > 1 - a$.
Thus $\alpha(\phi_T) = 1$ by the definition of $\alpha$.
By the same argument, $\alpha(\phi_S) = 1$.

Let $\Phi_T = \{ \phi_T \}, \Phi_S = \{ \phi_S \}$,
and $\Phi = \Phi_T \cup \Phi_S = \{ \phi_T, \phi_S \}$.
Then $\alpha(\Phi_T) = \alpha(\Phi_S) = \alpha(\Phi) = 1$.

\begin{theorem}
\label{T0040} 
Both $\Phi_T$ and $\Phi_S$ are compatible with respect to the convergence problem, 
but $\Phi = \Phi_T \cup \Phi_S$ is not.
\end{theorem}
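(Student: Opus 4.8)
The plan is to treat the three assertions separately, reusing the convex‑hull machinery from the proof of Theorem~\ref{T0030}. I would dispose of the non‑compatibility of $\Phi=\Phi_T\cup\Phi_S$ first, since it is the easy direction. Assign $\phi_T$ to three robots and $\phi_S$ to four, and start from a configuration $P_0$ satisfying $\Psi$ whose equilateral triangle $T$ is occupied by the three $\phi_T$-robots and whose square $S$ is occupied by the four $\phi_S$-robots, chosen (e.g. with $T$ and $S$ far apart) so that the split $P_0=T\cup S$ is the \emph{unique} witness of $\Psi$. Let the scheduler activate all seven robots at every step. Whenever $\Psi$ holds, each $\phi_T$-robot sees $(0,0)\in T$ and moves to $g(T)/2$, and each $\phi_S$-robot sees $(0,0)\in S$ and moves to $g(S)/2$; hence $T$ shrinks by the factor $1/2$ toward $g(T)$, staying equilateral with the same centroid, and $S$ shrinks by $1/2$ toward $g(S)$, staying square with the same centroid. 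The two shapes keep equal side lengths and stay disjoint, so $\Psi$ holds again and the construction iterates. The $\phi_T$-robots thus converge to $g(T)$ and the $\phi_S$-robots to $g(S)$; since $g(T)\neq g(S)$, the system never converges to a single point, so $\Phi$ is not compatible with respect to the convergence problem. (All robots agree that $\Psi$ holds and agree on the split $T\cup S$, because every condition defining $\Psi$ is invariant under the similarity transformations relating their local frames.)

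For the compatibility of $\Phi_T$ I would argue as follows, the case of $\Phi_S$ being symmetric under exchanging the roles of the triangle and the square. If $n\neq 7$ then $\Psi$ never holds, so $\phi_T=\mathrm{CoG}$ throughout and convergence follows from Theorem~\ref{T0020}; so assume $n=7$. The crucial structural fact is that under $\phi_T$ every activated robot moves to $g(P_t)$, \emph{except} when $\Psi$ holds and the robot lies on the triangle, in which case it moves to $g(T_t)/2\in CH(T_t)$. Either target lies in $CH(P_t)$, so $CH_{t+1}\subseteq CH_t$ and $CH_t$ converges to some convex region $CH$; the goal is to show $CH$ is a single point. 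Suppose not, so $CH$ has vertices $\bm{p}_0,\dots,\bm{p}_{k-1}$ with $k\geq 2$ and minimum pairwise distance $L>0$, and fix $t_0$ and small $\epsilon$ with $CH\subseteq CH_t\subseteq N_{\epsilon}(CH)$ for all $t\geq t_0$.

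The heart of the argument is to show that some vertex of $CH$ is eventually uncovered, contradicting $CH\subseteq CH_t$. Exactly as in the proof of Theorem~\ref{T0030}, a move to $g(P_t)$ lands at distance $>L/n-\epsilon\gg\epsilon$ from every vertex of $CH$, so any such move ejects a robot from all $N_{\epsilon}(\bm{p}_j)$, and the stronger form of that proof keeps it out unless a later target is near a vertex. Hence a robot can remain within $\epsilon$ of a vertex $\bm{p}$ only by being, at each of its activations, a triangle robot whose triangle stays near $\bm{p}$. Writing $\ell_t$ for the common side length, a short computation shows that $g(T_t)/2$ lies at distance $\Theta(\ell_t)$ from every vertex of $CH_t$ (distance $\ell_t/(2\sqrt3)$ from the triangle vertices, and $\Theta(\ell_t)$ from the square vertices because it is a $\Theta(\ell_t)$-interior point of $CH(T_t)$, even if $S$ abuts $T$). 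Lingering near a vertex therefore forces $\ell_t=O(\epsilon)$, so both the triangle and the square degenerate to tiny clusters, which can cover at most two vertices; thus $k=2$, with the triangle cluster near one vertex $A$ and the square cluster near the other vertex $B$. But the square robots always target $g(P_t)$, at distance $\tfrac{3}{7}L+O(\epsilon)$ from $B$; by fairness all four are eventually activated and leave $N_{\epsilon}(B)$, and no target is ever near $B$ afterward (targets are $g(P_t)$, far from $B$, or $g(T_t)/2$, near $A$), so $B$ is eventually uncovered, the desired contradiction. Hence $CH$ is a point and $\Phi_T$ is compatible.

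The step I expect to be the main obstacle is precisely this last case analysis for the scale‑$1$ function: because $\alpha(\phi_T)=1$, the uniform contraction used in Theorem~\ref{T0030} is unavailable, and one must quantify how close the exceptional target $g(T_t)/2$ can come to a vertex of the shrinking hull (the $\Theta(\ell_t)$ estimate, with its delicate subcase where $S$ touches $T$) and then invoke fairness to show that the always‑to‑centroid robots cannot sustain a second vertex of $CH$. Pinning down the geometric constant in the $\Theta(\ell_t)$ bound, and verifying that a tiny triangle or square cannot migrate from one vertex of $CH$ to another, are the routine‑but‑careful points that remain.
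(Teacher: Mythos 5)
Your argument for the non-compatibility of $\Phi=\Phi_T\cup\Phi_S$ is correct and is essentially the paper's: under $\cal FSYNC$ from a $\Psi$-configuration, the triangle and the square each contract by $1/2$ toward their own centroids, $\Psi$ is reproduced, and the system converges to the two points $g(T)\neq g(S)$. The case $n\neq 7$ and the reduction to a limit hull $CH$ with $k\geq 2$ vertices also match the paper.

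For the compatibility of $\Phi_T$ with $n=7$, however, you take a genuinely different route from the paper, and the final step has a real gap. You correctly show that the only targets that can lie within $\epsilon$ of a vertex of $CH$ are of the form $g(T_t)/2$ with $\ell_t=O(\epsilon)$, hence that at some time the seven robots form two $O(\epsilon)$-clusters, forcing $k=2$ with one cluster at each endpoint $A,B$ of the segment $CH$. But your contradiction then rests on the parenthetical claim that after the four robots near $B$ are activated, ``no target is ever near $B$ afterward (targets are $g(P_t)$, far from $B$, or $g(T_t)/2$, near $A$).'' This is not justified: the partition $P=T\cup S$ is a property of the \emph{configuration}, not of the robots, so ``the square robots'' is not a fixed set over time, and nothing you have proved prevents a later $\Psi$-configuration whose triangle sits near $B$ (in which case $g(T_t)/2$ is near $B$), nor a robot creeping back toward $B$ through a sequence of tiny-triangle moves, each of displacement $\ell_t/(2\sqrt3)=O(\epsilon)$. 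Closing this requires an extra idea, e.g.: at every $\Psi$-time after $t_0$ \emph{all seven} robots lie in the two $O(\epsilon)$-clusters at $A$ and $B$, so a robot that has once moved to $g(P_t)$ (which is at distance at least $L/7-\epsilon$ from both endpoints) can never again belong to a $\Psi$-configuration; hence after the first such move the execution is pure CoG and converges to a point, contradicting $k=2$. The paper avoids this issue entirely by a different pigeonhole: it first notes that $\Psi$ cannot hold at two consecutive distinct configurations, so either the execution is eventually pure CoG (contradiction) or there are transitions from a non-$\Psi$ configuration $P_{t'}$ to a $\Psi$-configuration $P_{t'+1}$; at such a transition some robot lands at $g(P_{t'})$ and must belong to one of the two shapes of diameter $O(\epsilon)$, which must simultaneously cover the three pairwise-far points $\bm{p}$, $\bm{q}$, and $g(P_{t'})$ --- impossible with two shapes. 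You flag this last case analysis as ``routine-but-careful,'' but the missing step is a substantive one, not bookkeeping: as written, the justification given for the key claim is false.
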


\begin{proof}
(I) Let us start with showing that $\Phi_T$ is compatible with respect to the convergence problem.
A proof that $\Phi_S$ is also compatible with respect to the convergence problem is similar.

All robots take $\phi_T$ as their target functions.
Let ${\cal E}: P_0, P_1, \ldots$ be any execution starting from any initial configuration $P_0$.
When the number $n$ of robots is not $7$, $\phi_T$ = CoG. 
Thus $P_t$ converges to a point by Theorem~\ref{T0020}.

When $n = 7$,
$CH_{t+1} \subseteq CH_t$, and hence $CH_t$ converges to a convex $k$-gon 
for some positive integer $k$.
To derive a contradiction,
we first assume that $k = 2$.
That is, $CH$ is a line segment $\overline{\bm{pq}}$ connecting distinct points 
$\bm{p}$ and $\bm{q}$.
Let $L = dist(\bm{p},\bm{q})$.
For any $0 < \epsilon \ll L$,
there is a time $t_0$ such that, for any time $t \geq t_0$,
$CH \subseteq  CH_t \subseteq N_{\epsilon}(CH)$.

Suppose that $P_t$ satisfies $\Psi$.
In order for $P_{t+1}$ to satisfy $\Psi$,
no robots in $S$ are activated (to maintain a square),
and either all robots in $T$ are activated or none of them are activated
(to maintain a equilateral triangle).
Thus all robots in $T$ are activated since otherwise no robots are activated.
However, in this case, 
the sides of the triangle and the square are different in $P_{t+1}$.
Hence $P_{t+1}$ does not satisfy $\Phi$.

Suppose that $P_t$ does not satisfy $\Psi$.
If $P_{t'}$ does not satisfy $\Psi$ for all $t' \geq t$,
then $P_t$ converges to a point by Theorem~\ref{T0020},
which is a contradiction.

Thus, there is a $t'$ such that $P_{t'}$ does not satisfy $\Psi$,
but $P_{t'+1}$ does.
Since $P_{t'} \not= P_{t'+1}$ and $P_{t'}$ does not satisfy $\Psi$,
there is a robot $r$ who moves to $\bm{g}_{t'} = g(P_{t'})$ at $t'$.
In $P_{t'+1}$, 
$r$ at $\bm{g}_{t'}$ is a part of an equilateral triangle or a square
whose side has length at most $2 \epsilon$.
Also, since $CH \subseteq CH_{t'+1} \subseteq N_{\epsilon}(CH)$,
there are robots $r_{\bm{p}} \in N_{\epsilon}(\bm{p})$
and $r_{\bm{q}} \in N_{\epsilon}(\bm{q})$,
each of which is a part of an equilateral triangle or a square
whose side has length at most $2 \epsilon$,
since $P_{t'+1}$ satisfies $\Psi$.
However, it is a contradiction, 
since $\min \{ dist(\bm{p},\bm{g}_{t'}), dist(\bm{q}, \bm{g}_{t'}) \}$
$ > \frac{L}{7} - \epsilon \gg 4 \epsilon$.

When $k > 2$,
we can derive a contradiction using an argument similar to 
the proof of Theorem~\ref{T0030}.

\medskip
\noindent
(II) We show that $\Phi$ is not compatible with respect to the convergence problem.
Suppose that $P_0 = T \cup S$ satisfies $\Psi$,
and that the robots in $T$ (resp. $S$) take $\phi_T$ (resp. $\phi_S$) as
their target functions.
Suppose that the scheduler is ${\cal FSYNC}$,
i.e., all robots are activated at every time $t$.
(Note that the $\cal FSYNC$ scheduler always produce an $\cal SSYNC$ schedule.)
Then it is easy to observe that $P_t$ converges to $\{ g(T), g(S) \}$,
and does not converge to a point.
\end{proof}

\section{Convergence in the presence of at most one crash failure}
\label{Sonecrash}

We consider the convergence problem in the presence of at most one crash failure
in this section.
We investigate two problems, 
the fault-tolerant $(n,1)$-convergence problem (FC(1))
and the fault-tolerant $(n,1)$-convergence problem to a point (FC(1)-PO).
There is an algorithm for the FC(1)~\cite{CP05}.
Obviously, the FC(1)-PO is not easier than the FC(1),
since if all robots (including a faulty one) converge to a point,
then all non-faulty robots converge to a point.
We have the following theorem,
which implies that there is an algorithm for the FC(1)-PO.

\begin{theorem}
\label{T1030} 
Let $\Phi$ be a set of target functions,
and assume that $0 \leq \alpha(\Phi) < 1$.
Then $\Phi$ is compatible with respect to the \mbox{\rm FC(1)-PO}.
\end{theorem}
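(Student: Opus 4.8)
The plan is to reduce FC(1)-PO to the convergence analysis already carried out in the proof of Theorem~\ref{T0030}, treating the single crashed robot as a distinguished point that pins down the limit. First I would dispose of the trivial case: if the (at most one) faulty robot never actually crashes, then every robot behaves as a non-faulty one, so by Theorem~\ref{T0030} the configuration $P_t$ converges to a single point, which is a legal FC(1)-PO outcome (all robots converge to one point). Hence I may assume that exactly one robot crashes, say at position $\bm{c}$, becoming permanently fixed there from some time $\tau$ on.

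Next I would record the two structural facts driving the proof. Because $\alpha(\Phi)<1$, every activated non-faulty robot moves into $\alpha * CH_t \subseteq CH_t$ and the crashed robot never moves, so $CH_{t+1}\subseteq CH_t$; thus $CH_t$ converges to some convex $k$-gon $CH$, exactly as in Theorem~\ref{T0030}. Moreover, since $\bm{c}\in P_t$ for all $t\geq\tau$ and the hulls are nested, $\bm{c}\in\bigcap_{t\geq\tau}CH_t=CH$.

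The heart of the argument is to show $k=1$. I would argue by contradiction, assuming $k\geq 2$, and reuse the (deliberately strengthened) conclusion of Theorem~\ref{T0030}: for a suitably small $\epsilon$ and a vertex $\bm{p}$ of $CH$, once a robot is activated after the stabilization time $t_0$ it is mapped into $\alpha * CH_t$, which stays at distance $\gg\epsilon$ from $\bm{p}$, so the robot leaves $N_{\epsilon}(\bm{p})$ and never returns. The new obstacle, absent in Theorem~\ref{T0030}, is the crashed robot: if it happens to sit on a vertex of $CH$, that vertex can remain occupied forever and the emptying argument fails there. I would circumvent this by exploiting $k\geq 2$: there are at least two vertices of $CH$, so I can pick a vertex $\bm{p}\neq\bm{c}$. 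For small $\epsilon$ the crashed robot at $\bm{c}$ lies outside $N_{\epsilon}(\bm{p})$, and by fairness every non-faulty robot is eventually activated and thereafter permanently avoids $N_{\epsilon}(\bm{p})$. Hence $N_{\epsilon}(\bm{p})$ eventually contains no robot at all, contradicting that $\bm{p}$ is a vertex of $CH\subseteq CH_t$ (such a vertex must be approximated to within $O(\epsilon)$ by some robot position, namely the extreme vertex of $CH_t$ in the direction in which $\bm{p}$ is extreme for $CH$).

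Therefore $k=1$, i.e., $CH_t$ shrinks to a single point, and since $\bm{c}\in CH$ this point is exactly $\bm{c}$. Consequently every robot position, lying in $CH_t$, converges to $\bm{c}$: the crashed robot trivially, and every non-faulty robot as well, so all robots converge to the one point $\bm{c}$, which solves FC(1)-PO. The only genuinely new difficulty compared with the plain convergence proof is thus the possibility that the crashed robot occupies a limiting vertex; choosing a vertex distinct from $\bm{c}$ (available precisely because $k\geq 2$) resolves it, while the nesting relation $\bm{c}\in CH_t$ is what forces the common limit to be the crash site.
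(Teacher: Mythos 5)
Your proposal is correct and follows essentially the same route as the paper's proof: both reuse the nested-hull convergence and the strengthened conclusion of Theorem~\ref{T0030} (non-faulty robots permanently vacate every vertex neighborhood), and both resolve the crash obstruction by noting that a single faulty robot can occupy at most one of the $k\geq 2$ disjoint vertex neighborhoods, leaving some vertex of $CH$ with no robot nearby, a contradiction. Your additional observations---that the crash site lies in $\bigcap_t CH_t$ and hence is the common limit, and the explicit split into crash/no-crash cases---are correct refinements of details the paper leaves implicit in its sketch.
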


\begin{proof}
Since the proof is similar to that of Theorem~\ref{T0030},
we only give a sketch of the proof here.

Let $\phi_i \in \Phi$ be the target function taken by robot $r_i$ for $i = 1, 2, \ldots , n$.
Let $\alpha(\phi_i) = \alpha_i$ and $\alpha = \max_{1 \leq i \leq n} \alpha_i$.
Then $\alpha \leq \alpha(\Phi) < 1$.

Let ${\cal E}: P_0, P_1, \ldots$ be any execution starting from any initial configuration $P_0$.
Since $\alpha \leq \alpha(\Phi) < 1$,
$CH_{t+1} \subseteq CH_t$,
which implies that $CH_t$ converges to a convex $k$-gon $CH$ for some $k \geq 1$,
regardless of whether or not there is a faulty robot.

Like the proof of Theorem~\ref{T0030},
we assume that $k \geq 2$, 
and derive a contradiction.
Since $CH_t$ converges to $CH$,
for any small number $0 < \epsilon \ll (1-\alpha)L/n$,
there is a time $t_0$ such that for all $t > t_0$, 
$CH \subseteq CH_t \subseteq N_{\epsilon}(CH)$,
where $L$ is given in the proof of Theorem~\ref{T0030}.

The proof of Theorem~\ref{T0030} shows that 
there is a time $t' > t$ such that
for any non-faulty robot $r$ and any vertex $\bm{p}$ of $CH$,
$\bm{x}_{t'}(r) \not\in N_{\epsilon}(\bm{p})$ holds.

There is at most one faulty robot,
and $N_{\epsilon}(\bm{p}) \cap N_{\epsilon}(\bm{p}') = \emptyset$
for any two distinct vertices $\bm{p}$ and $\bm{p}'$ of $CH$.
Since $k \geq 2$,
there is a vertex $\bm{p}$ such that, 
for any robot $r$, $\bm{x}_{t'}(r) \not\in N_{\epsilon}(\bm{p})$,
which is a contradiction.
\end{proof}

\medskip

Immediately, we have the following corollary.

\begin{corollary}
\label{C1030} 
Let $\Phi$ be a set of target functions,
and assume that $0 \leq \alpha(\Phi) < 1$.
Then $\Phi$ is compatible with respect to the \mbox{\rm FC(1)}.
\end{corollary}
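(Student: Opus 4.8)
The plan is to obtain the claim as an immediate consequence of Theorem~\ref{T1030}, exploiting the fact that the \mbox{\rm FC(1)-PO} is a strictly stronger requirement than the \mbox{\rm FC(1)}. Since $0 \leq \alpha(\Phi) < 1$ by hypothesis, Theorem~\ref{T1030} applies verbatim and tells us that $\Phi$ is compatible with respect to the \mbox{\rm FC(1)-PO}. Thus, for every assignment of target functions drawn from $\Phi$, every initial configuration $P_0$, and every ${\cal SSYNC}$ schedule in which at most one robot crashes, the execution has \emph{all} robots (including the possibly faulty one) converge to at most $f = 1$ point, i.e., to a single common point $\bm{p}$.

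I would then observe that convergence of all robots to the single point $\bm{p}$ in particular forces every non-faulty robot to converge to $\bm{p}$. But the \mbox{\rm FC(1)} asks only that all non-faulty robots converge to a common point, so the very same execution that witnesses the \mbox{\rm FC(1)-PO} guarantee already witnesses the \mbox{\rm FC(1)} guarantee. Since the two problems are defined over the identical model, the identical fault scenario (at most one crash), and the identical family of assignments and schedules, this implication holds for every execution simultaneously. Hence $\Phi$ is compatible with respect to the \mbox{\rm FC(1)}.

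There is essentially no obstacle here: the argument is the formal counterpart of the remark made just before Theorem~\ref{T1030}, namely that ``the \mbox{\rm FC(1)-PO} is not easier than the \mbox{\rm FC(1)}.'' No new estimate or construction is required, and in particular the convex-hull-shrinking machinery developed for Theorem~\ref{T0030} and reused in Theorem~\ref{T1030} is not touched again. The only point requiring (trivial) care is to confirm that ``at most $f$ points'' reduces to a single point when $f = 1$, so that the target point of the non-faulty robots is forced to coincide with that of the faulty one; this is precisely the content of the \mbox{\rm FC(1)-PO} definition, and it is what makes the reduction to \mbox{\rm FC(1)} immediate.
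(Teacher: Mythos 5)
Your proposal is correct and matches the paper exactly: the paper derives Corollary~\ref{C1030} immediately from Theorem~\ref{T1030}, using the same observation (stated just before that theorem) that if all robots converge to a single point then in particular all non-faulty robots do, so an FC(1)-PO guarantee subsumes the FC(1) guarantee.
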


Next we reconsider the target functions $\phi_T$ and $\phi_S$
which we introduced in Section~\ref{Sconvergence}.

\begin{theorem}
\label{T1020} 
Both $\Phi_T = \{\phi_T\}$ and $\Phi_S = \{\phi_S\}$ are compatible 
with respect to the \mbox{\rm FC(1)-PO}.
However, $\Phi = \Phi_T \cup \Phi_S$ is not.
Recall that $\alpha(\Phi_T) = \alpha(\Phi_S) = \alpha(\Phi) = 1$.
\end{theorem}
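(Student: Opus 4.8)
The plan is to graft the single-crash reasoning of Theorem~\ref{T1030} onto the convergence argument for $\phi_T,\phi_S$ in Theorem~\ref{T0040}. The non-compatibility of $\Phi=\Phi_T\cup\Phi_S$ needs no new work: with zero faulty robots FC(1)-PO is just the convergence problem, and part~(II) of the proof of Theorem~\ref{T0040} already produces an execution (fault-free, under the ${\cal FSYNC}$ scheduler) that starts from a configuration satisfying $\Psi$, runs $\phi_T$ on the $T$-robots and $\phi_S$ on the $S$-robots, and converges to the two points $g(T),g(S)$. That single execution violates FC(1)-PO, so $\Phi$ is not compatible.

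For the compatibility of $\Phi_T$ (and hence $\Phi_S$, by an identical argument with the roles of the triangle and the square exchanged), fix an execution with at most one crashed robot. If $n\neq 7$ then $\phi_T=\mathrm{CoG}$, whose scale is $0$, so Theorem~\ref{T1030} applies directly; assume $n=7$. Since $\phi_T(P)\in CH(P)$ for every $P$, we have $CH_{t+1}\subseteq CH_t$, so $CH_t$ converges to a convex $k$-gon $CH$, and because every robot (faulty or not) stays in $CH_t$ it suffices to show $k=1$. Suppose $k\geq 2$ and fix $t_0$ with $CH\subseteq CH_t\subseteq N_\epsilon(CH)$ for $t\geq t_0$, with $\epsilon$ far smaller than $L/n$ and than any fixed feature of $CH$.

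For $k=2$ I would reuse the strip argument of Theorem~\ref{T0040}(I), checking that its two steps survive one crash. First, a configuration satisfying $\Psi$ cannot persist under a genuine move: keeping the square intact forces \emph{no} square robot to move to $g(P)$, and keeping the triangle equilateral forces \emph{all three} triangle robots to perform the synchronized halving toward $g(T)$; a crashed robot only obstructs this, since if it lies in $T$ the triangle can never stay equilateral under a real move, and if it lies in $S$ the original reasoning is untouched. Second, at any transition from $\neg\Psi$ to $\Psi$ exactly one non-faulty robot jumps to $g_{t'}$ (two or more would coincide there, killing the required seven distinct points), and $g_{t'}$ lies at distance $>L/7-\epsilon$ from both endpoints, whereas in a thin strip $T$ and $S$ are forced into two tiny clusters hugging the endpoints. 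Hence $\Psi$ holds only finitely often, after which every robot runs $\mathrm{CoG}$ and Theorem~\ref{T1030} forces convergence to one point, contradicting $k=2$. For $k\geq 3$ I would instead use the interior-drift argument of Theorems~\ref{T0030}/\ref{T1030}: each activation sends a non-faulty robot to $g_t$ or, in the exceptional case, to $g(T)/2$, and I would show that for $t\geq t_0$ both destinations stay uniformly away from every vertex and that the robot then stays away. The key lemma is that a $\Psi$-configuration with $t\geq t_0$ and $k\geq 3$ cannot have a tiny triangle---otherwise the equal-side square would be tiny too, leaving only two small clusters that cannot reach the $\geq 3$ vertices of $CH$---so $g(T)$, and with it $g(T)/2$, stays well inside, away from all vertices. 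Once every non-faulty robot has been activated it is permanently far from every vertex, while the lone faulty robot hugs at most one vertex; with $k\geq 2$ some vertex of $CH$ is then left uncovered, contradicting $CH\subseteq CH_t$.

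The main obstacle is the exceptional move to $g(T)/2$: it is the sole reason $\alpha(\phi_T)=1$ and the only way a robot can stay pinned near a vertex of $CH$. It forces the delicate persistence/transition bookkeeping in the $k=2$ case, and in the $k\geq 3$ case it requires the tiny-triangle lemma together with a short argument that the midpoint $g(T)/2$ cannot drift back onto any vertex. By contrast, incorporating the single crash is routine: a frozen robot only makes $\Psi$ harder to maintain and, exactly as in Theorem~\ref{T1030}, can occupy at most one of the $k$ mutually far vertex neighborhoods.
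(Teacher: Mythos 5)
Your proposal is correct and follows essentially the same route as the paper: part (II) is the paper's fault-free two-point execution verbatim, and part (I) grafts the one-crash bookkeeping of Theorem~\ref{T1030} onto the $\Psi$-persistence/transition analysis of Theorem~\ref{T0040}, with the $k=2$ strip argument and the $k\geq 3$ interior-drift argument handled exactly as the paper intends. The only difference is that you spell out the $k\geq 3$ case (the tiny-triangle lemma and the fact that $g(T)/2$ stays uniformly away from the vertices of $CH$), which the paper dismisses as ``a similar argument''; this is a welcome elaboration, not a deviation.
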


\begin{proof}
(I) We show that $\Phi_T$ is compatible with respect to the FC(1)-PO.
A proof for $\Phi_S$ is similar.

If there is no faulty robot, 
all robots converge to a point by Theorem~\ref{T0040}.
Suppose that a robot $r_F$ is faulty at time $t$.
The proof is almost the same as the proof of Theorem~\ref{T0040},
so we just give a sketch of the proof.

Let ${\cal E}: P_0, P_1, \ldots$ be any execution starting 
from any initial configuration $P_0$.
Then $CH_t$ converges to a convex $k$-gon $CH$ for some positive integer $k$,
even if $\alpha(\Phi_T) = 1$ (since $CH_{t+1} \subseteq CH_t$ holds for all $t \geq 0$).

We assume $k = 2$ to derive a contradiction.
Let $CH = \overline{\bm{p}\bm{q}}$.
If $P_t$ satisfies $\Psi$ then $P_{t+1}$ does not,
provided $P_{t+1} \not= P_t$.
If $P_{t'}$ does not satisfy $\Psi$ for all $t' \geq t+1$, 
$P_t$ converges to a point by Theorem~\ref{T1030}.
Thus, there is a time $t' \geq t+1$ such that $P_{t'}$ does not satisfy $\Psi$,
but $P_{t'+1}$ does.
At $t'$, a robot $r (\not= r_F)$ moves to $\bm{g}_{t'} = g(P_{t'})$.
Now by the same argument as the proof of Theorem~\ref{T0040},
we can derive a contradiction to the assumption that $P_{t'+1}$ satisfies $\Psi$.

When $k > 2$, a contradiction can be derived by a similar argument.

\medskip
\noindent
(II) To observe that $\Phi$ is not compatible with respect to FC(1)-PO,
consider the case in which no faulty robot exists.
Then all robots need to converge to a point.
However, the execution of $\Phi$ converges to two points
when $P_0$ satisfies $\Psi$, the robots in $T$ take $\phi_T$, 
the robots in $S$ take $\phi_S$, 
and the scheduler is $\cal FSYNC$.

\end{proof}

\medskip

Observe that $\Phi$ is compatible with respect to the FC(1)-PO,
if a robot certainly crashes.

\section{FC($f$) for $f \geq 2$}
\label{SFCf}

We go on the fault tolerant $(n,f)$-convergence problem (FC$(f)$) for $f \geq 2$.
Since CoG is a fault tolerant $(n,f)$-convergence algorithm for all $f \leq n-2$ \cite{CP05},
and the problem is obviously solvable by CoG when exactly $f = n-1$ robots crash,
the next theorem holds.

\begin{theorem}[\cite{CP05}]
\label{T2010}
Suppose that $f \leq n-1$.
Set $\Phi_0 = \{ {\rm CoG} \}$ is compatible with respect to the \mbox{\rm FC$(f)$},
or equivalently,
a set $\Phi$ of target functions is compatible with respect to the \mbox{\rm FC$(f)$},
if $\alpha(\Phi) = 0$.
\end{theorem}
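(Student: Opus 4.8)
The plan is to reduce the statement to a single observation about CoG and then to close one boundary case by hand. First I would invoke the fact, established earlier in the paper, that the only target function of scale $0$ is CoG, so that $\alpha(\Phi)=0$ is equivalent to $\Phi=\{{\rm CoG}\}$. This collapses the two formulations in the theorem into one: it suffices to show that $\{{\rm CoG}\}$ is compatible with respect to the FC$(f)$ for every $f\le n-1$, i.e.\ that when all robots run CoG, every execution in which at most $f$ robots crash drives the non-faulty robots to a single point.

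Next I would fix an arbitrary execution ${\cal E}: P_0, P_1, \ldots$ in which at most $f\le n-1$ robots are faulty, and split on the number of robots that \emph{actually} crash in ${\cal E}$, since compatibility is a statement about every admissible execution. In the first case at most $n-2$ robots crash; here I would appeal directly to \cite{CP05}, which proves that CoG is a fault-tolerant $(n,f')$-convergence algorithm for all $f'\le n-2$, so all non-faulty robots converge to a point. In the remaining case exactly $n-1$ robots crash, leaving precisely one non-faulty robot; this robot forms a one-point configuration by itself, so the sequence of its positions is eventually constant and trivially converges. The two cases exhaust all executions admitted by the FC$(f)$ with $f\le n-1$, which gives the claim.

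I expect no genuine obstacle here: the theorem is essentially a repackaging of the known result of \cite{CP05} together with the degenerate $f=n-1$ case. The only point requiring a moment's care is to phrase the argument in terms of the number of robots that actually crash in a given execution rather than the parameter $f$ itself; this is exactly what lets the $n-2$ bound of \cite{CP05} and the trivial single-survivor case combine to cover the full range $f\le n-1$.
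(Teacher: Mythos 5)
Your proposal is correct and matches the paper's own justification: the paper likewise derives the theorem from the result of \cite{CP05} that CoG solves the FC$(f')$ for all $f' \le n-2$, combined with the observation that the case of exactly $n-1$ crashed robots is trivial since a single surviving robot converges by itself. The reduction of $\alpha(\Phi)=0$ to $\Phi=\{{\rm CoG}\}$ is also exactly the paper's stated fact that CoG is the unique target function of scale $0$.
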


We show how the FC($f$) for $f \geq 2$ is different from the FC(1) (and the FC(1)-PO)
from the viewpoint of compatibility.
Corollary~\ref{C1030} states that every set $\Phi$ of target functions
such that $0 \leq \alpha(\Phi) < 1$ is compatible with respect to the FC(1).
We show that, for any $2 \leq f \leq n-1$ and $0 < \alpha < 1$,
there is a set $\Phi$ of two target functions such that (1) $\alpha(\Phi) = \alpha$,
(2) each target function in $\Phi$ is compatible with respect to the FC($f$), 
but (3) $\Phi$ is not compatible with respect to the FC($f$).

Consider first the following target function $\xi_{(\alpha,4)}$ for four robots,
where $0 < \alpha < 1$.
For a configuration $P$, define a condition $\Psi$ by a conjunction of two
conditions (i) and (ii):
\begin{description}
\item[$\Psi$:] 
\begin{description}
 \item[(i)] 
$P = \{ \bm{p}_1, \bm{p}_2, \bm{p}_3 , \bm{p}_4 \} \subseteq \overline{\bm{p}_1 \bm{p}_4}$,
where $\bm{p}_1, \bm{p}_2, \bm{p}_3$, and $\bm{p}_4$ are distinct 
and aligned on $\overline{\bm{p}_1 \bm{p}_4}$ in this order.
\item[(ii)]
$dist(\bm{p}_1,\bm{p}_2) = \frac{1}{2}L$ and
$dist(\bm{p}_2,\bm{p}_3) = \frac{2\alpha}{\alpha+3}L$,
where $L = dist(\bm{p}_1,\bm{p}_4)$.
\end{description}
\end{description}

\medskip
\noindent
{\bf [Target function $\xi_{(\alpha,4)}$]}
\begin{enumerate}
\item
Suppose that $P$ satisfies $\Psi$.
\begin{description}
\item[(a)] 
$\xi_{(\alpha,4)}(P) = \bm{p}_3$, if $\bm{p}_2 = (0,0)$,
\item[(b)]
$\xi_{(\alpha,4)}(P) = (0,0)$, if $\bm{p}_3 = (0,0)$, and
\item[(c)]
$\xi_{(\alpha,4)}(P) = g(P)$, otherwise.
\end{description}
\item
Suppose that $P$ does not satisfy $\Psi$.
Then $\xi_{(\alpha,4)}(P) = g(P)$.
\end{enumerate}

The following lemma holds.

\begin{lemma}
\label{L2010}
Set $\Phi = \{ \xi_{(\alpha,4)} \}$ is compatible with respect to 
the fault tolerant $(4,f)$-convergence problem,
where $0 < \alpha(\Phi) = \alpha < 1$ and $f \leq 3$.
\end{lemma}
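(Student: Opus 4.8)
The plan is to show that a single robot running $\xi_{(\alpha,4)}$ always achieves FC($f$)-convergence for $f \leq 3$, i.e. all non-faulty robots converge to a single point regardless of how many (up to three) robots crash and when. The key structural fact I would lean on is that $\xi_{(\alpha,4)}$ defaults to CoG (the center of gravity) on every configuration that does not satisfy the special condition $\Psi$. Since CoG is an algorithm for FC($f$) for all $f \leq n-1$ by Theorem~\ref{T2010}, the only danger comes from executions that repeatedly visit $\Psi$-configurations. So the heart of the argument is to show that the system cannot get trapped in $\Psi$ forever, and that the special moves prescribed in cases (a) and (b) do not prevent convergence.

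\emph{Case analysis on the number of crashes.} First I would dispose of the cases $f = 3$ and (partly) $f = 2$ by counting active robots. If three robots have crashed, only one robot is non-faulty; a single robot trivially ``converges'' to wherever it sits once all others are frozen, and in any case its repeated activation under CoG (or the $\Psi$-rules) keeps it at a fixed point or moves it to a point determined by the static crashed positions, so convergence of the one non-faulty robot is immediate. The interesting regime is few crashes ($f \leq 1$ effectively, plus the $0$-crash case), where several robots are still moving and the condition $\Psi$ can be satisfied and re-satisfied. Here I would argue, as in Theorem~\ref{T0030}, that since $\alpha(\xi_{(\alpha,4)}) = \alpha < 1$ we have $CH_{t+1} \subseteq CH_t$, so $CH_t$ converges to some convex $k$-gon $CH$, and it suffices to rule out $k \geq 2$.

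\emph{The main obstacle: breaking out of $\Psi$.} The crux, and the step I expect to be hardest, is showing that a $\Psi$-configuration cannot persist. The configuration $\Psi$ is four collinear points with rigidly prescribed spacings ($dist(\bm p_1,\bm p_2) = \tfrac12 L$ and $dist(\bm p_2,\bm p_3) = \tfrac{2\alpha}{\alpha+3}L$). I would examine the effect of activating each robot: the robot at $\bm p_2$ jumps to $\bm p_3$, the robot at $\bm p_3$ stays put (case (b) sends it to its own position $(0,0)$), and robots at $\bm p_1,\bm p_4$ move to $g(P)$. The key computation is to verify that the spacings $\tfrac12 L$ and $\tfrac{2\alpha}{\alpha+3}L$ are chosen precisely so that \emph{no subset} of activations (consistent with up to $f$ robots being crashed and hence never moving) can reproduce a $\Psi$-configuration at the next step: any genuine move destroys the rigid collinear spacing, and if $CH_t$ is a nondegenerate segment then some non-faulty, non-$\bm p_3$ robot must eventually be activated and move off the $\Psi$-pattern, shrinking or collapsing the hull. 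Once $P_t$ leaves $\Psi$ and, by the spacing constraints, cannot immediately return, the execution is governed by CoG and Theorem~\ref{T2010} forces convergence, contradicting $k \geq 2$. I would therefore structure this step as: (1) show a $\Psi$-configuration that changes cannot map to another $\Psi$-configuration, using the arithmetic of the two prescribed ratios; (2) invoke fairness to guarantee a change eventually occurs while $|\overline{P_t}| \geq 2$; (3) conclude that $\Psi$ is visited only finitely often, after which CoG-convergence applies.

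\emph{Wrapping up.} Having shown $\Psi$ occurs only finitely often (or is never re-entered after leaving), the tail of the execution runs pure CoG among the non-faulty robots, and Theorem~\ref{T2010} yields convergence of all non-faulty robots to a single point. This covers $f = 0$ and, via the crash-counting remarks, $f \in \{1,2,3\}$, establishing that $\{\xi_{(\alpha,4)}\}$ is compatible with respect to the fault tolerant $(4,f)$-convergence problem for all $f \leq 3$. The verification that $\alpha(\xi_{(\alpha,4)}) = \alpha$ is a separate routine check: on $\Psi$-configurations case (a) produces the extremal displacement realizing scale exactly $\alpha$, while all other outputs lie in $CH(P)$ or equal $g(P)$, so $\alpha(\xi_{(\alpha,4)}) = \alpha$ as claimed.
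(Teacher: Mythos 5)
Your high-level plan (everything outside $\Psi$ is CoG, so invoke Theorem~\ref{T2010} once $\Psi$ is out of the way) matches the spirit of the paper's proof, and your one-step computation that a changed $\Psi$-configuration cannot be another $\Psi$-configuration is correct. But there are two genuine gaps, and they sit exactly at the case the target function was designed for. First, you dismiss $f=2$ as ``partly'' handled by counting active robots and declare the interesting regime to be $f\leq 1$, but the hard case is the opposite one: two robots crashed at \emph{distinct} points that become the endpoints of the limiting segment, with the two non-faulty robots at the interior positions $\bm{p}_2,\bm{p}_3$. There the convex hull never shrinks ($CH_t=CH$ is a genuine segment for all $t$), so your strategy of ``ruling out $k\geq 2$'' and your claim that a genuine move ends up ``shrinking or collapsing the hull'' both fail; $k=2$ is the correct limit, and what must be shown is that the two mobile robots nevertheless converge to a common point. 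This is precisely the configuration used in Lemma~\ref{L2020} to break compatibility of $\{\xi_{(\alpha,4)},\xi'_{(\alpha,4)}\}$, so it cannot be waved away.

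Second, your step (3) --- ``$\Psi$ is visited only finitely often'' --- does not follow from the one-step non-recurrence in step (1): the execution could leave $\Psi$, evolve under CoG for several steps, and re-enter $\Psi$ later. The paper closes this by a persistent invariant rather than a one-step claim: in the two-crash case, once the robot at $\bm{p}_2$ (the exact midpoint of the fixed hull) jumps to $\bm{p}_3$, both mobile robots lie strictly in the half-segment $\overline{\bm{h}\bm{q}}\setminus\{\bm{h}\}$, and since $g(P)$ of such a configuration also lies strictly in that half, a simple induction keeps them there forever; as $\Psi$ requires a robot exactly at the hull midpoint $\bm{h}$, it can never hold again, and Theorem~\ref{T2010} applies to the suffix. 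For the case of at most one crash the paper does not argue about $\Psi$-recurrence at all; it shows instead that every non-faulty robot, once activated, lands either in $\alpha * CH_t$ or at an interior point of the segment, hence permanently avoids the $\epsilon$-neighborhoods of the hull's endpoints, contradicting convergence of $CH_t$ to a nondegenerate segment. You would need to supply one of these two mechanisms (a global confinement invariant, or the endpoint-avoidance argument) to make your outline into a proof.
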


\begin{proof}
By a simple calculation,
we have $\alpha(\xi_{(\alpha,4)}) = \alpha$,
and hence $\alpha(\Phi) = \alpha$.

Let ${\cal E}: P_0, P_1, \ldots$ be any execution starting from 
any initial configuration $P_0$.
If $CH_t$ is a convex $k$-gon with $k \geq 3$, 
then $\xi_{(\alpha,4)}(P_t) = {\rm CoG}(P_t)$.
Thus if there is no time $t$ such that $CH_t$ is a convex $k$-gon with $k \leq 2$,
all non-faulty robots converge to a point by Theorem~\ref{T2010}.
On the other hand, if $CH_t$ is a line segment for some time $t$,
then $CH_{t+1} \subseteq CH_t$ by the definition of $\xi_{(\alpha,4)}$,
which implies that $CH_{t'}$ is also a line segment for all $t' \geq t$.

We assume without loss of generality that $CH_0$ is a line segment.
Since $CH_{t+1} \subseteq CH_t$,
$CH_t$ converges to a line segment $CH = \overline{\bm{p}\bm{q}}$.
We denote the length of $CH_t$ (resp. $CH$) by $L_t$ (resp. $L$).
That is, $L_t$ is monotonically decreasing, i.e., $L_{t+1} \leq L_t$,
and converges to $L$.
Without loss of generality,
we may assume that all faulty robots have crashed at time $0$.

\medskip
\noindent
(I) Suppose that at most one robot is faulty.
Then $CH$ is a point,
and hence all (non-faulty) robots converge to the point.
To show this, 
we assume that $CH$ is not a point,
i.e., $\bm{p} \not= \bm{q}$ and $L > 0$, 
and derive a contradiction.

Since $CH_t$ converges to $CH$,
for any $0 < \epsilon \ll (1 - \alpha)L$,
there is a time $t_0 \geq 0$ such that, for all $t > t_0$,
$CH \subseteq CH_t \subseteq N_{\epsilon}(CH)$.
Let $P_t = \{ \bm{p}_1, \bm{p}_2, \bm{p}_3, \bm{p}_4 \}$.
Without loss of generality,
we assume that $\bm{p}_1, \bm{p}_2, \bm{p}_3$, and $\bm{p}_4$ are aligned 
in $\overline{\bm{p}_1\bm{p}_4}$ in this order,
where $\bm{p}_1 \in N_{\epsilon}(\bm{p})$ and $\bm{p}_4 \in N_{\epsilon}(\bm{q})$.

Suppose that $\bm{x}_t(r) = \bm{p}_1$ and $\bm{x}_t(r') = \bm{p}_4$.
Since either $r$ or $r'$ is non-faulty,
we assume that $r$ is non-faulty without loss of generality.
Since $r$ is non-faulty,
at some time $t' \geq t$,
it is activated and move to 
$g(P_{t'}) \not\in N_{\epsilon}(\bm{p}) \cup N_{\epsilon}(\bm{q})$.
Furthermore, it will never return to 
$N_{\epsilon}(\bm{p}) \cup N_{\epsilon}(\bm{q})$ again.
It is because, for all $t'' \geq t'$,
(i) $g(P_{t''}) \not\in N_{\epsilon}(\bm{p}) \cup N_{\epsilon}(\bm{q})$, and
(ii) if $P_{t''}$ satisfies $\Psi$, 
and $\bm{x}_{t''}(r)$ is one of the middle points of $CH_t$
($\bm{p}_2$ or $\bm{p}_3$ in the definition of $\xi_{(\alpha,4)}$),
the target point of $r$ ($\bm{p}_3$ in the definition of $\xi_{(\alpha,4)}$)
does not belong to $N_{\epsilon}(\bm{p}) \cup N_{\epsilon}(\bm{q})$,
since $\epsilon \ll (1 - \alpha)L$.

Thus all non-faulty robots eventually do not exist 
in $N_{\epsilon}(\bm{p}) \cup N_{\epsilon}(\bm{q})$,
which contradict to the assumption that $CH_t$ converges to $\overline{\bm{p}\bm{q}}$.

\medskip
\noindent
(II) Suppose next that at least two robots are faulty.
We show that all non-faulty robots converge to a point.
If more than one robot has crashed at the same position at time 0,
$P_t$ does not satisfy $\Psi$ for all $t > 0$,
and thus all non-faulty robots converge to a point by Theorem~\ref{T2010}. 

Consider the case in which all faulty robots have crashed at distinct positions.
Suppose that $CH_t$ converges to $CH = \overline{\bm{p}\bm{q}}$ for some
distinct points $\bm{p}$ and $\bm{q}$, since $CH$ cannot be a point.
By the same argument as in (I),
we can assume without loss of generality that
$\bm{p}$ and $\bm{q}$ are the positions of faulty robots at time $0$,
and that $P_0 = \{ \bm{p}, \bm{p}_2, \bm{p}_3, \bm{q} \}$,
where $\bm{p}, \bm{p}_2, \bm{p}_3$, and $\bm{q}$ are distinct 
and aligned on $\overline{\bm{p}\bm{q}}$ in this order,
that is, $CH_t = CH$ for all $t \geq 0$.

There are two robots $r$ and $r'$ such that $\bm{x}_0(r) = \bm{p}_2$ 
and $\bm{x}_0(r') = \bm{p}_3$,
one of which may be faulty.
If $P_t$ does not satisfy $\Psi$ for all $t \geq 0$, 
all non-faulty robots converge to a point by Theorem~\ref{T2010}.

Without loss of generality,
we assume that $P_0$ satisfies $\Psi$.
If $r$ is faulty, since $r'$ does not move by the definition of $\xi_{(\alpha,4)}$, 
$P_0 = P_t$ for all $t \geq 0$,
i.e., all non-faulty robots converges to $\bm{p}_3$.
Otherwise, if $r$ is non-faulty,
$r$ is eventually activated and moves to $\bm{p}_3$ at some time $t \geq 0$.

We observe that $P_{t'}$ does not satisfy $\Psi$ for all $t' > t$.
In $P_{t+1}$, $\bm{x}_{t+1}(r)$ and $\bm{x}_{t+1}(r')$ 
are located in $\overline{\bm{h}\bm{q}}$
(excluding $\bm{h}$), where $\bm{h} = (\bm{p}+\bm{q})/2$ 
is the middle point of $CH = \overline{\bm{p}\bm{q}}$.
Then $P_{t+1}$ does not satisfy $\Phi$.
Since $g(P_{t+1})$ is also in $\overline{\bm{h}\bm{q}}$ (excluding $\bm{h}$),
by a simple induction,
for all $t' > t$, $\bm{x}_{t'}(r)$ and $\bm{x}_{t'}(r')$ are located 
in $\overline{\bm{h}\bm{q}}$,
and hence $P_{t'}$ does not satisfy $\Psi$.
We can conclude that all non-faulty robots converge to a point by Theorem~\ref{T2010}.
\end{proof}

\medskip

For $0 < \alpha < 1$, 
we next consider the following target function $\xi'_{(\alpha,4)}$ for four robots.
We use the condition $\Psi$ defined above.

\medskip
\noindent
{\bf [Target function $\xi'_{(\alpha,4)}$]}
\begin{enumerate}
\item
Suppose that $P$ satisfies the condition $\Psi$.
\begin{description}
\item[(a)] 
$\xi'_{(\alpha,4)}(P) = (0,0)$, if $\bm{p}_2 = (0,0)$,
\item[(b)]
$\xi'_{(\alpha,4)}(P) = \alpha \bm{p}_1  +  (1 - \alpha) g(P)$, if $\bm{p}_3 = (0,0)$, and
\item[(c)]
$\xi'_{(\alpha,4)}(P) = g(P)$, otherwise.
\end{description}
\item
Suppose that $P$ does not satisfy $\Psi$.
Then $\xi'_{(\alpha,4)}(P) = g(P)$.
\end{enumerate}

By a similar argument, 
we can show that the set $\Phi' = \{ \xi'_{(\alpha,4)} \}$ is compatible 
with respect to the fault tolerant $(4,f)$-convergence problem,
and that $\alpha(\Phi') = \alpha(\xi'_{(\alpha,4)}) = \alpha < 1$.

\begin{corollary}
\label{C2010b}
Set $\Phi' = \{ \xi'_{(\alpha,4)} \}$ is compatible with respect to 
the fault tolerant $(4,f)$-convergence problem,
where $0 < \alpha(\Phi) = \alpha < 1$ and $f \leq 3$.
\end{corollary}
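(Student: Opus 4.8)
The plan is to reuse the proof of Lemma~\ref{L2010} almost verbatim, since $\xi'_{(\alpha,4)}$ differs from $\xi_{(\alpha,4)}$ only in the moves prescribed under $\Psi$, and to isolate the one place where the new moves force a different argument. First I would verify $\alpha(\xi'_{(\alpha,4)}) = \alpha$. Placing $\bm{p}_1,\bm{p}_4$ at the two ends of a segment of length $L$ and using condition (ii) of $\Psi$, a short computation gives $dist(\bm{p}_1,g(P)) = \frac{2\alpha+3}{2(\alpha+3)}L$ and shows that the target $\alpha\bm{p}_1+(1-\alpha)g(P)$ of case~(b) is exactly the $\bm{p}_1$-endpoint of $\alpha * CH(P)$, so the local scale there is exactly $\alpha$. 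The spacing in condition (ii) of $\Psi$ is chosen precisely so that case~(a), whose target is $(0,0)=\bm{p}_2$, stays strictly inside $\alpha * CH(P)$ — one checks $dist(\bm{p}_2,g(P)) = \frac{\alpha}{2\alpha+3}\,dist(\bm{p}_1,g(P))$, a local scale $\frac{\alpha}{2\alpha+3}<\alpha$ — so it does not raise the supremum; all remaining cases are ${\rm CoG}$, of scale $0$. Hence $\alpha(\xi'_{(\alpha,4)}) = \alpha$.

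Next, exactly as in Lemma~\ref{L2010}, if $CH_t$ is never a segment then $\xi'_{(\alpha,4)} = {\rm CoG}$ throughout and Theorem~\ref{T2010} finishes the job; otherwise $CH_{t+1}\subseteq CH_t$ keeps $CH_t$ a segment, so I reduce to the case $CH_0$ a segment with $CH_t\to\overline{\bm{p}\bm{q}}$. For Case~(I) (at most one faulty robot) the argument of case~(I) of Lemma~\ref{L2010} transfers unchanged: a non-faulty extreme robot is eventually activated, moves to $g(P)$, and can never re-enter $N_\epsilon(\bm{p})\cup N_\epsilon(\bm{q})$, because for $\epsilon\ll(1-\alpha)L$ every target of $\xi'_{(\alpha,4)}$ is interior — $g(P)$ in cases~(c)/2, the frozen point $\bm{p}_2$ in case~(a), and $\alpha\bm{p}_1+(1-\alpha)g(P)$ in case~(b) are all bounded away from $\bm{p}$ and $\bm{q}$. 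With at most one faulty robot, one endpoint is eventually abandoned by every robot, contradicting $CH_t\to\overline{\bm{p}\bm{q}}$; hence $CH$ is a point.

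The only step needing a genuinely new argument is Case~(II) (at least two faulty robots), and this is the main obstacle I expect. If two robots crashed at a common point, $\Psi$ fails forever and Theorem~\ref{T2010} applies; otherwise the two distinct crash sites are the endpoints, $CH_t=\overline{\bm{p}\bm{q}}$ for all $t$, and I must track the two middle robots. Their moves are the mirror image of those for $\xi_{(\alpha,4)}$: the robot at $\bm{p}_2=\bm{h}$ (the midpoint $(\bm{p}+\bm{q})/2$) now stays put, while the robot at $\bm{p}_3$ jumps to $\alpha\bm{p}_1+(1-\alpha)g(P)$, which lies strictly in $\overline{\bm{p}\bm{h}}$ on the $\bm{p}$-side of $\bm{h}$. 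The crux is to show that $\Psi$ cannot recur once first broken. After the $\bm{p}_3$-robot is activated (guaranteed by fairness), both middle robots lie in the closed half $\overline{\bm{p}\bm{h}}$, whence $g(P)$ lies strictly on the $\bm{p}$-side of $\bm{h}$ and, since every activated robot moves to $g(P)$, stays there and moves monotonically toward $\bm{p}$; thus no robot ever reaches the position $\bm{p}_3>\bm{h}$ that condition (ii) of $\Psi$ demands. Apart from the degenerate subcase in which the single non-faulty middle robot sits frozen at $\bm{p}_2$ (which is already convergence to one point), $\Psi$ then holds only finitely often, after which $\xi'_{(\alpha,4)}={\rm CoG}$ and Theorem~\ref{T2010} yields convergence of all non-faulty robots. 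This monotone invariance of the midpoint threshold $\bm{h}$ — the reflection of the $\overline{\bm{h}\bm{q}}$-invariant used for $\xi_{(\alpha,4)}$ — is the delicate bookkeeping I would expect to take the most care.
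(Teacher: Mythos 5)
Your plan --- recycle the proof of Lemma~\ref{L2010} and isolate Case~(II) as the only place needing new work --- is exactly what the paper intends (the paper offers no separate proof, only the phrase ``by a similar argument''), and your scale computation and Case~(I) are correct: the case-(b) target is the $\bm{p}_1$-endpoint of $\alpha * CH(P)$, the case-(a) target has local scale $\alpha/(2\alpha+3) < \alpha$, and every target of $\xi'_{(\alpha,4)}$ is bounded away from the endpoints of the limit segment. The problem is that the step you yourself identify as the crux has a genuine gap. The condition $\Psi$ is a property of an \emph{unordered} configuration, so it is satisfied if \emph{either} orientation of the segment admits the labeling of condition (ii): besides the labeling with $\bm{p}_1 = \bm{p}$, which needs a robot at the midpoint $\bm{h}$ and one at $\bm{h} + \frac{2\alpha}{\alpha+3}L$, there is the reversed labeling with $\bm{p}_1 = \bm{q}$, which needs a robot at $\bm{h}$ and one at $\bm{h} - \frac{2\alpha}{\alpha+3}L$ --- both inside the closed half $\overline{\bm{p}\bm{h}}$ that your invariant permits. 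For $\xi_{(\alpha,4)}$ this issue is invisible, because both middle robots end up \emph{strictly} on one side of $\bm{h}$ and every labeling of $\Psi$ requires a point exactly at $\bm{h}$; for $\xi'_{(\alpha,4)}$ the $\bm{p}_2$-robot is frozen exactly at $\bm{h}$, so your conclusion ``no robot ever reaches a position $> \bm{h}$, hence $\Psi$ never recurs'' does not follow. If the reversed $\Psi$ did recur, the case-(b) move would send the other middle robot to $\alpha\bm{q} + (1-\alpha)g(P) > \bm{h}$, destroying the very invariant you rely on, so the point cannot be waved away.

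The gap is repairable, but it needs the following two facts that your write-up does not establish. First, every labeling of $\Psi$ places $\bm{p}_2$ at $\bm{h}$, so $\Psi$ can recur only while the original $\bm{p}_2$-robot still sits at $\bm{h}$; once it moves (it moves to some $g < \bm{h}$, and thereafter $g = (\bm{p} + x + y + \bm{q})/4 < \bm{h}$ whenever both middles are below $\bm{h}$), no robot is ever again at $\bm{h}$ and $\Psi$ is dead. Second, while that robot sits at $\bm{h}$, the other middle robot first lands at $d_1 = (1-\alpha)g(P_0)$, which exceeds the reversed-$\Psi$ position $\bm{h} - \frac{2\alpha}{\alpha+3}L$ by $\frac{\alpha(1-\alpha)}{\alpha+3}L > 0$, and its subsequent CoG positions obey $d_{k+1} = (3L/2 + d_k)/4$, which \emph{increase} strictly toward $\bm{h}$ while staying below it; hence it never occupies either of the two positions ($\bm{h} \pm \frac{2\alpha}{\alpha+3}L$) that would re-establish $\Psi$. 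Note in passing that this also corrects your claim that the CoG iterates move ``monotonically toward $\bm{p}$'' --- they move away from $\bm{p}$, toward $\bm{h}$; that directional slip is harmless here only because $d_1$ already lies above the dangerous point. With these two facts, $\Psi$ holds at most once after the reduction, the tail of the execution is governed by CoG, and Theorem~\ref{T2010} finishes the proof as you intend.
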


\begin{lemma}
\label{L2020} 
For any $2 \leq f \leq 3$,
$\Phi \cup \Phi' = \{ \xi_{(\alpha,4)}, \xi'_{(\alpha,4)}\}$ 
is not compatible with respect to the fault tolerant $(4,f)$-convergence problem.
Here, $\alpha(\Phi \cup \Phi') = \alpha(\Phi) = \alpha(\Phi') = \alpha < 1$.
\end{lemma}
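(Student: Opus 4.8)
**

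The goal is to exhibit a single configuration, a single fault pattern (with $f \geq 2$ crashes), and a single $\mathcal{SSYNC}$ schedule under which a system whose robots use target functions from $\{\xi_{(\alpha,4)}, \xi'_{(\alpha,4)}\}$ fails to converge all its non-faulty robots to a point. Since Lemma~\ref{L2010} and Corollary~\ref{C2010b} already show each function is individually compatible, the incompatibility must come from the interaction of the two functions on configurations satisfying $\Psi$, where $\xi_{(\alpha,4)}$ and $\xi'_{(\alpha,4)}$ diverge in behavior: in case (a) one sends its robot forward to $\bm{p}_3$ while the other stays put, and in case (b) one stays put while the other moves inward. The plan is to engineer a colinear configuration on a segment $\overline{\bm{p}\bm{q}}$ that satisfies $\Psi$, place the two faulty robots at the endpoints $\bm{p}_1 = \bm{p}$ and $\bm{p}_4 = \bm{q}$ so that $CH_t = \overline{\bm{p}\bm{q}}$ is frozen for all time, and then show the two interior non-faulty robots oscillate (or settle at two distinct interior points) rather than collapsing to a single point.

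\textbf{Setup and key steps.} First I would fix $f = 2$ (the case $f = 3$ follows by adding a redundant crash, or by a parallel construction), set the two faulty robots at the endpoints $\bm{p}_1$ and $\bm{p}_4$ of a segment of length $L$, and assign $\xi_{(\alpha,4)}$ to the robot initially at the interior point $\bm{p}_2$ and $\xi'_{(\alpha,4)}$ to the robot initially at $\bm{p}_3$, with $\bm{p}_1,\bm{p}_2,\bm{p}_3,\bm{p}_4$ placed so that $\Psi$ holds, i.e. $dist(\bm{p}_1,\bm{p}_2)=\tfrac12 L$ and $dist(\bm{p}_2,\bm{p}_3)=\tfrac{2\alpha}{\alpha+3}L$. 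Second, I would compute the one-step behavior under a chosen schedule: when the $\xi_{(\alpha,4)}$-robot at $\bm{p}_2=(0,0)$ is activated it jumps to $\bm{p}_3$ (case 1(a)), and when the $\xi'_{(\alpha,4)}$-robot at $\bm{p}_3=(0,0)$ is activated it moves to $\alpha\bm{p}_1 + (1-\alpha)g(P)$ (case 1(b)). The crucial arithmetic check is that these two image points are themselves positioned so that, relabeling the four positions in order along the segment, the configuration \emph{again} satisfies $\Psi$ with the same endpoints $\bm{p}_1,\bm{p}_4$ — so the two interior robots keep swapping roles or keep being mapped to interior points that perpetually re-trigger the $\Psi$ cases. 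Third, from this invariance I would conclude that the interior robots never both reach the same point: they remain pinned to two distinct interior locations (or cycle between such), so $|\overline{P_t}| \geq 2$ for all $t$ even though $CH_t$ is a fixed segment, contradicting convergence of the non-faulty robots to a single point.

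\textbf{Where the work concentrates.} The heart of the argument is the algebraic verification that the special spacing in $\Psi$ — the ratios $\tfrac12$ and $\tfrac{2\alpha}{\alpha+3}$ — is exactly an invariant (or a $2$-periodic orbit) of the combined dynamics, so that the forward images of $\bm{p}_2$ and $\bm{p}_3$ reproduce a $\Psi$-configuration rather than breaking it. This is plainly why the authors chose those particular constants, and the main obstacle is bookkeeping: one must track the four coordinates along the line after each activation, recompute $g(P)$ and the resulting interior spacings, confirm the ordering of the four points is preserved, and verify condition (ii) of $\Psi$ still holds with the correct distances. I would carry this out with a $1$-dimensional coordinate (projecting onto the segment), checking that the map sending the interior pair $(\bm{p}_2,\bm{p}_3)$ to its successor interior pair has a fixed point or short cycle at the prescribed spacing, with the endpoints held fixed by the crashes. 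Once that invariance is confirmed the non-convergence is immediate, so essentially all the difficulty is in pinning down the schedule and doing this spacing computation carefully.
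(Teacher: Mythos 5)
There is a genuine gap, and it sits exactly where you place ``the heart of the argument.'' You assign $\xi_{(\alpha,4)}$ to the robot at $\bm{p}_2$ and $\xi'_{(\alpha,4)}$ to the robot at $\bm{p}_3$, so that \emph{both} interior robots move, and you then rely on the deferred claim that the spacing ratios $\tfrac12$ and $\tfrac{2\alpha}{\alpha+3}$ make $\Psi$ invariant (or $2$-periodic) under this mixed dynamics. That claim is false. Put $\bm{p}_1=0$, $\bm{p}_4=L$, $\bm{p}_2=L/2$, $\bm{p}_3=L/2+\tfrac{2\alpha}{\alpha+3}L$; then $g(P)=\tfrac{(2\alpha+3)L}{2(\alpha+3)}$ and the $\xi'$-robot's target is $(1-\alpha)g(P)=\tfrac{(1-\alpha)(2\alpha+3)}{2(\alpha+3)}L$, which equals $L/2$ only when $\alpha\in\{0,-1\}$; in fact it lies strictly left of $L/2$ for $\alpha\in(0,1)$, so after one synchronous step condition (ii) of $\Psi$ fails (and under a one-at-a-time schedule the $\xi$-robot's jump to $\bm{p}_3$ creates a multiplicity, which also kills $\Psi$). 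Once $\Psi$ fails both functions reduce to $g(P)$, and the two non-faulty robots converge. So your construction does not produce a non-converging execution, and no amount of bookkeeping will rescue the invariance.

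The counterexample requires the \emph{opposite} assignment, and then no arithmetic at all: give $\xi'_{(\alpha,4)}$ to the robot at $\bm{p}_2$ and $\xi_{(\alpha,4)}$ to the robot at $\bm{p}_3$ (the endpoints $\bm{p}_1,\bm{p}_4$ crashed, as you have them). By case 1(a) of $\xi'_{(\alpha,4)}$ the robot at $\bm{p}_2$ has target $(0,0)$, i.e.\ its own position, and by case 1(b) of $\xi_{(\alpha,4)}$ so does the robot at $\bm{p}_3$. Hence $P_t=P_0$ for all $t$ under any schedule, and the two non-faulty robots sit forever at the distinct points $\bm{p}_2\neq\bm{p}_3$. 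The two functions are complementary halves of a hand-off protocol ($\xi$: ``$\bm{p}_2$ moves, $\bm{p}_3$ waits''; $\xi'$: ``$\bm{p}_3$ moves, $\bm{p}_2$ waits''), and the incompatibility comes from pairing the two ``wait'' branches into a deadlock, not from a $\Psi$-preserving oscillation. Your high-level framing (freeze the hull with crashed endpoints, exploit the divergent behavior on $\Psi$) is right, but the mechanism you chose fails.
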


\begin{proof}
Assume the followings:
$P_0$ satisfies the condition $\Psi$,
the target function of the robots at $\bm{p}_1, \bm{p}_3$, 
and $\bm{p}_4$ is $\xi_{(\alpha,4)}$,
that of the robot at $\bm{p}_2$ is $\xi'_{(\alpha,4)}$,
the scheduler is ${\cal FSYNC}$,
and the robots at $\bm{p}_1$ and $\bm{p}_4$ have already crashed at time $0$.

Then by the definitions of $\xi_{(\alpha,4)}$ and $\xi'_{(\alpha,4)}$,
$P_t = P_0$ for all $t \geq 0$.
Thus the lemma holds.
\end{proof}

\medskip

Let us extend this lemma to general $f$.
We use target functions $\xi_{(\alpha,n)}$ and $\xi'_{(\alpha,n)}$
which are easy extensions of $\xi_{(\alpha,4)}$ and $\xi'_{(\alpha,4)}$.
Let $\ell = \lfloor \frac{n-2}{2} \rfloor$ and
$\ell' = \lceil \frac{n-2}{2} \rceil$.
Thus $\ell + \ell' = n-2$.
For a configuration $P$, define a condition $\Psi^+$ by a conjunction of
two conditions $(i)$ and $(ii)$.
\begin{description}
\item[$\Psi^+$:] 
\begin{description}
\item[(i)] 
$P = \{ \bm{p}_1, \bm{p}_2, \ldots , \bm{p}_n \} \subseteq \overline{\bm{p}_1 \bm{p}_n}$,
where $\bm{p}_1, \bm{p}_{\ell+1}, \bm{p}_{\ell+2}, \bm{p}_{\ell+3}$ are distinct and 
aligned on $\overline{\bm{p}_1 \bm{p}_n}$ in this order, 
$\bm{p}_1 = \bm{p}_2 = \dots = \bm{p}_{\ell}$, i.e., the multiplicity of $\bm{p}_1$ is $\ell$, 
$\bm{p}_{\ell+3} = \bm{p}_{\ell+4} = \dots = \bm{p}_n$, 
i.e., the multiplicity of $\bm{p}_{\ell+3}$ is $\ell'$.
\item[(ii)]
Let $L = dist(\bm{p}_1,\bm{p}_n)$.
Then $dist(\bm{p}_1,\bm{p}_{\ell+1}) = \frac{1}{2}L$.
If $n$ is even,
$dist(\bm{p}_{\ell+1},\bm{p}_{\ell+2}) = \frac{\alpha n}{2(\alpha+n-1)}L$;
otherwise, if it is odd,
$dist(\bm{p}_{\ell+1},\bm{p}_{\ell+2}) = \frac{(2 \alpha -1)n+(1-\alpha)}{2(\alpha(n+1)-1)}L$.
\end{description}
\end{description}

\medskip
\noindent
{\bf [Target function $\xi_{(\alpha,n)}$]}
\begin{enumerate}
\item
Suppose that $P$ satisfies $\Psi^+$.
\begin{description}
\item[(a)] 
$\xi_{(\alpha,n)}(P) = \bm{p}_{\ell+2}$, if $\bm{p}_{\ell+1} = (0,0)$,
\item[(b)]
$\xi_{(\alpha,n)}(P) = (0,0)$, if $\bm{p}_{\ell+2} = (0,0)$, and
\item[(c)]
$\xi_{(\alpha,n)}(P) = g(P)$, otherwise.
\end{description}
\item
Suppose that $P$ does not satisfy $\Psi^+$.
Then $\xi_{(\alpha,n)}(P) = g(P)$.
\end{enumerate}

\medskip
\noindent
{\bf [Target function $\xi'_{(\alpha,n)}$]}
\begin{enumerate}
\item
Suppose that $P$ satisfies $\Psi^+$.
\begin{description}
\item[(a)] 
$\xi'_{(\alpha,n)}(P) = (0,0)$, if $\bm{p}_{\ell+1} = (0,0)$,
\item[(b)]
$\xi'_{(\alpha,n)}(P) = \alpha \bm{p}_1  +  (1-\alpha) g(P)$, 
if $\bm{p}_{\ell+2} = (0,0)$, and
\item[(c)]
$\xi'_{(\alpha,n)}(P) = g(P)$, otherwise.
\end{description}
\item
Suppose that $P$ does not satisfy $\Psi^+$
Then $\xi'_{(\alpha,n)}(P) = g(P)$.
\end{enumerate}

\begin{theorem}
\label{T2020}
For any $2 \leq f \leq n-1$ and $0 < \alpha < 1$,
there are two target functions $\xi_{(\alpha,n)}$ and $\xi'_{(\alpha,n)}$
such that (1) $\alpha(\xi_{(\alpha,n)}) = \alpha(\xi'_{(\alpha,n)}) = \alpha$,
(2) both of $\Phi = \{ \xi_{(\alpha,n)} \}$ and $\Phi' = \{ \xi'_{(\alpha,n)} \}$
are compatible with respect to the \mbox{FC$(f)$},
but (3) $\Phi \cup \Phi'$ is not.
\end{theorem}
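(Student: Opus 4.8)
The plan is to establish the three claims by lifting the $n=4$ results---Lemma~\ref{L2010}, Corollary~\ref{C2010b}, and Lemma~\ref{L2020}---to general $n$, with $\Psi^+$ playing the role of $\Psi$ and the two singleton robots at $\bm{p}_{\ell+1},\bm{p}_{\ell+2}$ playing the role of $\bm{p}_2,\bm{p}_3$. For claim (1) I would place $P$ on a line with $\bm{p}_1=(0,0)$ and $\bm{p}_n=(L,0)$, so that from the multiplicities in $\Psi^+$(i) one reads off $g(P)=\big((\ell'+1)L+d\big)/n$, where $d=dist(\bm{p}_{\ell+1},\bm{p}_{\ell+2})$. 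Since $\xi_{(\alpha,n)}$ and $\xi'_{(\alpha,n)}$ agree with CoG except in cases (a),(b) of $\Psi^+$, the scale is attained there; substituting the even/odd value of $d$ prescribed by $\Psi^+$(ii) and computing the ratio $dist(\text{target},g(P))/dist\big(g(P),\text{the endpoint toward which the target moves}\big)$ should return exactly $\alpha$ (for $\xi_{(\alpha,n)}$ the target $\bm{p}_{\ell+2}$ points toward $\bm{p}_n$; for $\xi'_{(\alpha,n)}$ the target $(1-\alpha)g(P)$ points toward $\bm{p}_1$). This routine calculation is exactly what the constants in $\Psi^+$(ii) are engineered to make succeed, yielding $\alpha(\xi_{(\alpha,n)})=\alpha(\xi'_{(\alpha,n)})=\alpha$.

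For claim (2) I would argue as in Lemma~\ref{L2010} that $\{\xi_{(\alpha,n)}\}$, and symmetrically $\{\xi'_{(\alpha,n)}\}$ by Corollary~\ref{C2010b}, is compatible with FC$(f)$. Because $\alpha<1$ we have $CH_{t+1}\subseteq CH_t$, so $CH_t$ shrinks to a convex polygon $CH$; whenever $CH_t$ has three or more vertices the function is CoG, so the only nontrivial case is that $CH_t$ eventually is a segment $\overline{\bm{p}\bm{q}}$, and one must show $\bm{p}=\bm{q}$. Assuming $\bm{p}\ne\bm{q}$ and $L=dist(\bm{p},\bm{q})$, fix $0<\epsilon\ll(1-\alpha)L$ and $t_0$ with $CH\subseteq CH_t\subseteq N_{\epsilon}(CH)$ for $t\ge t_0$. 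The key point is that a single function cannot sustain $\Psi^+$: under $\xi_{(\alpha,n)}$ a non-faulty left-middle robot is eventually activated and jumps to $\bm{p}_{\ell+2}$, destroying the multiplicity pattern so that CoG resumes and the segment contracts; the only way to freeze $\Psi^+$ is to crash the left-middle robot together with every endpoint robot, which leaves at most the single non-faulty right-middle robot, for which convergence is trivial. Hence all non-faulty robots eventually vacate $N_{\epsilon}(\bm{p})\cup N_{\epsilon}(\bm{q})$, contradicting $CH_t\to CH$ exactly as in the last paragraph of the proof of Theorem~\ref{T0030}.

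For claim (3) I would reproduce the stationary configuration of Lemma~\ref{L2020}. Start from $P_0$ satisfying $\Psi^+$, assign $\xi'_{(\alpha,n)}$ to the robot at $\bm{p}_{\ell+1}$ and $\xi_{(\alpha,n)}$ to the robot at $\bm{p}_{\ell+2}$, crash the $\ell+\ell'=n-2$ endpoint robots at $\bm{p}_1$ and $\bm{p}_n$ at time $0$, and run the ${\cal FSYNC}$ scheduler. Then case (a) of $\xi'_{(\alpha,n)}$ pins $\bm{p}_{\ell+1}$, case (b) of $\xi_{(\alpha,n)}$ pins $\bm{p}_{\ell+2}$, and the crashed robots pin the two endpoints, so $P_t=P_0$ for all $t\ge0$; the two non-faulty robots sit at the distinct points $\bm{p}_{\ell+1}\ne\bm{p}_{\ell+2}$ forever and never converge, so $\Phi\cup\Phi'$ is not compatible with FC$(f)$.

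The main obstacle is the case analysis behind claim (2): for the single function one must rule out a persistent non-degenerate segment for every placement of the faulty robots among the four position-classes of $\Psi^+$---the two endpoint clusters of sizes $\ell,\ell'$ and the two middle singletons---mirroring parts (I) and (II) of Lemma~\ref{L2010} but with clusters in place of single robots. A secondary delicate point is the range of $f$ in claim (3): the stationary witness above uses $n-2$ crashes, which coincides with $f=2$ only when $n=4$ and in general reaches merely $f\ge n-2$; since $\xi_{(\alpha,n)}$ and $\xi'_{(\alpha,n)}$ agree with CoG off $\Psi^+$ and $\Psi^+$ collapses the instant a non-faulty endpoint cluster is activated, driving the incompatibility down to the smaller values of $f$ is where the construction would have to work hardest.
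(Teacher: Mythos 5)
Your proposal follows essentially the same route as the paper, whose own proof of Theorem~\ref{T2020} consists of one sentence deferring to Lemmas~\ref{L2010} and \ref{L2020}: part (1) by direct computation at a $\Psi^+$ configuration (your formula $g(P)=((\ell'+1)L+d)/n$ and the resulting ratio do come out to exactly $\alpha$ for the prescribed $d$; note only that the target of $\xi'_{(\alpha,n)}$ in case (b) is $\alpha\bm{p}_1+(1-\alpha)g(P)$, not $(1-\alpha)g(P)$), part (2) by redoing Lemma~\ref{L2010} with the endpoint clusters of sizes $\ell,\ell'$ in place of single robots, and part (3) by the frozen $\Psi^+$ configuration of Lemma~\ref{L2020}.

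The ``secondary delicate point'' you flag at the end is, however, a genuine gap --- and it is a gap in the paper's own argument as well, not just in yours. The stationary witness for part (3) must immobilize both endpoint positions, whose total multiplicity under $\Psi^+$ is $\ell+\ell'=n-2$; since no case of $\xi_{(\alpha,n)}$ or $\xi'_{(\alpha,n)}$ keeps an endpoint robot in place, all $n-2$ of them must be crashed, so the construction only refutes compatibility for $f\geq n-2$ (which is why it suffices for $n=4$ in Lemma~\ref{L2020} but not for, say, $n=6$ and $f=2$). With at most $f<n-2$ crashes, fairness forces some non-faulty endpoint robot to move to $g(P)$, which destroys the exact multiplicities required by $\Psi^+$ and returns both functions to CoG, and no robot ever moves onto an extreme point of the current hull, so it is not clear any bad execution exists for the functions as defined. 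The natural repair is to make $\Psi^+$ depend on $f$: give the two extreme points total multiplicity $2$ (or $\min(f,n-2)$) and park the remaining $n-4$ robots at the two interior positions $\bm{p}_{\ell+1},\bm{p}_{\ell+2}$, where cases (a) and (b) pin them; then two crashes freeze the configuration and the non-faulty robots stay split between two distinct points. This requires recalibrating condition (ii) of $\Psi^+$ so that the scale still equals $\alpha$, and rerunning your part-(2) case analysis for the new multiplicity pattern. If you carry that out, your proof is complete and actually more careful than the paper's.
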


\begin{proof}
It is not difficult to see that 
$\alpha(\xi_{(\alpha,n)}) = \alpha(\xi'_{(\alpha,n)}) = \alpha$. 
The theorem obviously holds from the proofs of Lemmas~\ref{L2010} and \ref{L2020}.
\end{proof}

\medskip

Recall Corollary~\ref{C1030}.
Theorem~\ref{T2020} states an interesting difference between the FC$(f)$ 
for $f \geq 2$ and the FC$(1)$.
Before closing this section,
we examine the case $\alpha = 1$ by using the above theorem.

\begin{corollary}
\label{C2025} 
For any $2 \leq f \leq n-1$,
there are two target functions $\xi_{(1,n)}$ and $\xi'_{(1,n)}$
such that (1) $\alpha(\xi_{(1,n)}) = \alpha(\xi'_{(1,n)}) = 1$,
(2) both of $\Phi = \{ \xi_{(1,n)} \}$ and $\Phi' = \{ \xi'_{(1,n)} \}$
are compatible with respect to the \mbox{\rm FC$(f)$},
but (3) $\Phi \cup \Phi'$ is not.
\end{corollary}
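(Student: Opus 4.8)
The plan is to reduce Corollary~\ref{C2025} to the $\alpha < 1$ machinery already developed for Theorem~\ref{T2020}, treating $\alpha = 1$ as a limiting case. Note first that the target functions $\xi_{(1,n)}$ and $\xi'_{(1,n)}$ are obtained simply by substituting $\alpha = 1$ into the definitions of $\xi_{(\alpha,n)}$ and $\xi'_{(\alpha,n)}$. I would begin by verifying claim (1): a direct computation of the scale shows $\alpha(\xi_{(1,n)}) = \alpha(\xi'_{(1,n)}) = 1$. For $\xi_{(1,n)}$, case (b) sends a robot at $\bm{p}_{\ell+2} = (0,0)$ to itself, i.e.\ it does not move, and one checks that on configurations satisfying $\Psi^+$ with $(0,0) = \bm{p}_{\ell+2}$ the image can be forced arbitrarily close to a vertex of $CH(P)$, giving $\alpha = 1$; since every output lies in $CH(P)$ we also have $\alpha \leq 1$. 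For $\xi'_{(1,n)}$, case (b) becomes $\xi'_{(1,n)}(P) = \bm{p}_1$, which is a vertex of $CH(P)$, so again $\alpha = 1$.

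Next, for claim (2), I would argue that each singleton $\{\xi_{(1,n)}\}$ and $\{\xi'_{(1,n)}\}$ remains compatible with respect to FC$(f)$, mirroring the proof of Lemma~\ref{L2010}. The essential structural facts used there survive at $\alpha = 1$: both functions agree with CoG whenever $CH_t$ is not a line segment, and on a line segment they never expand the hull, so $CH_{t+1} \subseteq CH_t$ and $CH_t$ converges to a segment $\overline{\bm{p}\bm{q}}$. The line-segment analysis of Lemma~\ref{L2010} then applies: with at most one fault the hull must collapse to a point because any non-faulty endpoint robot is eventually activated and pulled strictly inward (at $\alpha = 1$ the relevant target still lands strictly between the two extreme multiplicities, away from the endpoint neighborhoods), and with at least two faults the configuration either fails $\Psi^+$ forever or stabilizes so that the non-faulty robots collapse by Theorem~\ref{T2010}. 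The only point requiring care is that at $\alpha = 1$ the displacement of a moving robot need not shrink the hull; I would handle this exactly as in Lemma~\ref{L2010} by showing the non-faulty robots permanently leave the $\epsilon$-neighborhoods of the two segment endpoints, contradicting convergence to $\overline{\bm{p}\bm{q}}$ with $\bm{p} \neq \bm{q}$.

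For claim (3), I would exhibit the same stationary counterexample as in Lemma~\ref{L2020}, now specialized to $\alpha = 1$ and $n$ robots. Take $P_0$ satisfying $\Psi^+$, assign $\xi_{(1,n)}$ to the robots at the extreme multiplicity points and at $\bm{p}_{\ell+2}$, assign $\xi'_{(1,n)}$ to the robot at $\bm{p}_{\ell+1}$, let the robots forming the two endpoint multiplicities be crashed at time $0$, and run the $\cal FSYNC$ scheduler. By case (a) of $\xi'_{(1,n)}$ the robot at $\bm{p}_{\ell+1} = (0,0)$ stays put, and by case (a) of $\xi_{(1,n)}$ the robot at $\bm{p}_{\ell+2}$ maps to $(0,0)$, i.e.\ it also does not move; the crashed robots are fixed by assumption. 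Hence $P_t = P_0$ for all $t$, the non-faulty robots never converge to a point, and $\Phi \cup \Phi'$ is not compatible with respect to FC$(f)$.

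The main obstacle I anticipate is the scale and the compatibility verification at the boundary value $\alpha = 1$, where the contraction $CH_{t+1} \subsetneq CH_t$ that drove the $\alpha < 1$ arguments can degenerate to equality. I would address this by checking that, on the degenerate line-segment configurations satisfying $\Psi^+$, the designated targets of the two functions still keep non-faulty robots a fixed distance away from the segment endpoints, so the neighborhood-escape argument of Lemma~\ref{L2010} goes through verbatim; everything else in the proof of Theorem~\ref{T2020} carries over with only the substitution $\alpha = 1$.
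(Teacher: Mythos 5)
Your construction does not work as stated: substituting $\alpha = 1$ into the definitions of $\xi_{(\alpha,n)}$ and $\xi'_{(\alpha,n)}$ makes the condition $\Psi^+$ unsatisfiable, because the prescribed distances degenerate. Concretely, at $\alpha=1$ the second distance in $\Psi^+$(ii) becomes $\frac{\alpha n}{2(\alpha+n-1)}L = \frac{1}{2}L$ for even $n$ (likewise $\frac{(2\alpha-1)n+(1-\alpha)}{2(\alpha(n+1)-1)}L = \frac{1}{2}L$ for odd $n$, and $\frac{2\alpha}{\alpha+3}L = \frac{1}{2}L$ for the base case $n=4$), so $dist(\bm{p}_1,\bm{p}_{\ell+2}) = \frac{1}{2}L + \frac{1}{2}L = L$ and $\bm{p}_{\ell+2}$ coincides with $\bm{p}_{\ell+3} = \bm{p}_n$, contradicting the distinctness required by $\Psi^+$(i). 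This is no accident: the distances were tuned precisely so that the move prescribed in case (a) lands at relative position $\alpha$ inside $CH(P)$, and at $\alpha=1$ that position is the hull vertex itself. Hence with your definition no configuration satisfies $\Psi^+$, both functions collapse to CoG everywhere, claim (1) fails ($\alpha(\mathrm{CoG})=0$), and the stationary counterexample you describe for claim (3) has no initial configuration to start from. Your verifications of (1) and (3) silently presuppose that $\Psi^+$-configurations exist at $\alpha=1$, which is exactly the point that breaks.

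The paper avoids this by not substituting $\alpha=1$ at all. It keeps $\xi_{(\alpha,n)}$ and $\xi'_{(\alpha,n)}$ with $\alpha<1$ --- so the non-compatibility gadget of Lemma~\ref{L2020} and the compatibility proof of Lemma~\ref{L2010} survive untouched --- and raises the scale to $1$ by splicing in a separate three-robot target function $\xi^*_{(1,3)}$ (a collinear gadget with distances $9L/10$ and $L/10$ that sends the robot at one endpoint to the opposite endpoint) on configurations of three points. Since a target function is defined on $(R^2)^m$ for every $m$ and the scale is a supremum over all such inputs, this grafted branch forces $\alpha(\xi_{(1,n)}) = \alpha(\xi'_{(1,n)}) = 1$ without changing the behaviour on $n$-point configurations, and the corollary then follows directly from Theorem~\ref{T2020}. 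If you want to salvage your route, you would have to redefine the $\Psi^+$ distances at $\alpha=1$ to something non-degenerate and then redo both the scale computation and the single-function compatibility argument for the new geometry; the claim that ``everything carries over with the substitution $\alpha=1$'' cannot stand as written.
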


\begin{proof}
We construct target functions $\xi_{(1,n)}$ and $\xi'_{(1,n)}$ as follows:
First consider the following target function $\xi^*_{(1,3)}$ for three robots.

\medskip

\noindent
{\bf [Target function $\xi^*_{(1,3)}$]}
\begin{enumerate}
\item
If $P = \{ \bm{p}_1, \bm{p}_2, \bm{p}_3 \}$ satisfies 
(i) $P \subseteq \overline{\bm{p}_1 \bm{p}_3}$,
where $\bm{p}_1, \bm{p}_2, \bm{p}_3$ are distinct and 
aligned on $\overline{\bm{p}_1 \bm{p}_3}$ in this order, and
(ii) $dist(\bm{p}_1,\bm{p}_2) = 9L/10$ and $dist(\bm{p}_2,\bm{p}_3) = L/10$,
where $dist(\bm{p}_1,\bm{p}_3) = L$,
\begin{description}
\item[(a)] 
$\xi^*_{(1,3)}(P) = \bm{p}_3$, if $\bm{p}_1 = (0,0)$,
\item[(b)]
$\xi^*_{(1,3)}(P) = g(P)$, otherwise.
\end{description}
\item
Otherwise,
$\xi^*_{(1,3)}(P) = g(P)$.
\end{enumerate}

It is easy to observe that $\alpha(\xi^*_{(1,3)}) = 1$,
and that it is an algorithm for the fault tolerant $(3,2)$-convergence problem.

Let $\xi_{(1,n)}$ and $\xi'_{(1,n)}$ be respectively constructed from
$\xi_{(\alpha,n)}$ and $\xi'_{(\alpha,n)}$ by replacing
$\xi_{(\alpha,3)}$ and $\xi'_{(\alpha,3)}$ with $\xi^*_{(1,3)}$.
Then $\alpha(\xi_{(1,n)}) = \alpha(\xi'_{(1,n)}) = 1$.
By Theorem~\ref{T2020},
$\xi_{(1,n)}$ and $\xi'_{(1,n)}$ are algorithms for the FC($f$),
but $\{ \xi_{(1,n)}, \xi'_{(1,n)} \}$ is not compatible 
with respect to the FC($f$).
\end{proof}

\section{FC$(f)$-CP for $f \geq 2$}
\label{SFCPf} 

We next investigate the fault tolerant $(n,f)$-convergence problem
to a convex $f$-gon (FC$(f)$-CP).
The FC$(f)$-CP is the problem to ensure that, 
as long as at most $f$ robots crash,
in every execution ${\cal E}: P_0, P_1, \ldots$,
$CH_t = CH(P_t)$ converges to a convex $h$-gon $CH$ for some $h \leq f$,
in such a way that for each vertex of $CH$ there is a robot that
converges to the vertex.
A convex 2-gon is a line segment.
The FC(1)-CP is the FC(1)-PO.
The FC$(f)$-CP seems to be substantially easier than the FC$(f)$ (and the FC$(f)$-PO),
since the convergence of $CH_t$ to a convex $f$-gon does not always
mean the convergence of $P_t$;
a robot may not converge to a point.
We have the following theorem.

\begin{theorem}
\label{T3010} 
Let $\Phi$ be any set of target functions
such that $0 \leq \alpha(\Phi) < 1$.
Then $\Phi$ is compatible with respect to the \mbox{\rm FC($f$)-CP} 
for any $2 \leq f \leq n-1$.
\end{theorem}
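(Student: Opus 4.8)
The plan is to follow the structure of the proof of Theorem~\ref{T0030}, since the scale condition $\alpha(\Phi)<1$ is the same; the only difference is in what we must conclude. Let each robot $r_i$ take $\phi_i\in\Phi$ with $\alpha(\phi_i)=\alpha_i$, and set $\alpha=\max_i\alpha_i\le\alpha(\Phi)<1$. For any execution ${\cal E}:P_0,P_1,\ldots$, the key monotonicity fact $CH_{t+1}\subseteq CH_t$ holds exactly because each activated robot moves into $\alpha * CH_t\subseteq CH_t$ (a faulty robot does not move, which only helps). Hence $CH_t$ converges to some convex $k$-gon $CH$ with vertices $\bm{p}_0,\ldots,\bm{p}_{k-1}$ and $k\le f$ automatically, since a nested sequence of convex hulls of $n$ points can stabilize to a polygon with at most $n$ vertices; but we need the sharper bound $k\le f$ and, crucially, that each vertex of $CH$ has a robot converging to it.

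The first thing I would establish is that every vertex of the limit $CH$ must be the crash site of a faulty robot, which immediately gives both $k\le f$ and the per-vertex convergence requirement. The argument is the one already embedded in the proof of Theorem~\ref{T0030}: fix $L=\min_{i\ne j}dist(\bm{p}_i,\bm{p}_j)$ and choose $0<\epsilon\ll(1-\alpha)L/n$, so that for $t\ge t_0$ we have $CH\subseteq CH_t\subseteq N_\epsilon(CH)$. For any vertex $\bm{p}$ of $CH$ and any $t\ge t_0$ one has $dist(\bm{p},\alpha * CH_t)>(1-\alpha)(L/n-\epsilon)-\epsilon\gg\epsilon$, because $dist(\bm{p},\bm{g}_t)>L/n-\epsilon$. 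Therefore any \emph{non-faulty} robot activated at time $t\ge t_0$ lands in $\alpha * CH_t$ and so leaves $N_\epsilon(\bm{p})$ for good (the reactivation argument of Theorem~\ref{T0030} shows it never returns). Since the scheduler is fair, every non-faulty robot is eventually activated and thus eventually stays outside $\bigcup_{\bm{p}}N_\epsilon(\bm{p})$. Because $CH_t$ keeps a point within $N_\epsilon(\bm{p})$ for every vertex $\bm{p}$ (as $CH\subseteq CH_t$), each such vertex neighborhood must forever contain a robot, and that robot can only be a faulty one. Hence every vertex of $CH$ is occupied by a crashed robot sitting at its crash position.

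From this the theorem follows: distinct vertices force distinct faulty robots (the neighborhoods $N_\epsilon(\bm{p})$ are pairwise disjoint for $\epsilon$ small), so $k\le f$; and the faulty robot pinned at each vertex trivially converges to that vertex, satisfying the FC$(f)$-CP requirement. I would also note the boundary/degenerate cases in passing: if the stabilized hull is a point ($k=1$) the requirement is met vacuously or by any robot there, and if $|\overline{P_t}|=1$ at some time we are already done, as in Theorem~\ref{T0030}.

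The main obstacle, and the place I expect to spend the real care, is the claim that a non-faulty robot, once it exits $N_\epsilon(\bm{p})$, never re-enters it. In Theorem~\ref{T0030} this is clean because \emph{all} robots share the shrinking-hull behavior, but here I must make sure the argument is robust to the presence of up to $f$ stationary faulty robots that may themselves be sitting at or near the vertices and thus distort $\bm{g}_t$. The safeguard is that $\bm{g}_t=g(P_t)\in CH_t$ and the estimate $dist(\bm{p},\bm{g}_t)>L/n-\epsilon$ only uses $g(P_t)\in N_\epsilon(CH)$ together with the separation $L$ between limit vertices; this bound holds regardless of which robots are faulty, so the target $\alpha * CH_t$ stays uniformly $\gg\epsilon$ away from every vertex. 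Consequently each activation of a non-faulty robot pushes it to a target bounded away from all vertex neighborhoods, and since $CH_{t'}\subseteq CH_t$ for $t'>t$ the exclusion persists across all later activations. Once that re-entry claim is nailed down the rest is the counting argument above.
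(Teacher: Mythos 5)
Your proposal is correct and follows essentially the same approach as the paper: the same scale-based estimate showing that $\alpha * CH_t$ stays $\gg \epsilon$ away from every limit vertex, the same conclusion that non-faulty robots permanently vacate the (pairwise disjoint) vertex neighborhoods, and the same counting of faulty robots. The only cosmetic difference is that you derive $k \le f$ directly from "every vertex is pinned by a distinct crashed robot," whereas the paper first runs a contradiction argument assuming $k \ge f+1$ and then reuses the same machinery for the per-vertex convergence claim.
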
 

\begin{proof}
Let $\phi_i \in \Phi$ be the target function taken by robot $r_i$,
for $i = 1, 2, \ldots , n$.
Let $\alpha(\phi_i) = \alpha_i$ and $\alpha = \max_{1 \leq i \leq n} \alpha_i$.
Then $\alpha \leq \alpha(\Phi) < 1$.

Let ${\cal E}: P_0, P_1, \ldots$ be any execution of $\Phi$ 
starting from any initial configuration $P_0$.
We show that $CH_t$ converges to a convex $k$-gon $CH$ for some $k \leq f$,
provided that at most $f$ robots will crash.
Since $\alpha < 1$, $CH_t$ converges to a convex $k$-gon $CH$ for some $k\geq 1$.
We show that $k \leq f$.

To derive a contradiction,
we assume that $k \geq f+1$.
Let $\bm{p}_0, \bm{p}_1, \ldots , \bm{p}_{k-1}$ be the vertices of $CH$,
and they appear in this order on the boundary of $CH$ counter-clockwise.
For any pair $(i,j) (0 \leq i < j \leq k-1)$,
let $L_{(i,j)} = dist(\bm{p}_i,\bm{p}_j)$ and $L = \min_{0 \leq i < j \leq k-1} L_{(i,j)}$.
For any $0 < \epsilon \ll (1-\alpha)L/n$,
there is a time $t_0$ such that, for all $t > t_0$,
$CH \subseteq CH_t \subseteq N_{\epsilon}(CH)$.
By definition, for all $(i,j)$, 
$N_{\epsilon}(\bm{p}_i) \cap N_{\epsilon}(\bm{p}_j) = \emptyset$.

There is a time $t$ and a vertex $\bm{p}$
such that $N_{\epsilon}(\bm{p})$ does not include a faulty robot in $CH_t$.
Certainly such a $\bm{p}$ exists, 
since there are at most $f$ faulty robots,
a robot cannot belong to the neighbors of two vertices simultaneously,
and $k \geq f+1$.

Consider any non-faulty robot $r$ in $N_{\epsilon}(\bm{p})$ at $t$.
Then $r$ is eventually activated at time $t' > t$,
and moves inside $\alpha * CH_{t'} \subseteq \alpha * CH_t$.
Thus $dist(\bm{x}_{t'}(r), \bm{x}_{t'+1}(r)) > (1-\alpha) L/n \gg 2 \epsilon$,
which implies that there is no time $t'' > t'$
such that $\bm{x}_{t''}(r) \in N_{\epsilon}(\bm{p}_i)$ for any $0 \leq i \leq k-1$.
It contradicts to the assumption that $CH_t$ converges to $CH$,
since there is a time such that no robot is in $N_{\epsilon}(\bm{p})$.

We next show that for each vertex $\bm{p}$ of $CH$,
there is a robot $r$ that converges to $\bm{p}$.
By the argument above, if $r$ is non-faulty,
it eventually leaves $N_{\epsilon}(\bm{p})$ 
and will never return in $N_{\epsilon}(\bm{p}_i)$ for any $0 \leq i \leq k-1$.
Thus $r$ is faulty and it crashes at $\bm{p}$,
i.e., $r$ converges to $\bm{p}$.
\end{proof}

\begin{observation}
\label{O3025}
Let $\Phi$ and $\Phi'$ be any set of target functions
such that $\alpha(\Phi) < 1$ and $\alpha(\Phi') < 1$ hold.
Then all of $\Phi, \Phi'$, and $\Phi \cup \Phi'$ are compatible 
with respect to the \mbox{\rm FC($f$)-CP} for all $2 \leq f \leq n-1$,
since $\alpha(\Phi \cup \Phi') < 1$.
\end{observation}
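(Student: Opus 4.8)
The plan is to reduce everything to Theorem~\ref{T3010}, which already settles the case of a single set of target functions whose scale lies strictly below $1$. First I would check that the scale of a union cannot jump up: by definition the scale of a set is the supremum of the scales of its members, $\alpha(\Phi) = \sup_{\phi \in \Phi} \alpha(\phi)$, so for the union we have $\alpha(\Phi \cup \Phi') = \sup_{\phi \in \Phi \cup \Phi'} \alpha(\phi) = \max\{\alpha(\Phi), \alpha(\Phi')\}$. Since both $\alpha(\Phi) < 1$ and $\alpha(\Phi') < 1$ hold by hypothesis, this maximum is also strictly less than $1$, i.e., $\alpha(\Phi \cup \Phi') < 1$.

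With this computation in hand, all three claims are immediate. Each of $\Phi$, $\Phi'$, and $\Phi \cup \Phi'$ is a set of target functions whose scale lies in the interval $[0,1)$, so Theorem~\ref{T3010} applies verbatim to each one, yielding compatibility with respect to the FC($f$)-CP for every $2 \leq f \leq n-1$.

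There is essentially no obstacle to overcome here: the entire substance is carried by Theorem~\ref{T3010}, and the only thing that needs verification is that forming a union does not push the scale up to $1$, which follows at once from the supremum definition of $\alpha(\cdot)$. It is worth emphasizing the contrast with the convergence and FC(1)-PO settings (Theorems~\ref{T0040} and \ref{T1020}), where two sets that are individually compatible but have scale exactly $1$ can fail to be jointly compatible. The present statement avoids that pathology precisely because the strict bound $\alpha < 1$ is preserved under union, so the union again falls within the regime covered by Theorem~\ref{T3010}.
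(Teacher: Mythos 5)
Your proposal is correct and follows exactly the paper's own (implicit) justification: the observation is established simply by noting that $\alpha(\Phi \cup \Phi') = \sup_{\phi \in \Phi \cup \Phi'} \alpha(\phi) < 1$ and then invoking Theorem~\ref{T3010} for each of the three sets. Nothing further is needed.
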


However, we cannot extend Observation~\ref{O3025} 
to include the case $\alpha = 1$.
as we shall see below.

Consider the following two target functions $\tau$ and $\tau'$ for four robots.
For a configuration $P$, define a condition $\Psi$: 

\begin{description}
 \item[$\Psi$:] $P = \{ \bm{p}_1, \bm{p}_2, \bm{p}_3, \bm{p}_4\}$ is a convex quadrilateral 
such that $\angle \bm{p}_1 < \angle \bm{p}_2 < \angle \bm{p}_3 < \angle \bm{p}_4$,
where $\angle \bm{p}_i$ is the angle of vertex $\bm{p}_i$ of the quadrilateral.

\end{description}

\medskip
\noindent
{\bf [Target function $\tau$]}
\begin{enumerate}
\item If $|\overline{P}| \leq 2$, then $\tau(P) = (0,0)$.

\item If $P$ satisfies $\Psi$, then $\tau(P) = \bm{p}_1$.

\item Otherwise, $\tau(P) = g(P)$.
\end{enumerate}

\medskip
\noindent
{\bf [Target function $\tau'$]}
\begin{enumerate}
\item If $|\overline{P}| \leq 2$, then $\tau(P) = (0,0)$.

\item If $P$ satisfies $\Psi$, then $\tau'(P) = \bm{p}_4$.

\item Otherwise, $\tau'(P) = g(P)$.
\end{enumerate}

The following theorem holds.

\begin{theorem}
\label{O3030} 
Let $\Phi = \{ \tau \}$ and $\Phi' = \{ \tau' \}$.
Then $\alpha(\Phi) = \alpha(\Phi') = 1$.
Sets $\Phi$ and $\Phi'$ are compatible with respect to 
the fault tolerant $(4,2)$-convergence problem to a line segment,
but $\Phi \cup \Phi'$ is not.
\end{theorem}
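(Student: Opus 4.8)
The statement has three parts: (1) $\alpha(\Phi)=\alpha(\Phi')=1$; (2) each singleton is compatible with respect to the fault-tolerant $(4,2)$-convergence to a line segment (i.e.\ FC(2)-CP); and (3) the union $\Phi\cup\Phi'$ is not. My plan is to dispatch (1) and (3) quickly and concentrate the real work on (2), which is where the interesting dynamics live.

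For part (1), I would observe that whenever $P$ satisfies $\Psi$, both $\tau$ and $\tau'$ output an actual vertex of the convex hull ($\bm{p}_1$ or $\bm{p}_4$), so the target lies on the boundary of $CH(P)$, giving $\alpha\geq 1$; and since all outputs ($(0,0)$, a vertex, or $g(P)$) always lie in $CH(P)$, we have $\alpha\leq 1$. Hence $\alpha(\tau)=\alpha(\tau')=1$, and $\alpha(\Phi\cup\Phi')=1$ as well. For part (3), I would exhibit a single bad execution: start from $P_0$ satisfying $\Psi$, assign $\tau$ to the robot at $\bm{p}_1$ and $\tau'$ to the robot at $\bm{p}_4$ (and, say, crash these two — we are allowed $f=2$ faults), run under $\cal FSYNC$. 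The robot at $\bm{p}_1$ aims at $\bm{p}_1$ and the one at $\bm{p}_4$ aims at $\bm{p}_4$, so those two vertices never move; I would choose the two non-faulty robots' target functions and positions so that the configuration fails to collapse to a line segment (e.g.\ the quadrilateral is preserved, or the hull stabilizes to a genuine $3$- or $4$-gon), violating the FC(2)-CP requirement that $CH_t$ converge to a convex $h$-gon with $h\leq 2$. The cleanest version fixes all four robots' behavior so that $P_t=P_0$ for all $t$, a full quadrilateral, which is not a line segment.

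For part (2), the substance, I would mimic the contradiction scheme of Theorem~\ref{T3010} and Theorem~\ref{T0030}. All robots take $\tau$ (the $\tau'$ case being symmetric). Since $\tau(P)\in CH(P)$ always, we have $CH_{t+1}\subseteq CH_t$, so $CH_t$ converges to some convex $k$-gon $CH$; I must show $k\leq 2$, i.e.\ $CH$ is a point or segment, and that every vertex of $CH$ is the limit of some robot. The key structural fact to establish is that $\tau$ \emph{shrinks} the hull away from any vertex that is not pinned by a crashed robot: if $CH_t$ is close to $CH$ and $\bm{p}$ is a vertex of $CH$ with no faulty robot in $N_\epsilon(\bm{p})$, then any non-faulty robot there, once activated, moves to $\tau(P_t)$, which is either $g(P_t)$ (deep inside the hull) or — in the $\Psi$ case — the vertex $\bm{p}_1$, i.e.\ the \emph{unique smallest-angle} vertex. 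The subtlety absent from Theorem~\ref{T3010} is case~2 of $\tau$: the target can still be a corner. So I must argue that a non-faulty robot cannot be perpetually parked at a vertex via the $\Psi$-rule. Here I would exploit that $\tau$ sends everyone in a $\Psi$-configuration to the \emph{same} corner $\bm{p}_1$, so after one synchronous-enough step the four points cannot again form a quadrilateral with four distinct angles unless a crashed robot holds a corner; with only $f=2$ crashes and $k\geq 3$ claimed vertices, some vertex has no faulty robot and gets evacuated, contradicting convergence of $CH_t$ to $CH$.

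The main obstacle is precisely controlling the $\Psi$-branch of $\tau$: because $\tau$ can output a corner rather than an interior point, the generic ``every activation moves you at least $(1-\alpha)L/n$ inward'' estimate from Theorem~\ref{T3010} fails, and I instead need a combinatorial argument that the configuration cannot remain (or repeatedly return to) a $\Psi$-quadrilateral in a way that keeps a robot pinned to a non-faulty vertex. I expect to handle this by showing that once a non-faulty robot moves to $\bm{p}_1$ (or to $g(P_t)$), the resulting configuration is collinear-collapsing or breaks $\Psi$ permanently, so that after finitely many steps $\tau$ coincides with CoG on the relevant configurations and Theorem~\ref{T2010} (CoG solves FC($f$)) or the hull-shrinking argument finishes the job; the vertex-assignment claim (each vertex of the limit $CH$ carries a crashed robot) then follows exactly as in Theorem~\ref{T3010}.
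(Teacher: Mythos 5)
Your parts (1) and (2) follow essentially the same route as the paper: $\alpha=1$ because the $\Psi$-branch outputs an extreme point of $CH(P)$; and for compatibility of $\{\tau\}$ the key step is exactly the one you identify, namely that after the first time a non-faulty robot not sitting at $\bm{p}_1$ is activated in a $\Psi$-configuration, the configuration either drops to at most two distinct points (and freezes) or becomes a triangle with a doubled vertex, after which $\tau$ coincides with $g$ forever and Theorem~\ref{T3010} finishes. The paper makes this precise with a short case split (two or more movers versus exactly one mover, plus a fairness argument guaranteeing some eligible robot is eventually activated) and a one-line induction showing $\Psi$ can never recur because $g(P_t)$ is interior to the triangle; your sketch names all of these ingredients, so part (2) is sound as a plan.

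Part (3), however, contains a step that would fail as written: you crash the robots at $\bm{p}_1$ and $\bm{p}_4$. The two surviving robots sit at $\bm{p}_2$ and $\bm{p}_3$, and in a $\Psi$-configuration both $\tau$ and $\tau'$ send \emph{every} robot to a corner ($\bm{p}_1$ or $\bm{p}_4$ respectively), independently of where the computing robot stands. So under ${\cal FSYNC}$ the configuration after one round is supported on $\{\bm{p}_1,\bm{p}_4\}$ --- a line segment with robots converging to both endpoints --- and FC(2)-CP is in fact satisfied; there is no choice of target functions for the robots at $\bm{p}_2,\bm{p}_3$ that ``preserves the quadrilateral.'' The correct counterexample (and the paper's) crashes the robots at $\bm{p}_2$ and $\bm{p}_3$ instead, and gives $\tau$ to the robot at $\bm{p}_1$ and $\tau'$ to the robot at $\bm{p}_4$: each of those two computes its own position as its target (locally $(0,0)$), the crashed robots cannot move, hence $P_t=P_0$ for all $t$ and the hull remains a genuine quadrilateral. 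Your closing sentence (``fix all four robots so that $P_t=P_0$'') is the right goal, but it is inconsistent with the crash assignment you chose; swap which pair is faulty and the argument goes through.
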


\begin{proof}
Obviously, $\alpha(\tau) = \alpha(\tau') = 1$.
We first show that $\Phi$ is compatible with respect to 
the fault tolerant $(4,2)$-convergence problem 
to a convex 2-gon, i.e., to a line segment.  
A proof that $\Phi'$ is also compatible with respect to the problem is similar.  

Suppose that all robots take $\tau$ as their target functions, 
and let $\mathcal{E}: P_0, P_1, \dots$ be any execution 
starting from any initial configuration $P_0$.  
If $P_t$ does not satisfy $\Psi$ for all $t\geq 0$, then $\mathcal{E}$ converges to 
a line segment by Theorem~\ref{T3010}, since $\tau(P_t) = g(P_t)$ for all $t\geq 0$.

Without loss of generality, 
suppose that $P_0$ satisfies $\Psi$. 
If all robots activated at time $0$ are either a robot at $\bm{p}_1$,
or a robot which has been crashed, then $P_{1} = P_{0}$,
and hence, by the fairness of the scheduler,
a robot such that it is not at $\bm{p}_1$ and is not crashed  
is eventually activated.
Thus, without loss of generality,
we assume that at least one such robot is activated at time $0$.
(It is possible, since there are at most two faulty robots.)

If two or more such robots are activated,
then they move to $\bm{p}_1$, and $|\overline{P_{1}}| \leq 2$ holds.
Since $\tau(P_t) = (0,0)$ for $t\geq 1$, 
the execution converges to a line segment.

Suppose that exactly one such robot is activated.
Then, $CH(P_{1})$ is a triangle, 
in which two robots are at a vertex of $CH(P_{1})$.
By a simple induction on $t$,
for all $t \geq 1$,
$CH(P_t)$ is either a line segment or a triangle,
and does not satisfy $\Psi$.
Thus, $\mathcal{E}$ converges to a line segment, again by Theorem~\ref{T3010}.

Next, we show that $\Phi \cup \Phi'$ is not compatible with respect to 
the fault tolerant $(4,2)$-convergence problem to a line segment.
Suppose that $P_0$ satisfies $\Psi$.
Consider the case that 
the robot at $\bm{p}_1$ (resp. $\bm{p}_4$) takes $\tau$ (resp. $\tau'$) as its target function, 
and the robots at $\bm{p}_2$ and $\bm{p}_3$ crash at time 0.
Since $\tau(P_0) = (0,0)$ and $\tau'(P_0) = (0,0)$, $P_t = P_0$ for all $t\geq 1$, i.e., 
the execution does not converge to a line segment.
\end{proof}

\section{FC$(f)$-PO for $f \geq 2$}
\label{SFPOf} 

This section investigates the fault tolerant $(n,f)$-convergence problem
to $f$ points (FC$(f)$-PO) for $f \geq 2$.
At a glance, the FC$(f)$-PO looks to have properties similar to the FC$(f)$,
and readers might consider that the former would be easier than the latter,
since in the former, all non-faulty robots are not requested to converge 
to a single point.
On the contrary, 
we shall see that the FC$(f)$-PO is a formidable problem even if $f = 2$.

\subsection{Compatibility}
\label{SSCompatibility}

We show a difference between the FC$(f)$ and the FC$(f)$-PO for $f \geq 2$,
from the viewpoint of compatibility.

\begin{theorem}
\label{T4020} 
Let $f \geq 2$.
Any target function $\phi$ is not an algorithm for the \mbox{\rm FC$(f)$-PO},
if $0 \leq \alpha(\phi) < 1$,
or equivalently, $\Phi$ is not compatible with the \mbox{\rm FC$(f)$-PO},
if $0 \leq \alpha(\Phi) < 1$.
\end{theorem}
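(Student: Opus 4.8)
The plan is to refute compatibility by exhibiting, for any single target function $\phi$ with $\alpha(\phi) < 1$, one scenario (an initial configuration, a crash pattern, and a fair ${\cal SSYNC}$ schedule) in which all robots running $\phi$ fail FC$(f)$-PO. I would crash all $f$ faulty robots at $f$ \emph{distinct} points $\bm{a}_1,\ldots,\bm{a}_f$ placed in convex position, so they are precisely the $f$ vertices of a convex $f$-gon (for $f=2$ this is the two endpoints of a segment), and place the remaining $n-f\geq 1$ non-faulty robots anywhere inside this $f$-gon. Because the $f$ faulty robots have crashed at $f$ different positions, the FC$(f)$-PO specification forces every non-faulty robot to converge to one of the $\bm{a}_i$. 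The whole proof then consists in showing that, under $\alpha(\phi)<1$, no non-faulty robot can do this.

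First I would observe that the convex hull is frozen. Since $\alpha(\phi)<1$ implies $\phi(P)\in CH(P)$, we have $CH_{t+1}\subseteq CH_t$ exactly as in the proof of Theorem~\ref{T0030}; meanwhile the $f$ crashed robots pin the vertices, so $CH_t \supseteq CH(\{\bm{a}_1,\ldots,\bm{a}_f\})$ for all $t$. As the non-faulty robots start inside, $CH_0 = CH(\{\bm{a}_1,\ldots,\bm{a}_f\})=:CH$, and monotonicity gives $CH_t = CH$ for every $t$.

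The technical heart is a uniform-in-$t$ separation bound: whenever a non-faulty robot $r$ is activated at time $t$ it moves to $\phi(P_t)\in\alpha * CH$ with $\alpha=\alpha(\phi)$, and I claim every point of $\alpha * CH$ keeps a fixed positive distance from every vertex $\bm{a}_i$. To see this I would first bound $dist(\bm{a}_i,g(P_t))$ away from zero. For each $i$ choose a unit vector $\bm{u}_i$ in the interior of the normal cone of $CH$ at $\bm{a}_i$, so that $\bm{a}_i$ is the unique maximizer of $\langle\bm{u}_i,\cdot\rangle$ over $CH$, and set $M_i:=\langle\bm{u}_i,\bm{a}_i\rangle$. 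Because some other faulty robot sits permanently at a vertex $\bm{a}_j$ with $j\neq i$ (this uses $f\geq 2$), and every summand of $g(P_t)=\tfrac1n\sum_k \bm{x}_t(r_k)$ lies in $CH$, we get $\langle\bm{u}_i,\bm{a}_i-g(P_t)\rangle \geq \tfrac1n\big(M_i-\langle\bm{u}_i,\bm{a}_j\rangle\big)=\tfrac1n m_i$, where $m_i:=M_i-\langle\bm{u}_i,\bm{a}_j\rangle>0$. A short computation with the same functional then yields $dist(\bm{a}_i,\alpha * CH)\geq (1-\alpha)m_i/n$, so after every activation $r$ is at distance at least $\eta:=(1-\alpha)\min_i m_i/n>0$ from every $\bm{a}_i$.

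Finally, by fairness $r$ is activated infinitely often, hence it sits at distance $\geq\eta$ from every vertex infinitely often and therefore cannot converge to any $\bm{a}_i$, contradicting the FC$(f)$-PO requirement. This proves $\phi$ is not an algorithm. The set statement is then immediate: $\alpha(\Phi)<1$ forces $\alpha(\phi)<1$ for every $\phi\in\Phi$, and assigning one such $\phi$ to all robots is a legal assignment drawn from $\Phi$, so $\Phi$ is not compatible. I expect the main obstacle to be precisely the uniform-in-$t$ lower bound of the third paragraph—keeping the center of gravity bounded away from the vertices for all time—and this is exactly the place where having a second crashed robot (guaranteed by $f\geq 2$) is indispensable.
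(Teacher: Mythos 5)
Your proposal is correct and follows essentially the same route as the paper: the paper's proof also crashes the $f$ faulty robots at the vertices of a convex $f$-gon (a regular one, with the non-faulty robots at its center of gravity), notes that the hull is then frozen, and concludes that $\alpha(\phi)<1$ prevents any non-faulty robot from converging to a vertex. The only difference is that you supply the uniform-in-$t$ separation bound $dist(\bm{a}_i,\alpha * CH(P_t))\geq (1-\alpha)m_i/n$ explicitly (via the supporting functional and the second crashed vertex), whereas the paper leaves this step implicit.
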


\begin{proof}
To derive a contradiction,
we assume that there is an algorithm $\phi$ for the FC$(f)$-PO 
such that $\alpha(\phi) < 1$.

Let ${\cal R} = \{ r_1, r_2, \ldots , r_n \}$, where $n \geq f+1$.
Consider an initial configuration $P_0$
such that points $\bm{x}_0(r_1), \bm{x}_0(r_2), \ldots , \bm{x}_0(r_f)$ 
form a regular $f$-gon $B$,
and $\bm{x}_0(r_i)$ is the center of gravity of $B$,
where $i = f+1, f+2, \ldots , n$.

Consider any execution ${\cal E}: P_0, P_1, \ldots$,
assuming that the robots $r_1, r_2, \ldots, r_f$ have already crashed at time $0$.
Then $r_{f+1}$ must converge to one of the vertices of $B$.
Since $CH_t = B$ for all $t \geq 0$,
it is a contradiction since $\alpha(\phi) < 1$.
\end{proof}

\medskip

Recall that $\Phi = \{ \xi_{(\alpha, n)}\}$ 
and $\Phi' = \{\xi'_{(\alpha, n)}\}$ are compatible with respect to 
the FC$(f)$ for all $2\leq f \leq n-1$ and $0 \leq \alpha < 1$ by Theorem~\ref{T2020}.
Since $\alpha(\Phi) = \alpha$,
by Theorem~\ref{T4020}, we have:

\begin{corollary}
\label{O4010} 
Neither $\Phi$ nor $\Phi'$ is compatible with respect to the \mbox{\rm FC$(f)$-PO},
for all $f \geq 2$ and $0 \leq \alpha < 1$.
\end{corollary}

It goes without saying that not all target functions $\phi$ with $\alpha(\phi) = 1$ 
are algorithms for the FC$(2)$-PO. 
For example,

\begin{observation}
\label{O4026} 
CoG$_1$ is not an algorithm for the \mbox{\rm FC$(2)$-PO}.
\end{observation}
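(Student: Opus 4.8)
The plan is to exhibit a single explicit bad execution for $\mathrm{CoG}_1$ that violates the FC(2)-PO requirement. Recall that $\mathrm{CoG}_1(P) = (1-1)g(P) = (0,0)$ for every $P \in \mathcal{P}$, i.e.\ $\mathrm{CoG}_1$ is the trivial target function that never moves the activated robot (it returns the robot's own position). This is the key simplifying observation: under $\mathrm{CoG}_1$ \emph{no robot ever moves}, whether or not it is activated and whether or not it is faulty, so every configuration is stationary.

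Given this, I would first fix $n \geq 3$ and choose an initial configuration $P_0$ in which the robots occupy three or more distinct points. Concretely, take $n = 3$ robots placed at three distinct, non-collinear positions (or simply three distinct positions), and assume \emph{no} robot is faulty, so that the at-most-$f=2$ crash hypothesis is trivially satisfied. Since $\mathrm{CoG}_1$ leaves every activated robot in place, we have $P_t = P_0$ for all $t \geq 0$ regardless of the $\mathcal{SSYNC}$ schedule chosen by the adversary. Thus all robots remain at three distinct points forever.

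Next I would argue this violates FC(2)-PO. The FC($f$)-PO requires that all robots (including faulty ones) converge to at most $f$ points; here $f = 2$, so the robots must converge to at most two points. But the configuration $P_t = P_0$ is constant with $|\overline{P_t}| = 3 > 2$ distinct points for all $t$, so the set of robot positions does not converge to any collection of at most $2$ points. Hence $\mathrm{CoG}_1$ fails to solve FC(2)-PO, establishing that $\{\mathrm{CoG}_1\}$ is not an algorithm for the problem.

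This proof is essentially immediate and presents no real obstacle; the only point requiring care is to make sure the chosen $P_0$ is a legitimate instance. The mild subtlety is that FC($f$)-PO must hold for \emph{every} admissible fault pattern, including the fault-free case, so exhibiting failure in the fault-free run with three distinct occupied points suffices, and one need not even invoke crashes. One should simply note that $\mathrm{CoG}_1$ is stationary and that a $3$-point initial configuration never collapses to $2$ or fewer points, which directly contradicts the convergence-to-$f=2$-points requirement.
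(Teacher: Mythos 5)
Your proof is correct and is essentially the intended argument: the paper states Observation~\ref{O4026} without proof, and the natural justification is exactly yours, mirroring the paper's proof of Observation~\ref{O0010} (CoG$_1$ returns $(0,0)$, the robot's own position, so it is the stationary target function). Taking a fault-free execution from an initial configuration with three distinct occupied points, the configuration never changes and hence cannot converge to at most two points, which violates FC(2)-PO.
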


\subsection{Algorithm for FC$(2)$-PO}
\label{SSFC2PO}

In Section~\ref{SSCompatibility},
we showed that, for any $f \geq 2$, 
there is no FC($f$)-PO algorithm whose scale is less than 1.
It is a clear difference between the FC$(f)$-PO and the FC($f$),
which is solved, e.g., by CoG$_{\alpha}$ for any $0 \leq \alpha < 1$.
This section proposes an algorithm $\psi_{(n,2)}$ with $\alpha(\psi_{(n,2)})=1$
for the FC$(2)$-PO, and shows its correctness.
Unfortunately, proposing an algorithm for FC($f$)-PO for an $f \geq 3$ 
is left as a future work.

\subsubsection{Algorithm $\psi_{(3,2)}$}
\label{SSSA32}

We start with proposing an algorithm $\psi_{(3,2)}$ for solving the
fault tolerant $(3,2)$-convergence problem to two points.
Let $>$ be a lexicographic order on $R^2$ defined as follows:
For two distinct points $\bm{p} = (p_x,p_y)$ and $\bm{q} = (q_x,q_y)$ in $R^2$,
$\bm{p} > \bm{q}$ if and only if
either (i) $p_x > q_x$, or (ii) $p_x = q_x$ and $p_y > q_y$ holds.\footnote{
We use the same notation $>$ to denote the lexicographic order on $R^2$
and the order $>$ on $R$ to save the number of notations.}

We classify configurations $P = \{ \bm{p}_1, \bm{p}_2, \bm{p}_3 \}$ 
such that $(0,0) \in P$ into three types G, L, and T:

\begin{description}
\item[G(oal):] 
There are $i,j (i \not= j)$ such that $\bm{p}_i = \bm{p}_j$.

\item[L(ine):]
Three points $\bm{p}_1, \bm{p}_2, \bm{p}_3$ are distinct,
and $\bm{p}_2 \in \overline{\bm{p}_1\bm{p}_3}$.
We assume $\bm{p}_1 > \bm{p}_3$. 

\item[T(riangle):]
$P$ forms a triangle (not including a line and a point).
We assume that $\bm{p}_1, \bm{p}_2, \bm{p}_3$ appear in this order counter-clockwise
on the boundary of the triangle.
\end{description}

We describe target function $\psi_{(3,2)}$.

\medskip
\noindent
{\bf [Target function $\psi_{(3,2)}$]}
\begin{enumerate}
\item
If $P$ is type G, $\psi_{(3,2)}(P) = (0,0)$.

\item
When $P$ is type L:
\begin{description}
 \item[(a)] 
If $\bm{p}_1 = (0,0)$ or $\bm{p}_3 = (0,0)$,
then $\psi_{(3,2)}(P) = \bm{p}_2/2$.
\item[(b)]
If $\bm{p}_2 = (0,0)$, $\psi_{(3,2)}(P) = \bm{p}_1/2$.
\end{description}

\item
When $P$ is type T:
If $\bm{p}_i = (0,0)$,
then $\psi_{(3,2)}(P) = \bm{p}_{i+1}/2$,
where $\bm{p}_4 = \bm{p}_1$.
\end{enumerate}

We show the correctness of $\psi_{(3,2)}$.
Let ${\cal R} = \{ r_1, r_2, r_3\}$.
Recall that the $x$-$y$ local coordinate system $Z_i$ of each robot $r_i$ 
is right-handed.

\begin{lemma}
\label{L4030} 
Target function $\psi_{(3,2)}$ satisfies $\alpha(\psi_{(3,2)}) = 1$,
and is an \mbox{\rm FC(2)-PO} algorithm for $n = 3$.
\end{lemma}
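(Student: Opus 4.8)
The plan is to verify the scale by inspection and then prove correctness by a case analysis on the number of crashed robots, using throughout that the convex hull is monotone. Every branch of $\psi_{(3,2)}$ outputs a point of $CH(P)$: the origin $(0,0)$ in type G, a point $\bm{p}_2/2$ or $\bm{p}_1/2$ of the segment in type L, or an edge-midpoint $\bm{p}_{i+1}/2$ of the triangle in type T. Hence $\psi_{(3,2)}(P)\in CH(P)$, which gives $\alpha(\psi_{(3,2)})\le 1$ and, crucially, $CH_{t+1}\subseteq CH_t$ along every execution. Taking a type-L configuration in which the middle point $\bm{p}_2$ tends to the endpoint $\bm{p}_1$ makes $\alpha(\psi_{(3,2)},P)\to 1$, so $\alpha(\psi_{(3,2)})=1$. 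Since $CH_t$ decreases, it converges to a limit $CH_\infty$ which, as $n=3$, is a triangle, a segment, or a point; the whole argument consists in ruling out the ``bad'' limits and reading off where the robots settle. I also note that type G is a fixed point of $\psi_{(3,2)}$: once two robots coincide nobody moves, so the configuration is frozen at $\le 2$ distinct points.

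I would first treat the decisive case of two crashed robots, at positions $\bm{a},\bm{b}$, with a single live robot $r$. If $\bm{a}=\bm{b}$ the configuration is type G and frozen, giving $\le 2$ points. If $\bm{a}\neq\bm{b}$, the type of the configuration is invariant, because $\bm{a},\bm{b}$ are fixed and $r$'s moves preserve collinearity and non-collinearity. In type T, $r$ moves to the midpoint toward the next counter-clockwise vertex, halving its distance to one of $\bm{a},\bm{b}$ while staying off the line $\bm{a}\bm{b}$, and converges to that crashed vertex; in type L, $r$ (whether it is the middle point or an endpoint) moves by rule~2 toward one of $\bm{a},\bm{b}$ and converges to it; in type G it already sits on a crashed point. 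By fairness $r$ is activated infinitely often, so $r$ converges to one of $\bm{a},\bm{b}$, exactly as FC(2)-PO demands.

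The remaining work is the cases of $c\le 1$ crashed robots, where two or three robots move and the limit $CH_\infty$ must be pinned down. The engine here is an edge/segment-midpoint displacement lemma: whenever the configuration is type T or type L and $CH_\infty\subseteq CH_t\subseteq N_\epsilon(CH_\infty)$ for small $\epsilon$, an activated live robot sitting near a vertex of $CH_\infty$ jumps to a midpoint lying a definite distance (of order the minimum edge length of $CH_\infty$) from every vertex, and thus leaves all vertex-neighbourhoods. If the vacated vertex carries no crashed robot, then $CH_{t+1}$ no longer contains that vertex, contradicting $CH_\infty\subseteq CH_{t+1}$. Running this against each vertex not occupied by a crash, of which there are at least $3-c\ge 2$ when $CH_\infty$ is a triangle and at least one free endpoint when $CH_\infty$ is a nondegenerate segment, rules out those limits and leaves $CH_\infty$ either a point (all three robots converge to it) or a segment pinned only by crashed robots. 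The subtle ingredient for the segment case is the type-L ``merge'' phenomenon: the top two robots each aim at their common midpoint, so the gap between them halves at every activation of either and hence tends to $0$, while their top/middle/bottom roles stay fixed and the bottom robot chases the middle; combined with fairness this forces the trio either to reach type G, freezing at $\le 2$ points, or to collapse to a single point, never to three.

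The step I expect to be the main obstacle is precisely this last one. Because $\alpha(\psi_{(3,2)})=1$, Theorem~\ref{T3010} and the ``never returns'' argument of Theorem~\ref{T0030} cannot be invoked as black boxes: a scale-$1$ function is allowed to send a robot back to the neighbourhood of an extreme point, so one cannot argue abstractly that live robots permanently vacate the vertices. The proof must instead exploit the concrete geometry of $\psi_{(3,2)}$, namely that in every non-degenerate type the target is a segment- or edge-midpoint bounded away from the extreme points by a constant fraction of the current diameter, together with fairness and the stability of the type-L roles, to show that any live robot stranded at an extreme point is driven off it infinitely often. Making the midpoint displacement quantitative uniformly as the hull degenerates, so that type-T triangles that merely flatten toward a segment are also handled, and checking that in the one-crash case a single crashed robot can pin at most one extreme point, are the places where care is needed.
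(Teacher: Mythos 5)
Your scale computation, your treatment of type G, your two-crash case, and your type-L ``merge'' analysis (stable top/middle/bottom roles, the chased pair halving their gap, the third robot pursuing the middle) are all correct and essentially coincide with what the paper does. The divergence, and the gap, is in the remaining case: a type-T execution with at most one crash. The paper handles this directly: in type T each robot always targets the midpoint between itself and the \emph{same} next counter-clockwise robot (the cyclic order is invariant because $Z_i$ is right-handed and midpoints of edges preserve orientation), so each robot persistently halves its distance to a fixed successor; with at most one crash this immediately forces all live robots to converge to the crash point (or, with no crash, to a common point), with no analysis of the limit hull at all. You instead route everything through ``$CH_t$ converges to $CH_\infty$; eliminate bad limits by a midpoint-displacement lemma,'' and that lemma is false in exactly the case you flag: when the $CH_t$ are nondegenerate triangles flattening onto a segment $\overline{\bm{a}\bm{b}}$, a live robot near $\bm{a}$ whose next counter-clockwise vertex is \emph{also} near $\bm{a}$ moves only $O(\epsilon)$ and does not leave $N_\epsilon(\bm{a})$. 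So ``an activated live robot near a vertex jumps a definite distance, of order the minimum edge length of $CH_\infty$, from every vertex'' has no content when $CH_\infty$ has only two vertices and the relevant edge of $CH_t$ has length $O(\epsilon)$. Ruling out this segment limit within your framework is possible (one must argue about the cyclic order to peel the robots off one endpoint one at a time), but you have not supplied that argument, and you explicitly say you do not see how to make the displacement bound uniform. That is the missing step.

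The fix is to abandon the limit-hull machinery for type T and use the invariance of the counter-clockwise successor relation: once $P_t$ is type T it stays type T, robot $r_i$ always halves its distance to $r_{i+1}$ upon activation, the successor's own displacements are summable, so $r_i$ converges to $\lim r_{i+1}$; chaining this around the $3$-cycle pins every live robot to the (at most two) crash points or to a single common point. This also disposes of your secondary worry --- that a single crashed robot pins at most one extreme point --- because the argument never needs to count pinned vertices of a limit hull in the first place.
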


\begin{proof}
We observe $\alpha(\psi_{(3,2)}) = 1$.
Since $\psi_{(3,2)}(P) \in CH(P)$, 
$\alpha(\psi_{(3,2)}) \leq 1$.
For any $0 < a < 1$,
there is a configuration $P$ of type L such that $\bm{p}_1 > \bm{p}_3$, 
$\bm{p}_2 = (0,0)$, $dist(\bm{p}_1,\bm{p}_3) = 1$, 
and $dist(\bm{p}_1,\bm{p}_2) = a$ hold.
Then $dist(\bm{p}_1,g(P)) = (1+a)/3$,
and $\alpha(\psi_{(3,2)},P) = 1 - 3a/(2(1+a))$.
By definition,
$\alpha(\psi_{(3,2)}) \geq \lim_{a \rightarrow 0} (1 - 3a/(2(1+a))) = 1$.
Thus $\alpha(\psi_{(3,2)}) = 1$.

Let ${\cal E}: P_0, P_1, \ldots$ be any execution starting 
from any initial configuration $P_0$.
We show that each robot converges to one of at most two convergence points.

Let $Q_t^{(i)}$ be the multiset of the robots' positions
that $r_i$ identifies in Look phase in $Z_i$ at time $t$.
That is, $\gamma_i(Q_t^{(i)}) = P_t$,
where $\gamma_i$ is the coordinate transformation from $Z_i$ to $Z_0$.
Of course, $Q_t^{(i)}$ and $P_t$ are similar,
and have the same type.
Keep in mind that $r_i$ computes $\psi_{(3,2)}(Q_t^{(i)})$,
not $\psi_{(3,2)}(P_t)$,
and $Q_t^{(i)} \not= Q_t^{(j)}$ occurs in general.

If $P_t$ (and $Q_t^{(i)}$) is type G, 
$\psi_{(3,2)}(Q_t^{(i)}) = (0,0)$ for all robot $r_i$,
which means that $r_i$ does not move,
since $(0,0)$ is the current position of $r_i$ in $Z_i$.
Since $|\overline{P_t}| \leq 2$ by the definition of type G,
and each robot $r_i$ does not move after $t$
(regardless of whether or not it is faulty), 
it converges to one of at most two convergence points.

Suppose that $P_t$ (and $Q_t^{(i)}$) is type L.
Without loss of generality,
we assume $P_t = \{ \bm{x}_t(r_1), \bm{x}_t(r_2), \bm{x}_t(r_3) \}$,
where $\bm{x}_t(r_1), \bm{x}_t(r_2), \bm{x}_t(r_3)$ are distinct,
and $\bm{x}_t(r_2) \in \overline{\bm{x}_t(r_1)\bm{x}_t(r_3)}$.

Suppose that $\gamma^{-1}_i(\bm{x}_t(r_j)) = \bm{y}^{(i)}_t(r_j)$,
for all $i,j = 1,2,3$.
That is, $Q^{(i)}_t = \{ \bm{y}^{(i)}_t(r_1), \bm{y}^{(i)}_t(r_2), \bm{y}^{(i)}_t(r_3) \}$,
where $\bm{y}^{(i)}_t(r_i) = (0,0)$.

The target position of $r_1$ (resp. $r_3$) is the middle point of
$\bm{x}_t(r_1)$ (resp. $\bm{x}_t(r_3)$) and $\bm{x}_t(r_2)$,
and that of $r_2$ is the middle point of $\bm{x}_t(r_2)$ 
and $\bm{x}_t(r_j)$, where $j$ is either 1 or 3,
and is determined from $Q^{(2)}_t$ (not from $P_t$),
since $r_2$ computes $\psi_{(3,2)}(Q^{(2)}_t)$ (not $\psi_{(3,2)}(P_t)$) in Compute phase;
$j = 1$ if $\bm{y}^{(2)}_t(r_1) > \bm{y}^{(2)}_t(r_3)$,
and otherwise, $j = 3$.
(Since $P_t$ is type L (but not G), 
$\bm{y}^{(2)}_t(r_1) = \bm{y}^{(2)}_t(r_3)$ does not occur.)

We observe the following:
$P_{t'}$ is type L for all $t' \geq t$.
Moreover, $\bm{x}_{t'}(r_2) \in \overline{\bm{x}_{t'}(r_1)\bm{x}_{t'}(r_3)}$,
i.e., the relative positions of $r_1, r_2$, and $r_3$ do not change,
unless some of them match and the type changes to $G$.
Thus if $\bm{y}^{(2)}_t(r_1) > \bm{y}^{(2)}_t(r_3)$,
then $\bm{y}^{(2)}_{t'}(r_1) > \bm{y}^{(2)}_{t'}(r_3)$ for all $t' > t$.
Therefore, if the target position of $r_2$ at time $t$ is
the middle point of $\bm{x}_t(r_2)$ and $\bm{x}_t(r_j)$,
so is at time $t'$ for all $t' > t$.
Thus $r_2$ converges either to $r_1$ or to $r_3$,
as long as $r_2$ is not faulty.

If $r_2$ is faulty,
either $r_1$ or $r_3$ is not faulty,
and it converges to $r_2$.

Suppose that $P_t$ is type T.
For $i = 1,2,3$, 
let $Q^{(i)}_t = \{ \bm{q}^{(i)}_1, \bm{q}^{(i)}_2, \bm{q}^{(i)}_3 \}$,
where we assume that $\bm{q}^{(i)}_1, \bm{q}^{(i)}_2, \bm{q}^{(i)}_3$ appear 
in this order counter-clockwise on the boundary of the triangle.\footnote{
$\bm{q}^{(i)}_j$ may not be the position $\bm{y}^{(i)}_t(r_j)$ of robot $r_j$ 
at time $t$ observed by $r_i$ in $Z_i$.}
Since $Z_i$ is right-handed, 
once activated,
a robot $r_i$ recognizes $j$ such that $\bm{q}^{(i)}_j = (0,0)$,
and moves to the middle point of the current position $(0,0)$
and the next point $\bm{q}^{(i)}_{j+1}$ in $Z_i$,
which is, in $Z_0$, the middle point of the current position of $r_i$ 
and the next point counter-clockwise in $P_t$.
Thus $P_{t'}$ is type T for all $t' \geq t$.

Now, it is obvious that
(1) if there is no faulty robot,
then all robots converge to a point,
(2) if exactly one robot crashes at a point $\bm{p}$,
then the other two robots converge to $\bm{p}$,
(3) if exactly two robots crash at points $\bm{p}$ and $\bm{p}'$,
then the third robot converges either to $\bm{p}$ or $\bm{p}'$.
\end{proof}

\subsubsection{Algorithm $\psi_{(n,2)}$}
\label{SSSAn2}

We propose an algorithm $\psi_{(n,2)}$ to solve the FC(2)-PO for $n \geq 4$,
and show its correctness.
By combining $\psi_{(3,2)}$ and $\psi_{(n,2)}$,
we obtain an algorithm for the FC$(2)$-PO.

Let LN$_{(n,2)}$ (for $n \geq 4$) be an algorithm to solve the FC(2)-PO
in such a way that $CH(P_t) \subseteq CH(P_0)$ for all $t \geq 0$,
provided that $CH(P_0)$ is a line segment.
Assuming the existence of LN$_{(n,2)}$,
we propose $\psi_{(n,2)}$ and show its correctness.
Later, we propose LN$_{(n,2)}$ and show its correctness.

In $\psi_{(3,2)}$, 
we used lexicographic order $>$ to compare positions $\bm{p}$ and $\bm{q}$.
Let $\sqsubset$ be a lexicographic order on $(R^2)^n$ defined as follows:
For distinct multisets of $n$ points
$P = \{ \bm{p}_1, \bm{p}_2, \ldots, \bm{p}_n \}$
and $Q = \{ \bm{q}_1, \bm{q}_2, \ldots, \bm{q}_n \}$,
where for all $i = 1, 2, \ldots , n-1$, 
$\bm{p}_i \leq \bm{p}_{i+1}$ and $\bm{q}_i \leq \bm{q}_{i+1}$ hold,
$P \sqsubset Q$, if and only if there is an $i (1 \leq i \leq n-1)$
such that (i) $\bm{p}_j = \bm{q}_j$ for all $j = 1, 2, \ldots, i-1$,\footnote{
We assume $\bm{p}_0 = \bm{q}_0$.}
and (ii) $\bm{p}_i < \bm{q}_i$.

Although these orders are easy to understand,
they have a drawback for our purpose,
since robots cannot consistently compute
lexicographic orders $<$ and $\sqsubset$.
To see this fact,
let $\bm{x}$ and $\bm{y}$ be distinct points in $\overline{P}$ in $Z_0$.
Then both $\gamma^{-1}_i(\bm{x}) < \gamma^{-1}_i(\bm{y})$
and $\gamma^{-1}_i(\bm{x}) > \gamma^{-1}_i(\bm{y})$ can occur,
depending on $Z_i$.
Thus robots cannot consistently compare $\bm{x}$ and $\bm{y}$ using $>$.
And it is true for $\sqsubset$, as well.
In $\psi_{(n,2)}$, 
we introduce and use an order $\succ$ that all robots can consistently compute.

Let $P = \{ \bm{p}_1, \bm{p}_2, \ldots , \bm{p}_n \}$ be a multiset of $n$ points,
and $\overline{P} = \{\bm{q}_1, \bm{q}_2, \ldots , \bm{q}_m \}$ 
be the set of distinct points in $P$.
We denote the multiplicity of $\bm{q}$ in $P$ by $\mu_P(\bm{q})$,
i.e., $\mu_P(\bm{q}) = |\{ i: \bm{p}_i = \bm{q} \in P \}|$.
We identify $P$ with a pair $(\overline{P},\mu_P)$,
where $\mu_P$ is a labeling function to associate 
label $\mu_P(\bm{q})$ with each element $\bm{q} \in \overline{P}$.
Let $G_P$ be the rotation group $G_{\overline{P}}$ of $\overline{P}$ 
about $\bm{o}_P$ preserving $\mu_P$,
where $\bm{o}_P$ is the center of the smallest enclosing circle of $P$.
The order $|G_P|$ of $G_P$ is denoted by $k_P$.
We define $k_P = 0$, 
if $|\overline{P}| = 1$, i.e., if $\overline{P} = \{ \bm{o}_P \}$.\footnote{
The symmetricity of $P$ is $\sigma(P) = GCD(k_P, \mu_P(\bm{o}_P))$ \cite{SY99}.}

For example,
let $P_1 = \{ \bm{a}, \bm{b}, \bm{c} \}$, $P_2 = \{ \bm{a}, \bm{b}, \bm{c}, \bm{c}\}$, 
$P_3 = \{ \bm{a}, \bm{a}, \bm{b}, \bm{b}, \bm{c}, \bm{c} \}$,
and $P_4 = \{ \bm{a}, \bm{b}, \bm{c}, \bm{o}, \bm{o} \}$, 
where a triangle $\bm{abc}$ is equilateral,
and $\bm{o}$ is the center of the smallest enclosing circle of triangle $\bm{abc}$.
Then $k_{P_1} = k_{P_3} = k_{P_4} = 3$ and $k_{P_2} = 1$.

Suppose that $P$ is a configuration in $Z_0$.
When activated, a robot $r_i$ identifies the robots' positions 
$Q^{(i)} = \gamma^{-1}_i(P)$ in $Z_i$ in Look phase.
Since $P$ and $Q^{(i)}$ are similar,
$k_P = k_{Q^{(i)}}$, i.e., all robots can consistently compute $k_P$.

We introduce a total order $\succ$ on $\overline{P}$,
which is denoted by $\succ_P$, 
in such a way that all robots can agree on the order,
provided $k_P = 1$.
A key idea behind the definition of $\succ_P$ is to use, instead of $Z_i$,
an $x$-$y$ coordinate system $\Xi_i$ which is computable 
for any robot $r_j$ from $Q^{(j)}$.

Let $\Gamma_P(\bm{q}) \subseteq \overline{P}$ be the orbit of $G_P$ 
through $\bm{q} \in \overline{P}$. 
Then $|\Gamma_P(\bm{q})| = k_P$ if $\bm{q} \not= \bm{o}_P$,
and $\mu_P(\bm{q}') = \mu_P(\bm{q})$ if $\bm{q}' \in \Gamma_P(\bm{q})$.
If $\bm{o}_P \in \overline{P}$, $\Gamma_P(\bm{o}_P) = \{ \bm{o}_P \}$.
Let $\Gamma_P = \{ \Gamma_P(\bm{q}): \bm{q} \in \overline{P} \}$.
Then $\Gamma_P$ is a partition of $\overline{P}$.
Define $x$-$y$ coordinate system $\Xi_{\bm{q}}$ for any point 
$\bm{q} \in \overline{P} \setminus \{ \bm{o}_P \}$.
The origin of $\Xi_{\bm{q}}$ is $\bm{q}$,
the unit distance is the radius of the smallest enclosing circle of $P$,
the $x$-axis is taken so that it goes through $\bm{o}_P$,
and it is right-handed.
Let $\gamma_{\bm{q}}$ be the coordinate transformation from $\Xi_{\bm{q}}$ to $Z_0$.
Then the view $V_P(\bm{q})$ of $\bm{q}$ is defined to be $\gamma^{-1}_{\bm{q}}(P)$.
Obviously $V_P(\bm{q}') = V_P(\bm{q})$ (as multisets), 
if and only if $\bm{q}' \in \Gamma_P({\bm{q}})$.
Let $View_P = \{ V_P({\bm{q}}): \bm{q} \in \overline{P} \setminus \{ \bm{o}_P \} \}$.

Any robot $r_i$, in Compute phase from $Q^{(i)}$, 
can compute $\Xi_{\bm{q}}$ and $V_{Q^{(i)}}(\bm{q})$
for each $\bm{q} \in \overline{Q^{(i)}} \setminus \{ \bm{o}_{Q^{(i)}} \}$,
and thus $View_{Q^{(i)}}$.
Since $P$ and $Q^{(i)}$ are similar,
by the definition of $\Xi_{\bm{q}}$, $View_{P} = View_{Q^{(i)}}$,
which implies that all robots $r_i$ can consistently compute $View_{P}$.

We define $\succ_P$ on $\Gamma_P$ using $View_P$.
For any distinct orbits $\Gamma_P(\bm{q})$ and $\Gamma_P(\bm{q}')$,
$\Gamma_P(\bm{q}) \succ_P \Gamma_P(\bm{q}')$, if and only if
one of the following conditions hold:
\begin{enumerate}
\item 
$\mu_P(\bm{q}) > \mu_P(\bm{q}')$.

\item
$\mu_P(\bm{q}) = \mu_P(\bm{q}')$ 
and $dist(\bm{q}, \bm{o}_P) < dist(\bm{q}', \bm{o}_P)$ hold,
where $dist(\bm{x}, \bm{y})$ is the Euclidean distance 
between $\bm{x}$ and $\bm{y}$.

\item
$\mu_P(\bm{q}) = \mu_P(\bm{q}')$,
$dist(\bm{q}, \bm{o}_P) = dist(\bm{q}', \bm{o}_P)$, and
$V_P(\bm{q}) \sqsupset V_P(\bm{q}')$ hold.
\footnote{Since $dist(\bm{o}_P, \bm{o}_P) = 0$,
$V_P({\bm{q}})$ is not compared with $V_P(\bm{o}_P)$ with respect to $\sqsupset$.}
\end{enumerate}
Then $\succ_P$ is a total order on $\Gamma_P$.
If $k_P = 1$, since $\Gamma_P(\bm{q}) = \{ \bm{q} \}$ 
for all $\bm{q} \in \overline{P}$,
we regard $\succ_P$ as a total order on $\overline{P}$
by identifying $\Gamma_P({\bm{q}})$ with $\bm{q}$.
For a configuration $P$ (in $Z_0$), from $Q^{(i)}$ (in $Z_i$),
each robot $r_i$ can consistently compute $k_P = k_{Q^{(i)}}$,
$\Gamma_P = \Gamma_{Q^{(i)}}$, and $View_P = View_{Q^{(i)}}$, 
and hence $\succ_P = \succ_{Q^{(i)}}$.
Thus, all robots can agree on, e.g., 
the largest point $\bm{q} \in \overline{P}$ with respect to $\succ_P$.

Since $k_P = 0$ implies $m_P = 1$, i.e., the FC(2)-PO has been solved,
we may assume $k_P \geq 1$ in what follows.
We partition the set of all multisets $P = \{ \bm{p}_1, \bm{p}_2, \ldots , \bm{p}_n \}$ 
for all $n \geq 4$ into six types G, L, T, I, S, and Z.
Let $m_P = |\overline{P}|$.

\begin{description}
\item[G(oal):]
$m_P \leq 2$.

\item[L(ine):]
$CH(P)$ is a line segment.

\item[T(riangle):]
$m_P = 3$ and $CH(P)$ is a triangle.

\item[I(nside):]
$m_P = 4$, $CH(P)$ is a triangle,
and $\bm{o}_P \in P$.

\item[S(ide):]
$m_P = 4$, $CH(P)$ is a triangle,
and $\bm{M}_P \in P$,
where $\bm{M}_P$ is the middle point of a longest side of $CH(P)$.

\item[Z:]
$P$ does not belong to the above five types.
\end{description}

Now we define target function $\psi_{(n,2)}$.
Let $\mathcal P$ be the set of multisets that contains at least one $(0,0)$, 
which is the domain of a target function $\psi_{(n,2)}$.
Algorithm LN$_{(n,2)}$, which we assume to exist, transforms any configuration
of type L into a configuration of type G, 
without going through a configuration whose type is not L.

\medskip
\noindent
{\bf [Target function $\psi_{(n,2)}$]}
\begin{enumerate}
\item
When $P$ is type Z:
\begin{description}
\item[(a)] 
If $k_P \geq 2$, $\psi_{(n,2)}(P) = \bm{o}_P$.
\item[(b)]
If $k_P = 1$, $\psi_{(n,2)}(P) = \bm{a}_P$,
where $\bm{a}_P \in \overline{P}$ is the largest point with respect to $\succ_P$,
which is well-defined since $k_P = 1$.
\end{description}

\item
If $P$ is type L, invoke LN$(n,2)$.

\item
When $P$ is type T,
let $\overline{P} = \{ \bm{a}, \bm{b}, \bm{c} \}$.
\begin{description}
\item[(a)]
If triangle $\bm{abc}$ is equilateral, $\psi_{(n,2)}(P) = \bm{o}_P$.

\item[(b)]
If triangle $\bm{abc}$ is not equilateral, $\psi_{(n,2)}(P) = \bm{M}_P$,
where $\bm{M}_P$ is the middle point of the longest side.
If there are two longest sides,
$\bm{M}_P$ is the middle point of the side next to the shortest side counter-clockwise.
\end{description}

\item
If $P$ is type I, $\psi_{(n,2)}(P) = \bm{o}_P$.

\item
If $P$ is type S, $\psi_{(n,2)}(P) = \bm{M}_P$
(which is defined in the definition of type S).

\end{enumerate}

To show that target function $\psi_{(n,2)}$ is an algorithm 
to solve the  FC(2)-PO for $n \geq 4$, 
provided the existence of LN$_{(n,2)}$,
we need the following technical lemma.
A multiset $P$ is said to be {\em linear}
if $CH(P)$ is a line segment.

\begin{lemma}
\label{A1005} 
Let $A$ be a {\bf set} (not a multiset) of points satisfying 
(1) $A$ is not linear,
(2) $k_A \geq 2$, 
and (3) $\bm{o}_A \not\in A$.
For any point $\bm{a} \in A$,
let $B = (A \cup \{ \bm{o}_A \}) \setminus \{ \bm{a} \}$,
i.e., $B$ is constructed from $A$ by replacing $\bm{a} \in A$ with $\bm{o}_A$.
Then $k_B = 1$.
\end{lemma}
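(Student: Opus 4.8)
The plan is to argue by contradiction: assume $k_B \geq 2$ and extract conflicting descriptions of the center $\bm{o}_B$. The engine is a dichotomy between two ways of locating $\bm{o}_B$. On one hand, any nontrivial rotation preserving a finite planar set fixes that set's centroid, and the only fixed point of a nontrivial rotation is its center; hence $k_A \geq 2$ forces the centroid of $A$ to equal $\bm{o}_A$, and (under the contradiction hypothesis) $k_B \geq 2$ forces the centroid of $B$ to equal $\bm{o}_B$. Since $B$ is obtained from $A$ by deleting $\bm{a}$ and inserting $\bm{o}_A$, a one-line computation gives $\bm{o}_B = \bm{o}_A + \frac{1}{|A|}(\bm{o}_A - \bm{a})$, which differs from $\bm{o}_A$ because $\bm{a} \neq \bm{o}_A$ (recall $\bm{o}_A \notin A$). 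So it suffices to show, in as many cases as possible, that the smallest enclosing circle of $B$ is \emph{still} centered at $\bm{o}_A$, since that already contradicts the centroid computation.

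First I would let $\rho$ be the radius of the smallest enclosing circle $D$ of $A$, so that $A \subseteq D$ and $\bm{o}_A \in B \subseteq D$, and invoke the standard characterization that $D$ is the smallest enclosing circle of $B$ exactly when $\bm{o}_A$ lies in the convex hull of the points of $B$ on $\partial D$. These boundary points are precisely the boundary points $\partial A$ of $A$ (a union of $G_A$-orbits, each a regular $k_A$-gon and hence surrounding $\bm{o}_A$), with $\bm{a}$ removed when $\bm{a} \in \partial A$. If $\bm{a}$ is interior to $D$, or if $\partial A$ contains at least two orbits, or if $\partial A$ is a single $k_A$-gon with $k_A \geq 4$, then the remaining boundary points still surround $\bm{o}_A$: deleting one vertex of a regular $k_A$-gon leaves a largest angular gap of $4\pi/k_A \leq \pi$ for $k_A \geq 4$, so $\bm{o}_A$ stays in their convex hull. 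Thus the enclosing circle is unchanged, $\bm{o}_B = \bm{o}_A$, and the centroid computation is contradicted. This kills every case except two rigid families.

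The only survivors are $\partial A$ a single antipodal pair ($k_A = 2$) and $\partial A$ a single equilateral triangle ($k_A = 3$) — exactly the configurations in which deleting the boundary point $\bm{a}$ genuinely shrinks the enclosing circle and moves its center. For these I would argue directly: under $k_B \geq 2$ the point $\bm{o}_A \in B$ has a $G_B$-orbit of size $k_B \geq 2$, so $B$ contains a further point at distance $dist(\bm{o}_A, \bm{o}_B)$ from $\bm{o}_B$, and that point lies in $A$. Enumerating the very few candidate points of $B$ and comparing their distances to $\bm{o}_B = \bm{o}_A + \frac{1}{|A|}(\bm{o}_A - \bm{a})$ forces a relation among the off-axis points of $A$ (they would have to be equidistant from $\bm{o}_B$, which pins them symmetrically about the line through $\bm{o}_A$ and $\bm{a}$) that is incompatible with $A$ being non-linear. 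Hence $k_B = 1$ in these cases as well.

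The main obstacle is precisely that deleting $\bm{a}$ can move the center of the smallest enclosing circle, so $G_A$ and $G_B$ are rotation groups about \emph{different} centers and cannot be compared naively. The centroid-versus-enclosing-circle dichotomy is what tames this: it confines all the trouble to the two rigid single-orbit configurations, and there the hypothesis that $A$ is not linear supplies the final contradiction. The points I would check most carefully are the boundary-point characterization of the smallest enclosing circle, the angular-gap bound $4\pi/k_A$, and that the smallest admissible configurations (where $|A|$ is near its minimum) are handled uniformly by the same distance computation.
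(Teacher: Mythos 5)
Your first half is both correct and a genuinely cleaner route than the paper's for the generic configurations. The paper also finishes with the centroid identity (its part (II)): a nontrivial rotation preserving a finite set fixes its centroid, so $k_A\geq 2$ and $k_B\geq 2$ force the centroids to be $\bm{o}_A$ and $\bm{o}_B$, and replacing $\bm{a}$ by $\bm{o}_A$ shifts the centroid. But the paper first proves $\bm{o}_A=\bm{o}_B$ by a long case analysis, whereas you get the exact displacement $\bm{o}_B=\bm{o}_A+\frac{1}{|A|}(\bm{o}_A-\bm{a})\neq\bm{o}_A$ up front and then only need to show the smallest enclosing circle does not move. Your contact-point characterization (center in the convex hull of $B\cap\partial D$, angular gap $4\pi/k_A\leq\pi$ for $k_A\geq 4$, an untouched orbit when there are two or more boundary orbits, nothing removed when $\bm{a}$ is interior) correctly kills every case except $\bm{a}$ lying on a unique boundary orbit of size $k_A\in\{2,3\}$ — which is exactly where the paper's cases (A) and (B) live.

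The gap is in those two residual cases, and it is not a detail: they are the hard core of the lemma. Your proposed finish — enumerate "the very few candidate points of $B$", deduce that the off-axis points of $A$ must be equidistant from $\bm{o}_B$, and contradict non-linearity — does not go through. The $G_B$-orbit of $\bm{o}_A$ only hands you $k_B-1$ new points of $A$ at distance $dist(\bm{o}_A,\bm{o}_B)=\rho/|A|$ from $\bm{o}_B$ ($\rho$ the circumradius); a large non-linear $A$ can perfectly well contain a few points that close to its center, there are not "very few candidates", and no equidistance constraint on the remaining points of $A$ follows from anything you have established. One round of orbit-chasing yields no contradiction. What the paper actually does here is iterate: alternately apply a generator of $G_B$ (rotation about $\bm{o}_B$) and one of $G_A$ (rotation about $\bm{o}_A$) to build a sequence forced to stay in $A\cap B$ whose coordinates are strictly monotone, hence infinitely many distinct points in a finite set. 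The delicate point is that the chain dies if it ever lands on $\bm{a}$ (which is not in $B$); with your centroid formula $\bm{a}$ always lies on the line through $\bm{o}_A$ and $\bm{o}_B$, so the chain started at $\bm{o}_A$ really can hit $\bm{a}$, and this is where the non-linearity of $A$ is actually used — to restart the chain from a point off that line — together with an explicit choice rule for the orbit representative when $k_A$ or $k_B$ equals $3$. None of that machinery is present in your sketch, so the residual cases remain unproved.
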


\begin{proof}
For any finite set $S$,
$m_S = |S|$ is the size of $S$,
$C_S$ is the smallest enclosing circle of $S$,
whose center is $\bm{o}_S$ and radius is $d_S$,
$k_S$ is the order of the rotation group $G_S$ of $S$ around $\bm{o}_S$,
and $CH_S$ is the convex hull of $S$.
We introduce a notation for the convenience of description.
Let $C$ be a circle, 
and $\bm{x}$ and $\bm{y}$ be points on $C$.
By $C(\bm{x},\bm{y})$ (resp. $C[\bm{x},\bm{y}]$),
we denote the arc of $C$ from $\bm{x}$ to $\bm{y}$ counter-clockwise,
excluding (resp. including) both ends $\bm{x}$ and $\bm{y}$.

Consider any finite set $A$ which satisfies the following conditions:
\begin{enumerate}
\item
$A$ is not linear, i.e., $CH_A$ is not a line segment,
\item
$k_A \geq 2$, and
\item
$\bm{o}_A \not\in A$.
\end{enumerate}

Since $k_A \geq 2$ and $A$ is not linear, $m_A \geq 3$.
For any point $\bm{a} \in A$,
let $B = (A \cup \{ \bm{o}_A \}) \setminus \{ \bm{a} \}$.
We show that $k_B = 1$.
Proof is by contradiction.
We assume $k_B \geq 2$ to derive a contradiction.

\medskip
\noindent
(I) We first show that $\bm{o}_A = \bm{o}_B$.
Proof is by contradiction.
We assume $\bm{o}_B \not= \bm{o}_A$ to derive a contradiction.

If $k_A \geq 4$, then $C_A = C_B$ and hence 
$\bm{o}_A = \bm{o}_B$.
Thus, $2 \leq k_A \leq 3$, 
and the number of points in $A$ on $C_A$ 
($\bm{a}$ must be one of them) is at most 3.
Without loss of generality,
we assume that $\bm{o}_A = (0,0)$ and $\bm{o}_B = (-1,0)$.
Then the $x$-coordinate of $\bm{a}$ is positive,
since $d_B \leq d_A$.

\medskip
\noindent
(A) First consider the case in which $k_A = 2$.
We start with showing that $(2 \leq)$ $k_B \leq 3$.
There are exactly two points $\bm{a}$ and $-\bm{a}$ on $C_A$,
where $-\bm{a}$ is the opposite point of $\bm{a}$ about $\bm{o}_A$.
Thus $-\bm{a}$ is on $C_A$ and in $A$.
Let $\bm{c} = (c_x, c_y)$ and $\bm{c}' = (c_x, -c_y)$
be the intersections of $C_A$ and $C_B$,
where $c_y \geq 0$ ($\bm{c} = \bm{c}'$, i.e., $c_y = 0$ may occur).
Since no points in $B$ are on $C_B(\bm{c}, \bm{c}')$,
if $c_x >  -1$ held, $C_B$ would not be the smallest enclosing circle.
Thus $c_x \leq -1$ holds.

We examine where a point in $B$ occurs on $C_B$.
Let $\bm{h}$ and $-\bm{h}$ be the intersections of $C_B$ and the $y$-axis,
i.e., $\bm{h} = (0, \sqrt{d^2_B - 1})$ and $-\bm{h} = (0, -\sqrt{d^2_B - 1})$.
Since $\bm{o}_A \in B$, $d_B \geq 1$, 
and there indeed exist
$\bm{h}$ and $-\bm{h}$ (which may be the same).
First, there is a point in $B$ on $C_B[\bm{h},-\bm{h}]$,
since otherwise, $C_B$ is not the smallest enclosing circle of $B$.
It cannot occur on $C_B(\bm{c}, \bm{c}')$, 
since $C_B(\bm{c}, \bm{c}')$ is located outside $C_A$.
Furthermore, 
it does not occur either on $C_B(\bm{h}, \bm{c})$ or on $C_B(\bm{c}', -\bm{h})$.
If a point $\bm{x} \in B$ occurred there,
$d_B = dist(\bm{x},\bm{o}_B) < dist(-\bm{x},\bm{o}_B)$
and $-\bm{x} \in B$ would hold.
Note that $\bm{x}$ and thus $-\bm{x}$ is not on $C_A$,
and hence $-\bm{x}$ is not $\bm{a}$.
Thus if $\bm{x} \in B$ is a point on $C_B[\bm{h},-\bm{h}]$,
it is either $\bm{c}$ or $\bm{c}'$.

If both of $\bm{c}$ and $\bm{c}'$ were in $B$,
since they were also on $C_A$, $k_A = 2$, and $c_x \leq -1$,
$\bm{o}_A = \bm{o}_B$ would hold.
Thus exactly one point in $B$ is on $C_B[\bm{h},-\bm{h}]$,
which immediately implies $k_B \leq 3$.

\medskip
\noindent
(A1) Consider the case in which $k_A = k_B = 2$.
Suppose that $\bm{a} \not= (d_A, 0)$.
For any point $\bm{p}_0 \in B$,
define a sequence of points
\[
{\cal X}: \bm{p}_0, \bm{q}_0, \bm{p}_1, \bm{q}_1, \ldots
\] 
as follows:
For any $i \geq 0$, 
$\bm{q}_i$ is the opposite point of $\bm{p}_i$ about $\bm{o}_B$,
and for any $ i \geq 1$, 
$\bm{p}_i$ is the opposite point of $\bm{q}_{i-1}$ about $\bm{o}_A$.
If $\bm{p}_i \not= \bm{a}$ and $\bm{q}_i \not= \bm{o}_A$ for all $i \geq 0$,
then $\bm{p}_i, \bm{q}_i \in A \cap B$,
since $k_A = k_B = 2$.

Consider an instance of ${\cal X}$ for $\bm{p}_0 = \bm{o}_A$.
First $\bm{q}_i \not= \bm{o}_A$ by definition.
Next $\bm{p}_i \not= \bm{a}$,
since all $\bm{p}_i$ and $\bm{q}_i$ occur on the $x$-axis,
and $\bm{a}$ is not on the $x$-axis.
Thus ${\cal X}$ consists of an infinite number of distinct points.
It is a contradiction, 
since there are only $m_A$ points in $A$ (and $B$).

Suppose otherwise that $\bm{a} = (d_A,0)$.
There is a point $\bm{p} = (p_x, p_y) \in A$
such that $p_x \geq 0$, $p_y \not= 0$, and it is not on $C_A$, 
since $A$ is not linear and $k_A = 2$.
Since $p_y \not= 0$, $\bm{p}$ is neither $\bm{o}_A$ nor $\bm{a}$,
and thus $\bm{p} \in B$, as well. 
Consider another instance of ${\cal X}$ for $\bm{p}_0 = \bm{p}$.
Let $\bm{p}_i = (p^i_x,p^i_y)$ and $\bm{q}_i = (q^i_x,q^i_y)$ for $i \geq 0$.
By a simple induction, for all $i \geq 0$,
$p^{i+1}_x > p^i_x \geq 0$, $p^i_y \neq 0$, $q^{i+1}_x < q^i_x < -1$, $q^i_y \neq 0$, and
$\bm{p}_i, \bm{q}_i \in A \cap B$.
It is a contradiction, 
since $\bm{p}_i \not= \bm{a}$ and $\bm{q}_i \neq \bm{o}_A$ for all $i\geq 0$;
${\cal X}$ consists of an infinite number of distinct points.

\medskip
\noindent
(A2) Consider the case in which $k_A = 2$ and $k_B = 3$.
We again consider the sequence 
${\cal X}: \bm{p}_0, \bm{q}_0, \bm{p}_1, \bm{q}_1, \ldots$
after making two modifications.
Since $k_B = 3$,
we have two candidates to determine $\bm{q}_i$ from $\bm{p}_i$.
Let $C$ be the circle with center $\bm{o}_B$ 
such that it contains $\bm{p}_i$.
Let $\bm{x}$ and $\bm{x}'$ be two points on $C$ such that 
they form an equilateral triangle with $\bm{p}_i$.
Since $k_B = 3$, they both belong to $B$.
We assume that $\bm{x}$ has a smaller $x$-coordinate than $\bm{x}'$
(in case of a tie, we assume that 
$\bm{x}$ has a smaller $y$-coordinate than $\bm{x}'$).
Then we choose $\bm{x}$ as $\bm{q}_i$.

We consider two instances of ${\cal X}$.
${\cal X}_1$ starts with $\bm{p}_0 = \bm{o}_A$ and $\bm{q}_0 = (-3/2, \sqrt{3}/2)$,
and ${\cal X}_2 $ with $\bm{p}_0 = \bm{o}_A$ and $\bm{q}_0 = (-3/2, -\sqrt{3}/2)$.
Then, $\bm{p}_i = (p^i_x,p^i_y)$ (resp. $\bm{q}_i = (q^i_x,q^i_y)$) in ${\cal X}_1$, 
if and only if $\bm{p}_i = (p^i_x,-p^i_y)$ (resp. $\bm{q}_i = (q^i_x,-q^i_y))$ in ${\cal X}_2$.

Consider two straight lines $\ell_1: y = \sqrt{3} (x+1)$ and $\ell_2: y = -\sqrt{3}(x+1)$.
By a similar induction like (A1), 
we can show that for all $i \geq 1$,
$p^{i+1}_x > p^i_x > 0$, $p^i_y \not= 0$, $q^{i+1}_x < q^i_x < -1$, $q^i_y \not= 0$,
since $\bm{p}_i$ (resp. $\bm{q}_i$) is located to the right (resp. left) side 
of $\ell_1$ and $\ell_2$.
Since $\bm{q}_i$ do not reach $\bm{o}_A$,
${\cal X}_1$ and ${\cal X}_2$ both create an infinite number of distinct points,
if they do not reach $\bm{a}$.

By definition,
$\bm{p}_i = (p^i_x,p^i_y)$ (resp. $\bm{q}_i = (q^i_x,q^i_y)$) in ${\cal X}_1$,
if and only if $\bm{p}_i = (p^i_x,-p^i_y)$ (resp. $\bm{q}_i = (q^i_x,-q^i_y)$) in ${\cal X}_2$,
and $p^i_y, q^i_y \not= 0$, for all $i \geq 1$.
Thus at least one of them creates an infinite number of distinct points,
no matter where $\bm{a}$ is.
It is a contradiction.

\medskip
\noindent
(B) Next consider the case in which $k_A = 3$.
First we show that $k_B \leq 2$.
Let $\bm{p}, \bm{x}$, and $\bm{x}'$ be the three points on $C_A$,
which form an equilateral triangle.
Recall that $\bm{c}$ and $\bm{c}'$ are intersections of $C_A$ and $C_B$.
By a similar argument in (A),
one of $\bm{x}$ or $\bm{x}'$ is either $\bm{c}$ or $\bm{c}'$.
Without loss of generality,
we assume that $\bm{x} = \bm{c}$.
Since $\bm{x}' \in B$, 
it occurs inside $C_B$ (and on $C_A$).
Thus it occurs on $C_A[\bm{c}, \bm{c}']$.
Since $\angle \bm{x} \bm{o}_A \bm{x}' = 2\pi/3$,
$\angle \bm{c} \bm{o}_B \bm{c}' > 2\pi/3$,
and there is no point in $B$ on $C_B[\bm{c}, \bm{c}']$,
$k_B \leq 2$ holds.
All what we need to consider is thus the case in which $k_A = 3$ and $k_B = 2$.

We again consider the sequence 
${\cal X}: \bm{p}_0, \bm{q}_0, \bm{p}_1, \bm{q}_1, \ldots$
after making two modifications.
Since $k_A = 3$ this time,
we have two candidates to determine $\bm{p}_i$ from $\bm{q}_{i-1}$.
Let $C$ be the circle with center $\bm{o}_A$ 
such that it contains $\bm{q}_{i-1}$.
Let $\bm{x}$ and $\bm{x}'$ be two points on $C$ such that 
they form an equilateral triangle with $\bm{q}_{i-1}$.
Since $k_A = 3$, they both belong to $A$.
We assume that $\bm{x}$ has a larger $x$-coordinate than $\bm{x}'$
(in case of tie, we assume that $\bm{x}$ has a larger $y$-coordinate than $\bm{x}'$).
Then we choose $\bm{x}$ as $\bm{p}_i$.

We consider two instances of ${\cal X}$.
${\cal X}_1$ starts with $\bm{p}_0 = \bm{o}_A$, $\bm{q}_0 = (-2,0)$,
and $\bm{p}_1 = (1,\sqrt{3})$,
and ${\cal X}_2 $ with $\bm{p}_0 = \bm{o}_A$, $\bm{q}_0 = (-2,0)$,
and $\bm{p}_1 = (1, -\sqrt{3})$.
Then  $\bm{p}_i = (p^i_x,p^i_y)$ (resp. $\bm{q}_i = (q^i_x,q^i_y)$) in ${\cal X}_1$,
if and only if $\bm{p}_i = (p^i_x,-p^i_y)$ (resp. $\bm{q}_i = (q^i_x,-q^i_y)$) in ${\cal X}_2$,
and $p^i_y, q^i_y \not= 0$ for all $i \geq 1$.
By a similar argument in (A2),
they both create an infinite number of distinct points,
if they do not reach $\bm{a}$.
Thus at least one of them creates an infinite number of distinct points,
no matter where $\bm{a}$ is.
It is a contradiction.

\medskip
\noindent
(II) Thus, $\bm{o}_A = \bm{o}_B$, if $k_B \geq 2$.
It is however a contradiction.
Without loss of generality, we may assume that $\bm{o}_A = \bm{o}_B = (0,0)$.
Since $k_A, k_B \geq 2$,
$\sum_{\bm{x} \in A} \bm{x} = \sum_{\bm{y} \in B} \bm{y} = (0,0)$.
It is a contradiction, 
since $\sum_{\bm{y} \in B} \bm{y} =  (\sum_{\bm{x} \in A} \bm{x}) - \bm{a} + (0,0)$
and $\bm{a} \not= \bm{o}_A = (0,0)$.
\end{proof}

\begin{lemma}
\label{L4040}
Target function $\psi_{(n,2)}$ is an algorithm to solve the \mbox{\rm FC(2)-PO} for $n \geq 4$,
provided the existence of LN$_{(n,2)}$.
\end{lemma}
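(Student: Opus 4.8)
The plan is to show that, regardless of how the at most two crashes and the $\cal SSYNC$ adversary are chosen, the sequence of configurations is driven either into type G (where $m_P\le 2$ and, having no branch in $\psi_{(n,2)}$, every robot stays put) or into a type-L configuration, from which the assumed subroutine LN$_{(n,2)}$ completes the convergence. First I would record the one global invariant on which everything rests: every branch of $\psi_{(n,2)}$ sends a robot to a point of $CH(P)$ (the center $\bm{o}_P$, a side midpoint $\bm{M}_P$, the canonical vertex $\bm{a}_P$, or a point produced by LN$_{(n,2)}$ inside the line segment), so $CH_{t+1}\subseteq CH_t$ for all $t$ and $CH_t$ converges to a convex region $CH_\infty$. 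It then suffices to prove that $CH_\infty$ is a point or a line segment and that each robot stabilizes.

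The heart of the proof is a progress argument with two engines. The first is the vertex-escape argument already used for Theorem~\ref{T3010}: at a vertex $\bm{p}$ of $CH_\infty$, in every type except the transient ``canonical'' case, a non-faulty robot is assigned a target that is either interior to $CH(P_t)$ (namely $\bm{o}_P$ in type I and in the symmetric/equilateral cases) or a relative-interior point of an edge ($\bm{M}_P$ in type S and in non-equilateral T), so it moves off $\bm{p}$ by a definite amount and, since $CH_t$ shrinks, never returns to a vertex neighbourhood. Hence a \emph{persistent} vertex of $CH_\infty$ must be occupied by a crashed robot. The sole exception is $\bm{a}_P$ in a type-Z configuration with $k_P=1$, where robots gather at one canonical vertex; but this gathering strictly decreases $m_P$, forcing a transition into one of the triangle types T, I, S (or into L/G), so $\bm{a}_P$ never survives as a limiting vertex. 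For the triangle types I would verify directly that the explicit moves collapse the triangle: an equilateral T feeds into I, a non-equilateral T into S, and the center- or midpoint-directed moves of I and S eventually empty a vertex or the midpoint, lowering $m_P$ and producing a segment (type L) or a two-point configuration (type G).

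The second engine, and the technical core, is the symmetry control for type Z furnished by Lemma~\ref{A1005}. In a type-Z configuration with $k_P\ge 2$ all active robots aim at the common, consistently computable center $\bm{o}_P$; if the adversary activates complete orbits the configuration stays symmetric while $CH_t$ shrinks toward $\bm{o}_P$, whereas any partial activation or crash that replaces an extremal orbit point by the center drops the symmetry to $k_P=1$ by Lemma~\ref{A1005}, after which the total order $\succ_P$ is well defined and the robots gather at $\bm{a}_P$ as above. Thus the adversary cannot trap the system in symmetric type-Z configurations, and the pair $(CH_t,m_P)$ gives a well-founded measure of progress through the types.

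Finally I would close the argument using the bound of two crashes. By the first engine every vertex of $CH_\infty$ is a crash position, and there are at most two of them, so $CH_\infty$ is a point or a segment. In the segment case the configuration is eventually type L and LN$_{(n,2)}$, which by hypothesis keeps $CH(P_t)\subseteq CH(P_0)$ inside the line and solves the FC(2)-PO, drives all robots to at most two points; in the point case we are already in type G. Reading off the crash patterns (zero, one, or two crashes, coincident or distinct) then yields convergence of every robot to one of at most two points. The step I expect to be the main obstacle is exactly the interaction of the $\cal SSYNC$ adversary with symmetric configurations: ruling out an activation schedule that cycles among types Z, T, I, S forever without decreasing $(CH_t,m_P)$, for which Lemma~\ref{A1005} is the indispensable ingredient, together with the careful bookkeeping of multiplicities when several robots simultaneously merge onto $\bm{o}_P$, $\bm{M}_P$, or $\bm{a}_P$.
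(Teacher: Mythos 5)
There is a genuine gap at the final step. Your argument establishes (modulo the vertex-escape claims) only that the \emph{limit} hull $CH_\infty$ has at most two vertices, hence is a point or a segment, and you then assert that ``in the segment case the configuration is eventually type L.'' That implication is exactly the hard part of the lemma and does not follow: $CH_t$ can converge to a segment while remaining a nondegenerate triangle (or higher polygon) at every finite time, in which case LN$_{(n,2)}$ is never invoked. Worse, in that situation your own first engine becomes self-defeating: if the two crashed robots sit at the distinct endpoints $\bm{p},\bm{q}$ of the limit segment, FC(2)-PO requires every non-faulty robot to converge to $\bm{p}$ or $\bm{q}$, yet the vertex-escape argument says non-faulty robots leave every vertex neighbourhood and never return. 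The only way they can reach the endpoints is through LN$_{(n,2)}$, so one must prove that a type-L configuration is reached \emph{in finite time} (or that $P_t$ converges to a single point). The paper does precisely this by an exhaustive transition-diagram analysis: each self-loop on Z, I, S terminates because the multiplicity of a fixed target point ($\bm{a}_P$, $\bm{o}_P$, $\bm{M}_P$) strictly increases and is bounded by $n$ (with Lemma~\ref{A1005} breaking symmetric Z-configurations), while the remaining cycles (the T-, TI$^+$T- and TS$^+$T-loops) each shrink the perimeter $h_t$ by a fixed factor, so an infinite repetition forces convergence to a single point and is impossible when two robots have crashed at distinct positions; hence type L is reached. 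Your proposed measure ``$(CH_t,m_P)$'' is not well-founded ($CH_t$ admits infinite strictly decreasing chains, and $m_P$ can \emph{increase}, e.g.\ when robots move to $\bm{o}_P\notin\overline{P}$ in type Z), so it cannot substitute for this bookkeeping.

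A secondary issue: the vertex-escape engine you import from Theorem~\ref{T3010} relies there on $\alpha(\Phi)<1$, which gives the uniform bound $dist(\bm{p},\alpha * CH_t)\gg\epsilon$. Here $\alpha(\psi_{(n,2)})=1$ and the targets $\bm{M}_P$ and $\bm{a}_P$ lie on the boundary of $CH(P)$ (indeed $\bm{a}_P$ may be a hull vertex), so the ``definite amount'' of escape and the ``never returns'' claim each need separate, type-by-type justification rather than a citation of that theorem. The use of Lemma~\ref{A1005} to break symmetry in type Z, and the observation that convergence of $h_t$ to $0$ already solves the problem without ever reaching type L or G, are both correct and match the paper.
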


\begin{proof}
Suppose that a robot $r_i$ starts a Look-Compute-Move phase 
when the configuration is $P = \{ \bm{p}_1, \bm{p}_2, \ldots, \bm{p}_n \}$ (in $Z_0$).
Let $Z_i$ and $\gamma_i$ be the 
$x$-$y$ local coordinate system of $r_i$,
and the coordinate transformation from $Z_i$ to $Z_0$, respectively.
Then $r_i$ observes 
$Q^{(i)} = \{ \bm{q}_1, \bm{q}_2, \ldots , \bm{q}_n \}$ 
in $Z_i$ in Look phase,
where $\bm{p}_j = \gamma_i(\bm{q}_j)$.
By the definition of $Z_i$, $(0,0) \in Q^{(i)}$, 
and there is an $j$ such that $\gamma_i((0,0)) = \bm{p}_j$. 
($P$ may not contain $(0,0)$.)

In Compute phase, $r_i$ computes $\psi_{(n,2)}(Q^{(i)})$ (not $\psi_{(n,2)}(P)$),
which is the target point of $r_i$ in $Z_i$,
where $\psi_{(n,2)}(Q^{(i)})$ is either $\bm{o}_{Q^{(i)}}$, 
$\bm{M}_{Q^{(i)}}$, or $\bm{a}_{Q^{(i)}}$.
Immediately, $k_{Q^{(i)}}$, $\bm{o}_{Q^{(i)}}$, and $\bm{M}_{Q^{(i)}}$ 
are computable from $Q^{(i)}$,
and $k_{Q^{(i)}} = k_P$, $\gamma_i(\bm{o}_{Q^{(i)}}) = \bm{o}_P$, 
and $\gamma_i(\bm{M}_{Q^{(i)}}) = \bm{M}_P$.

To compute $\bm{a}_{Q^{(i)}}$ for Step~1(b),
$r_i$ computes $View_{Q^{(i)}}$.
Indeed, it is possible, 
since $r_i$ can construct $\Xi_{\bm{q}}$ for all 
$\bm{q} \in  Q^{(i)} \setminus \{ \bm{o}_{Q^{(i)}} \}$.
Then it computes $\succ_{Q^{(i)}}$ and $\bm{a}_{Q^{(i)}}$.
By the definition of $\Xi_{\bm{q}}$, $View_P = View_{Q^{(i)}}$,
which implies $\succ_{Q^{(i)}} = \succ_P$ and $\gamma_i(\bm{a}_{Q^{(i)}}) = \bm{a}_P$.
Hence when $\psi_{(n,2)}(Q^{(i)})$ is $\bm{o}_{Q^{(i)}}, \bm{M}_{Q^{(i)}}$, 
or $\bm{a}_{Q^{(i)}}$ in $Z_i$, 
$r_i$ moves to $\bm{o}_P$, $\bm{M}_P$, or $\bm{a}_P$ in $Z_0$, respectively.
Using this relation, we analyze $\psi_{(n,2)}$.

Consider any execution ${\cal E}: P_0, P_1, \ldots$ of $\psi_{(n,2)}$,
starting from any initial configuration $P_0$ (in $Z_0$).
We show that in $\cal E$,
each robot converges to one of at most two convergence points,
provided that at most two robots crash.
It is sufficient to show that $\cal E$ eventually reaches a configuration $P_t$ of type L,
since LN$_{(n,2)}$ is invoked at $t$ and solves FC(2)-PO,
without reaching a configuration not in type L.

Our proof scenario is as follows:
Using an exhaustive search,
we figure out how types of configurations change as the execution evolves.
The crucial observations for $\psi_{(n,2)}$,
which we shall make, are 
(i) the transition diagram among the types has a single sink L,
(ii) every loop eventually terminates if the execution does not converge to a point or two, and
(iii) some transition eventually occurs from each type.
Based on these, 
we conclude that the execution eventually reaches a type L configuration, 
if it does not converge to one or two convergence points.

The search is exhaustive and is not very difficult,
if we gaze the effects of the fairness of scheduler and faulty robots.
Consider any configuration $P_t$ not in type L.
If every robot $r$ activated at $t$ is either faulty
or $\psi_{(n,2)}$ instructs it not to move,
$P_{t+1} = P_t$ holds.

A robot $r$ is said to be \emph{ready} at $P_t$, 
if it moves once it is activated.
We first confirm that at least three ready robots exist,
in the following observations.
(This confirmation is easy, and we omit to explicitly mention it.)
Then at least one non-faulty ready robot $r$ exists in $P_t$.
Since the scheduler is fair, $r$ is activated eventually at some time $t' > t$.
If $P_{t'} = P_t$ and $r$ is activated at $t'$, at least $r$ moves,
and the execution reaches a configuration $P_{t'+1}$.
(Note that $P_{t'+1} = P_{t'}$ may hold, 
although some robots move, e.g., when two robots exchange their positions.)
Thus, if there are three ready robots in $P_t$,
in every execution starting from $P_t$,
there is a time $t' > t$ such that there are robots that move at $t'$.
We therefore assume that some robots always move every time $t$ 
without loss of generality,
provided that there are at least three ready robots.

We first make a series of observations (I)-(VII).
Figure~\ref{F0020} shows the transition diagram among the types,
which summarizes the observations (I)--(V).
Let $k_t = k_{P_t}, m_t = m_{P_t}$, $CH_t = CH(P_t)$, and
$\bm{o}_t$ be the center of the smallest enclosing circle $C_t$ of $P_t$.

\begin{figure}[t]
\centering
\includegraphics[scale=1]{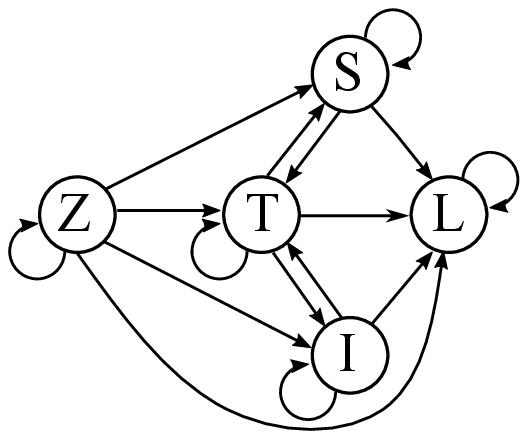}
\caption{The transition diagram among types that $\psi_{(n,2)}$ specifies.}
\label{F0020}
\end{figure}

\begin{description}
\item[(I) When $P_t$ is type Z.]

Suppose that $P_t$ is type Z.
Then $P_{t+1}$ can be any type in L, T, I, S, and Z.
(We regard a type G configuration as type L.)
Suppose that $P_{t+1}$ is type Z.
We show that this self-loop from Z to Z eventually terminates.

\medskip
\noindent
(A) Suppose that $k_t = 1$.
Let $\bm{a}_t$ be the largest point in $\overline{P_t}$ with respect to $\succ_{P_t}$,
and let $\gamma_i$ be the coordinate transformation from $Z_i$ to $Z_0$.
When a robot $r_i$ is activated, in Look phase, it identifies 
a configuration $Q^{(i)}_t$ such that $\gamma_i(Q^{(i)}_t) = P_t$,
and computes $\psi_{(n,2)}(Q^{(i)}_t) = \bm{a}^{(i)}_t$ in Compute phase.
Since $Q^{(i)}_t$ is type Z, and $k_{Q^{(i)}_t} = 1$,
$\bm{a}^{(i)}_t$ is the largest point in $\overline{Q^{(i)}_t}$ with respect to 
$\succ_{Q^{(i)}_t} (= ~\succ_{P_t})$.
Thus $\gamma_i(\bm{a}^{(i)}_t) = \bm{a}_t$.

By the definition of $\succ_{P_t}$,
$\mu_{P_t}(\bm{a}_t) \geq \mu_{P_t}(\bm{p})$ for all $\bm{p} \in \overline{P_t}$.
Thus, $\mu_{P_{t+1}}(\bm{a}_t) > \mu_{P_t}(\bm{a}_t)$
$\geq \mu_{P_t}(\bm{p}) \geq \mu_{P_{t+1}}(\bm{p})$, 
for all $\bm{p} \in \overline{P_t} \setminus \{ \bm{a}_t \}$.
Therefore, if $P_{t+1}$ is type Z,
$k_{t+1} = 1$ and $\bm{a}_{t+1} = \bm{a}_t$.

Suppose that $P_{t'}$ is type Z with $k_{t'} = 1$ for all $t' > t$.
Then a contradiction is derived,
since $\bm{a}_{t'} = \bm{a}_t$ and $\mu_{P_{t'}}(\bm{a}_t)$ increases unboundedly.
Thus the self-loop of Z eventually terminates.

\medskip
\noindent
(B) Suppose that $k_t \geq 2$.
We show that, if $P_{t'}$ is type Z for all $t' > t$,
then there is a time $t'$ such that $k_{t'} = 1$ or $m_{t'} < m_t$ holds,
which implies that the self-loop of Z eventually terminates by (A).

Suppose that $k_{t+1} \geq 2$.
Then $m_{t+1} \leq m_t + 1$.
If $m_{t+1} < m_t$, there is nothing to show.
Thus there are two cases to be considered.

First consider the case of $m_{t+1} = m_t$.
If $\overline{P_{t+1}} = \overline{P_t}$,
then $\bm{o}_{t+1} = \bm{o}_t$,
and hence $\mu_{t+1}(\bm{o}_{t+1}) = \mu_{t+1}(\bm{o}_t) > \mu_t(\bm{o}_t)$.
Thus, there is a time $t' > t+1$ such that $m_{t'} < m_{t+1} = m_t$.

Otherwise if $\overline{P_{t+1}} \not= \overline{P_t}$,
$\bm{o}_t \not\in \overline{P_t}$, 
and there is a $\bm{q} \in \overline{P_t}$
such that $\overline{P_{t+1}} = (\overline{P_t} \setminus \{ \bm{q} \}) \cup \{ \bm{o}_t \}$.
By Lemma~\ref{A1005}, $k_{t+1} = 1$.

Next consider the case of $m_{t+1} = m_t + 1$.
Then $\bm{o}_t \not\in \overline{P_t}$, 
and $\overline{P_{t+1}} = \overline{P_t} \cup \{ \bm{o}_t \}$.
Since $\bm{o}_{t+1} = \bm{o}_t$ and 
$\mu_{t+1}(\bm{o}_{t+1}) = \mu_{t+1}(\bm{o}_t) > \mu_t(\bm{o}_t)$,
there is a time $t' > t+1$ such that $m_{t'} < m_{t+1}$.
Without loss of generality, 
we assume $m_{t+1} = m_{t+2} = \dots = m_{t'-1}$,
i.e., $\overline{P_{t+1}} = \overline{P_{t+2}} = \dots = \overline{P_{t'-1}}$
and hence $\bm{o}_{t+1} = \bm{o}_{t+2} = \dots = \bm{o}_{t'-1}$.
If $m_{t'} < m_t$, there is nothing to show.

Suppose that $m_{t'} = m_t$.
By assumption, at each time $t'' = t, t+1, \ldots , t'-1$,
some robots not at $\bm{o}_t$ move to $\bm{o}_t$.
Let $A$ be the set of robots activated in this period.
If all robots in $A$ are activated at $t$, $P_{t'}$ yields.
By Lemma~\ref{A1005}, $k_{t'} = 1$.

Thus the self-loop of Z eventually terminates.

\item[(II) When $P_t$ is type L.]

Then LN$_{(n,2)}$ is invoked to solve FC(2)-PO.
Recall that $P_t$ is type L, so is $P_{t+1}$.
Thus $L$ is a sink in the transition diagram.

\item[(III) When $P_t$ is type T.]
Let $\overline{P_t} = \{ \bm{a}, \bm{b}, \bm{c} \}$.

\medskip
\noindent
(A) Suppose that triangle $\bm{abc}$ is equilateral.
Since $\psi_{(n,2)}(P_t) = \bm{o}_t$,\footnote{
Formally, the equality ``$\psi_{(n,2)}(P_t) = \bm{o}_t$'' does not hold,
since $P_t$ may not contain $(0,0)$,
despite that the domain of target function $\psi_{(n,2)}$ is $\mathcal P$.
Here and later when notation $\psi_{(n,2)}(P_t)$ appears,
recall the discussion at the beginning of the proof.
On robot $r_i$, $\psi_{(n,2)}$ is not applied to $P_t$, but to $Q^{(i)}_t$,
where $\gamma_i(Q^{(i)}_t) = P_t$.
In this case, $\psi_{(n,2)}(Q^{(i)}_t) = \bm{o}_{Q^{(i)}_t}$.
Since $\gamma_i(\bm{o}_{Q^{(i)}_t}) = \bm{o}_t$,
the target point of $r_i$ is $\bm{o}_t$ in $Z_0$.
Formally, $\gamma_i(\psi_{(n,2)}(\gamma^{-1}_i(P_t))) = \bm{o}_t$.
This is what ``$\psi_{(n,2)}(P_t) = \bm{o}_t$'' means.
The same convention applies to LN$_{(n,2)}$ later in Section 7.2.3.}
every non-faulty robot, once activated, moves to $\bm{o}_t$.
Thus the type of $P_{t+1}$ is L, T, or I.
If the type of $P_{t+1}$ is T,
then $CH_{t+1}$ is not equilateral,
and $h_{t+1} = \frac{2\sqrt{3}+3}{9} h_t$,
where $h_t = h(CH_t)$ is the perimeter of $CH_t$.

\medskip
\noindent
(B) Suppose that triangle $\bm{abc}$ is not equilateral.
Since $\psi_{(n,2)}(P_t) = \bm{M}_t$,
and all $Z_i$ are right-handed,
the type of $P_{t+1}$ is L, T, or S.
where $\bm{M}_t = \bm{M}_{P_t}$.
If the type of $P_{t+1}$ is T, then $h_{t+1} < h_t$.

\medskip
We show that $CH_t$ (and thus $P_t$) converges to a point,
if the self-loop of T does not terminate.
Suppose that $P_t$ is type T for all $t > t_0$ for some time $t_0$.
After time $t_0$, $CH_t$ is equilateral at a finite number of times,
since if $CH_t$ is equilateral then $h_{t+1} = \frac{2\sqrt{3}+3}{9} h_t$.
Thus, there is a time $t_1$ such that $CH_t$ is not equilateral for all $t > t_1$.
Since $h_{t+2} \leq \frac{5}{6} h_t$ by a simple calculation,
we conclude that $P_t$ converges to a point.

\item[(IV) When $P_t$ is type I.]

Since $\psi_{(n,2)}(P_t) = \bm{o}_t$,
the type of $P_{t+1}$ is either L, T, or I.

Suppose that $P_{t+1}$ is type I.
We show that this self-loop from I to I eventually terminates.

If $P_{t+1}$ is type I,
$\bm{o}_{t+1} = \bm{o}_t$,
and hence $\mu_{P_{t+1}}(\bm{o}_{t+1}) = \mu_{P_{t+1}}(\bm{o}_t) > \mu_{P_t}(\bm{o}_t)$.
Thus this self-loop of I eventually terminates
(and the execution will reach a type L or T configuration).

If $P_{t+1}$ is type T,
then like (III),
$h_{t+1} = \frac{2\sqrt{3}+3}{9} h_t$.

\item[(V) When $P_t$ is type S.]

Since $\psi_{(n,2)}(P_t) = \bm{M}_t$,
the type of $P_{t+1}$ is either L, T, or S.

If $P_{t+1}$ is type S,
$\bm{M}_{t+1} = \bm{M}_t$,
and hence $\mu_{P_{t+1}}(\bm{M}_{t+1}) = \mu_{P_{t+1}}(\bm{M}_t) > \mu_{P_t}(\bm{M}_t)$.
Thus this self-loop from S to S eventually terminates
(and the execution will reach a type L or T configuration).

\item[(VI) TI$^+$T-loop.]

Suppose that $P_t$ is type T.
If triangle $\bm{abc}$ is equilateral,
the execution may reach a configuration $P_{t'}$ of type T
via several type I configurations as observed in (III) and (IV).
Now $\overline{P_{t'}} = \{ \bm{x}, \bm{y}, \bm{o}_t \}$,
where $\bm{x}, \bm{y} \in \{ \bm{a}, \bm{b}, \bm{c}\}$ and $\bm{x}\neq \bm{y}$.
Thus 
$h_{t'} = \frac{2\sqrt{3}+3}{9} h_t$.

\item[(VII) TS$^+$T-loop.]

Suppose again that $P_t$ is type T.
If triangle $\bm{abc}$ is not equilateral,
the execution may reach a configuration $P_{t'}$ of type T
via several type S configurations as observed in (III) and (V).
Now $\overline{P_{t'}} = \{ \bm{x}, \bm{y}, \bm{M}_t \}$,
where $\bm{x}, \bm{y} \in \{ \bm{a}, \bm{b}, \bm{c}\}$ and $\bm{x} \neq \bm{y}$.
\end{description}

We go on the proof.
The observations (I)--(V) summarized in Figure~\ref{F0020} show
that any execution $\cal E$ starting from a configuration of any type
eventually reaches a type L configuration,
if neither TI$^+$T-loop nor TS$^+$T-loop repeats infinitely many times.

Let us consider what happens if TI$^+$T-loop or TS$^+$T-loop repeats infinitely many times.
Since $\psi_{(n,2)}(P) \in CH(P)$ for all $P \in {\mathcal P}$,
$h_{t+1} \leq h_t$ for all $t$.
If TI$^+$T-loop occurs infinitely many times in $\cal E$,
$h_t$ converges to 0 by  observation (VI),
and hence $P_t$ converges to a point 
(although $\cal E$ may not contain a configuration of type L or G).

We assume that $\cal E$ contains a finite number of occurrences of TI$^+$T-loop.
Then there is a $t_0$ such that the 
postfix ${\cal E}'$ of $\mathcal{E}$: 
$P_{t_0}, P_{t_0+1}, \ldots$ of $\cal E$ does not contain an occurrence of TI$^+$T-loop.
Since ${\cal E}'$ does not contain a type L configuration,
it is a repetition of TS$^+$T-loop.
Suppose that at time $t$ a TS$^+$T-loop starts and 
at time $t' (> t)$ the second TS$^+$T-loop ends (counting after $t$).
Then it is easy to observe that 
$h_{t'} \leq \frac{5}{6} h_t$ and thus $\cal E$' 
(and hence $\cal E$) converges to a point as observed in (III).

Thus $\cal E$ eventually reaches a configuration of type L,
if it does not converges to a point.
\end{proof}

\subsubsection{Algorithm LN$_{(n,2)}$}
\label{SSSLNn2}

We propose the target function LN$_{(n,2)}$ and show the following:
LN$_{(n,2)}$ is an FC(2)-PO algorithm for any initial configuration 
$P_0$ such that $CH_0$ is a line segment,
where $CH_t = CH(P_t)$.
Moreover, $CH_t \subseteq CH_0$ holds for all $t \geq 0$.
We borrow some symbols and notations from the last section.

Let $P = \{ \bm{p}_1, \bm{p}_2, \ldots , \bm{p}_n \} \in {\mathcal P}$ 
be a configuration of type L, which may be a configuration that a robot 
identifies in Look phase.
We identify a point $\bm{p}_i$ in $R^2$ with a point in $R$:
Since $(0,0) \in P$,
we rotate $P$ about $(0,0)$ counter-clockwise so that the resultant 
$P$ becomes the multiset of points in the $x$-axis.
Then we denote $(p,0)$ by $p$.
In what follows in this section,
a configuration $P$ is thus regarded as a multiset of $n$ real numbers, 
including at least one 0.
We assume $p_1 \leq p_2 \leq \dots \leq p_n$.
By $\overline{P} = \{ b_1, b_2 , \ldots , b_{m_P} \}$, 
we denote the set of distinct real numbers in $P$,
where $m_P$ is the size $|\overline{P}|$ of $\overline{P}$, 
and $b_1 < b_2 < \dots < b_{m_P}$.
The length of $CH(P)$ is denoted by $L_P = b_{m_P} - b_1 = p_n - p_1$.
Let $\lambda_P = \max_{p \in P} \min \{ p - p_1, p_{m_P} - p \} \leq L_P/2$.
Define $j^*$ by $b_{j^*} = 0$.
(Thus the current position of a robot $r_i$ who identifies 
$P$ in Look phase is $b_{j^*}$ in $Z_i$.)
Since $P$ is type L, $k_P \leq 2$.
We denote the middle point of $x$ and $y$ by $M_{xy}$,
i.e., $M_{xy} = (x+y)/2$.

Like $\psi_{(n,2)}$,
we consider ten types,
which we define as follows:

\begin{description}
 \item[G:] 
$m_P \leq 2$. 

\item[B$_3$:]
$m_P = 3$ and $k_P = 2$.

\item[B$_4$:]
$m_P = 4$ and $k_P = 2$.

\item[B$_5$:]
$m_P = 5$ and $k_P = 2$.

\item[B$_6$:]
$m_P = 6$ and $k_P = 2$.

\item[B:]
$m_P \geq 7$ and $k_P = 2$.

\item[U$_3$:]
$m_P = 3$ and $k_P = 1$.

\item[W:]
$m_P = 4$, $k_P = 1$, and
$\overline{P} = \{b_1, b_2, b_3, b_4\} (b_1 < b_2 < b_3 < b_4)$ satisfies
either (a) $2(b_2 - b_1) = b_3 - b_2$ and $b_3 \leq M_{b_1b_4}$, or
(b) $2(b_4 - b_3) = b_3 - b_2$ and $b_2 \geq M_{b_1b_4}$.

\item[U$_4$:]
$m_P = 4$, $k_P = 1$, and
$P$ is not type W.

\item[U:]
$m_P \geq 5$ and $k_P = 1$.
\end{description}

We now give target function LN$_{(n,2)}$.

\medskip
\noindent
{\bf [Target function LN$_{(n,2)}$]}
\begin{enumerate}
\item
If $P$ is type G,
LN$_{(n,2)}(P) = 0$.

\item
When $P$ is type B:
If $j^* \leq \lceil m_P/2 \rceil$, LN$_{(n,2)}(P) = b_1$.
Otherwise if $j^* > \lceil m_P/2 \rceil$, LN$_{(n,2)}(P) = b_{m_P}$.

\item
When $P$ is type B$_3$:
$m_P = 3$.
If $j^* \leq 2$, LN$_{(n,2)}(P) = M_{b_1b_2}$.
Otherwise if $j^* = 3$, LN$_{(n,2)}(P) = M_{b_2b_3}$.

\item
When $P$ is type B$_4$:
$m_P = 4$.
If $j^* \leq 2$, LN$_{(n,2)}(P) = M_{b_1b_2}$.
Otherwise if $j^* \geq 3$, LN$_{(n,2)}(P) = M_{b_3b_4}$.

\item
When $P$ is type B$_5$:
$m_P = 5$.
If $j^* \leq 3$, LN$_{(n,2)}(P) = b_2$.
Otherwise if $j^* \geq 4$, LN$_{(n,2)}(P) = b_4$.

\item
When $P$ is type B$_6$:
$m_P = 6$.
If $j^* \leq 3$, LN$_{(n,2)}(P) = b_2$.
Otherwise if $j^* \geq 4$, LN$_{(n,2)}(P) = b_5$.

\item
When $P$ is type U:
Since $k_P = 1$,
either  $b_1 \succ_P b_{m_P}$ or $b_{m_P} \succ_P b_1$ holds.
If $b_1 \succ_P b_{m_P}$, then LN$_{(n,2)}(P) = b_1$.
Otherwise if $b_{m_P} \succ_P b_1$, LN$_{(n,2)}(P) = b_{m_P}$.

\item
When $P$ is type U$_3$:
Since $k_P = 1$ and $m_P = 3$,
if $b_2 = M_{b_1b_3}$,
then $\mu_P(b_1) \not= \mu_P(b_3)$.
If $b_2 < M_{b_1b_3}$ or
$(b_2 = M_{b_1b_3}) \wedge (\mu_P(b_1) > \mu_P(b_3))$,
then LN$_{(n,2)}(P) = (2b_1 + b_2)/3$.
Otherwise, if $b_2 > M_{b_1b_3}$ or
$(b_2 = M_{b_1b_3}) \wedge (\mu_P(b_1) < \mu_P(b_3))$,
then LN$_{(n,2)}(P) = (b_2 + 2b_3)/3$.

\item
When $P$ is type W:
$k_P = 1$ and $m_P = 4$,
and $P$ satisfies either condition (a) or (b) (of the definition of type W).
\begin{description}
\item[(a)] 
If $2(b_2 - b_1) = b_3 - b_2$ and $b_3 \leq M_{b_1b_4}$,
then LN$_{(n,2)}(P) = b_2$.

\item[(b)]
If $2(b_4 - b_3) = b_3 - b_2$ and $b_2 \geq M_{b_1b_4}$,
then LN$_{(n,2)}(P) = b_3$.
\end{description}

\item
When $P$ is type U$_4$:
$k_P = 1$, $m_P = 4$, and $P$ is not type W.
Suppose that $\mu_P(b_1) \geq \mu_P(b_4)$ holds.
(The case $P$ satisfies $\mu_P(b_1) < \mu_P(b_4)$ is symmetric, 
and we omit it.)

\begin{description}
\item[(a)] 
If $\mu_P(b_1) \geq \mu_P(b_3)$,
then LN$_{(n,2)}(P) = b_1$.

\item[(b)]
If $(\mu_P(b_1) < \mu_P(b_3)) \wedge (\mu_P(b_3) \geq 3)$,
LN$_{(n,2)}(P) = b_1$, if $b_3 = 0$, 
and LN$_{(n,2)}(P) = 0$, otherwise if $b_3 \not= 0$.

\item[(c)]
Otherwise if $(\mu_P(b_1) < \mu_P(b_3)) \wedge (\mu_P(b_3) < 3)$,
$\mu_P(b_1) = \mu_P(b_4) = 1$ and $\mu_P(b_3) = 2$.
LN$_{(n,2)}(P) = b_1$, if $(b_2 = 0) \vee (b_3 = 0)$, 
and LN$_{(n,2)}(P) = 0$, otherwise if $(b_1 = 0) \vee (b_4 = 0)$.
\end{description}
\end{enumerate}

\begin{lemma}
\label{L4050}
Target function LN$_{(n,2)}$ is an algorithm to solve the \mbox{\rm FC(2)-PO}
for any configuration of type L, without reaching a configuration
not in type L.
\end{lemma}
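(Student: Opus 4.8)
The plan is to follow the same ``exhaustive transition analysis'' used for Lemma~\ref{L4040}, now over the ten types G, B, B$_3$, B$_4$, B$_5$, B$_6$, U, U$_3$, U$_4$, W that partition the linear configurations. Two preliminary invariants come essentially for free. First, every value returned by LN$_{(n,2)}$ is an existing point $b_j$, a midpoint $M_{b_ib_j}$, or a convex combination such as $(2b_1+b_2)/3$, so each target lies in the segment $[b_1,b_{m_P}] = CH(P)$; since all robots move along one common line, $CH_{t+1}\subseteq CH_t$, the configuration never leaves type L, and $CH_t\subseteq CH_0$ for every $t$, which already yields the containment and the ``without reaching a configuration not in type L'' part of the statement. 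Second, reusing the bookkeeping of Lemma~\ref{L4040}, I regard the scheduler together with the fault pattern as a single adversary, assume without loss of generality that the at most two faulty robots crashed at time $0$, and invoke fairness to ensure that whenever a non-faulty \emph{ready} robot exists it is eventually activated and actually moves.

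The core is a one-step case analysis from each non-G type. The decisive structural fact, which reconciles the local computation with the intended global motion, is that in a \emph{symmetric} configuration ($k_P=2$) the two admissible local orientations send a robot to mirror-image targets, so a rule such as ``move to $b_1$ if $j^*\le\lceil m_P/2\rceil$'' means globally ``the robots in the left half collapse to the left end and those in the right half to the right end,'' independently of each robot's chirality; whereas in an \emph{asymmetric} configuration ($k_P=1$) all robots agree on $\succ_P$ and hence single out one common endpoint (or one common interior target). Using this I would prove: from a symmetric type, $P_{t+1}$ is again linear and is either type G, or has strictly smaller $m_P$, or is asymmetric; and from an asymmetric type with $m_P\ge 5$ (type U), $P_{t+1}$ is type G, has strictly smaller $m_P$, or is again type U with a strictly larger multiplicity at the distinguished endpoint. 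The tuned constants are precisely the anti-stall devices: the $1/3$-points in U$_3$, the metric gate $2(b_2-b_1)=b_3-b_2$ in W, and the multiplicity thresholds ($\mu_P(b_3)\ge 3$, etc.) in U$_4$ are chosen so that a move can neither recreate the symmetric family nor stall with unchanged $m_P$ and unchanged multiplicities \emph{unless} at least two points are held fixed by crashed robots. Consequently, apart from such pinned loops among the small types (B$_3$--B$_6$, U$_3$, W, U$_4$), every sufficiently long activation block strictly decreases $m_P$, and since $3\le m_P\le n$ this can happen only finitely often.

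It remains to handle the persistent pinned loops, which is where convergence (rather than finite termination) takes place. Here I would use a geometric potential: the hull length $L_{P_t}=p_n-p_1$ together with the inward depth $\lambda_{P_t}$ when the surviving mass forms a single cluster (at most one crash site, so a single convergence point is required), or the total distance from the non-faulty robots to the set of at most two crash sites when two distinct points are pinned. In each small type the midpoint and fraction rules force this potential to contract by a constant factor over any block in which every non-faulty robot is activated once, exactly as in observations~(VI)--(VII) of Lemma~\ref{L4040}; fairness supplies infinitely many such blocks, so the potential tends to $0$. Hence every non-faulty robot converges to one of at most two points --- the crash sites when two robots crash at distinct positions, and fewer points otherwise --- while faulty robots rest at their crash points, which is exactly what FC(2)-PO demands.

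The step I expect to be the genuine obstacle is the symmetric-type claim under adversarial crashes and asymmetric activation. A single crash at one end of an otherwise symmetric configuration, or an asymmetric activation of a symmetric one, can drop $k_P$ from $2$ to $1$ and land $P_{t+1}$ in any of several types, so the verification must be carried out type by type and for every placement of the at most two crashed robots among the named points $b_1,b_2,\dots$. The most delicate sub-cases are the boundary between B$_4$ and W, where the spacing condition $2(b_2-b_1)=b_3-b_2$ is exactly what keeps the chosen orientation well defined, and the three branches of U$_4$, whose multiplicity thresholds are calibrated so that a crash cannot freeze the collapse toward an endpoint; confirming that none of these transitions can be chained by the adversary into a non-contracting cycle is the heart of the argument.
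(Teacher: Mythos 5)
Your overall strategy is the same as the paper's: an exhaustive type-by-type transition analysis, multiplicity counters to bound self-loops, and geometric contraction of $L_{P_t}$ and $\lambda_{P_t}$ along the persistent loops, with the chirality-independence of the symmetric rules and the agreement on $\succ_P$ in the asymmetric case playing exactly the roles you assign them. However, two of your stated intermediate claims are false as written, and they are not cosmetic. First, the dichotomy ``from a symmetric type, $P_{t+1}$ is type G, has strictly smaller $m_P$, or is asymmetric'' fails: the transitions B$_3\to$B$_5$ and B$_4\to$B$_6$ \emph{increase} $m_P$ (from $\{a,b,c\}$ one can reach $\{a,M_{ab},b,M_{bc},c\}$) while keeping $k_P=2$, and the self-loops B$\to$B, B$_5\to$B$_5$, B$_6\to$B$_6$ preserve $m_P$ and symmetry. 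The paper handles these not by a decrease of $m_P$ but by endpoint-multiplicity counters (each such self-loop can repeat at most $n$ times) and by analyzing the composite loops B$_3$B$_5^{+}$B$_3$ and B$_4$B$_6^{+}$B$_4$, whose complete round trips halve $L$ or $\lambda$. Your plan, as stated, would let the adversary escape your accounting through exactly these transitions.

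Second, your attribution of the non-terminating loops to crash-pinning (``unless at least two points are held fixed by crashed robots'') misidentifies where convergence happens. The B$_3$ self-loop, the B$_3$B$_5^{+}$B$_3$ loop, the U$_3$ self-loop, and the U$_3$W$^{+}$U$_3$ loop can all repeat forever in a fault-free execution under full activation; they are the generic behavior, and the contraction factors ($1/2$ and $2/3$) are what make the execution converge to one or two points in that case, not a degenerate situation forced by crashes. Relatedly, your potential ``total distance from non-faulty robots to the at most two crash sites'' is not the right object: in FC(2)-PO the two limit points need not be crash sites at all (e.g.\ in the fault-free B$_4$B$_6^{+}$B$_4$ regime the two clusters converge to moving midpoints). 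Beyond these two points, the part you defer --- the per-type, per-crash-placement verification of every arrow in the transition diagram, including the calibration of the W gate and the three U$_4$ branches --- is the bulk of the paper's proof, so the proposal should be read as a correct skeleton with two load-bearing claims that need to be replaced by the multiplicity-counter and composite-loop arguments.
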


\begin{proof}
Our proof scenario is similar to the proof of Lemma~\ref{L4040}.
By an exhaustive search,
we draw the transition diagram among the types,
and show that if the execution does not converge to at most two points,
it eventually reaches a type G configuration.

We borrow symbols and notations from the proof of Lemma~\ref{L4040}.

Consider any execution ${\cal E}: P_0, P_1, \ldots$ starting from any
initial configuration $P_0$ of type L.
Note that $P_0$ (in $Z_0$) is type L, but may not contain $(0,0)$.
A configuration $Q_0$ identified by a robot at time $0$, however,
is type L and contains $(0,0)$.
Then $Q_0$, after identified with a set of real numbers including 0, 
applies to LN$_{(n,2)}$.
By the definition of LN$_{(n,2)}$,
$CH_{t+1} \subseteq CH_t$,
which implies that $P_t$ is type L for all $t \geq 0$.
Like the proof of Lemma~\ref{L4040},
we make a series of observations (I)--(XV).
Figure~\ref{F0030} shows the transition diagram among the types,
which summarizes the observations (I)--(V), (VII), and (X)--(XIV).

\begin{figure}[t]
\centering
\includegraphics[width=\hsize]{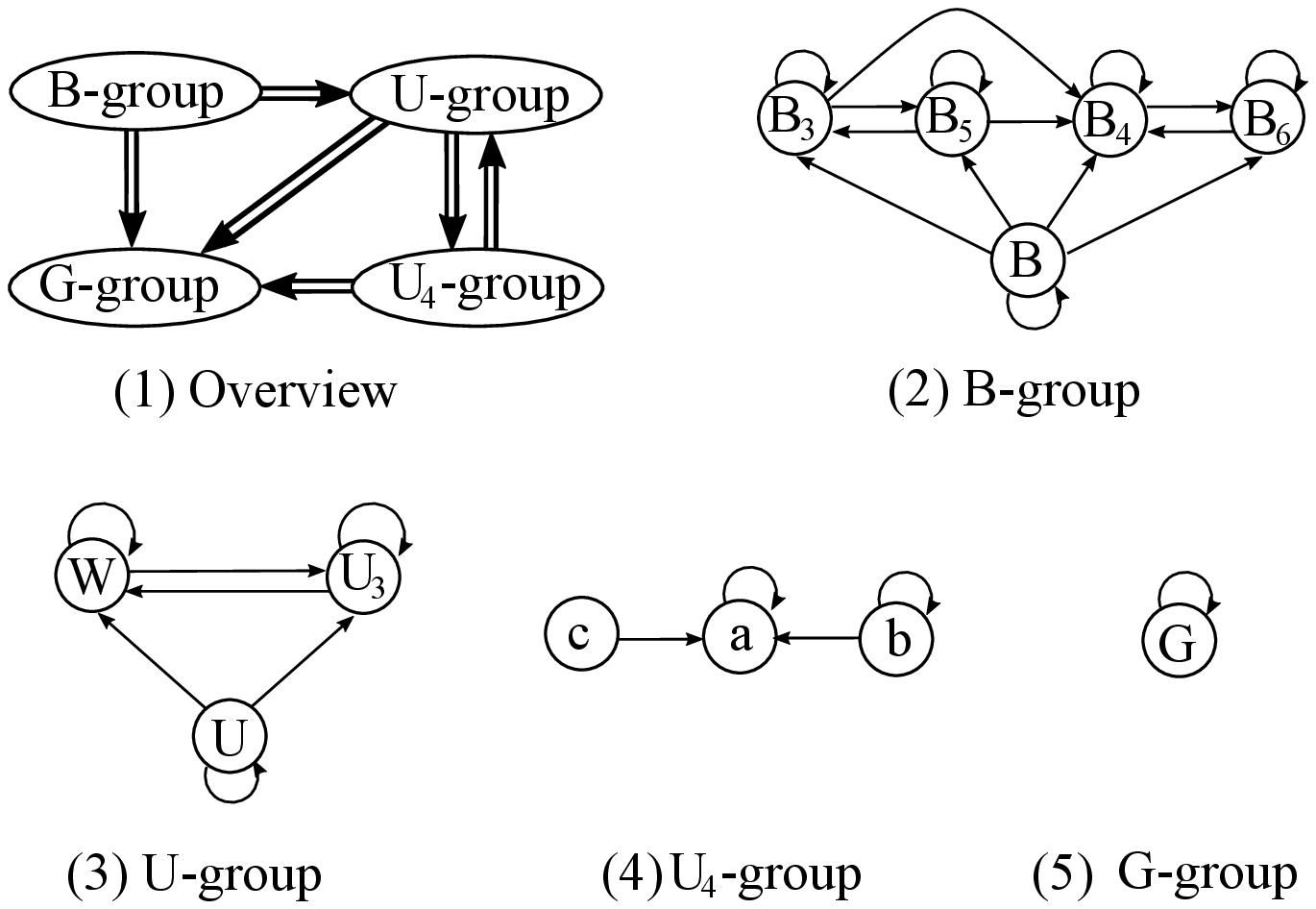}
\caption{The transition diagram among the types that LN$_{(n,2)}$ specifies.
Part (1): The overview of the transition diagram. 
Part (2): B-group contains types B, B$_3$, B$_4$, B$_5$, and B$_6$.  
Part (3): U-group contains types U, U$_3$, and W.
Part (4): U$_4$-group contains three cases (a), (b), and (c) in type U$_4$.
Part (5): G-group contains type G.
In each of Parts (2)-(5), 
an arrow represents a transition between two types 
(or two cases in U$_4$-group).
In Part (1), a double arrow from a group to another group represents that
there is a transition from a type in the former group to a type in the latter one.
A loop between U-group and U$_4$-group in Part (1)
eventually terminates by (XI) and (XV).
}
\label{F0030}
\end{figure}

\begin{description}
\item[(I) When $P_t$ is type G.] 

Any type G configuration satisfies the goal condition,
i.e., all robots have already gathered at a point or two.
Since $\mathrm{LN}_{(n,2)}(P_t) = 0$,\footnote{
Formally, this equation does not hold.
See the last footnote.}
the target point is the current position,
hence the execution stays type G forever,
once it reaches a configuration of type G.

\item[(II) When $P_t$ is type B.]

Suppose that $P_t$ is type B.
Then $P_{t+1}$ can be any type.

If $P_{t+1}$ is type B,
each activated robot, as long as it is not faulty, 
moves either to $b_1$ or $b_{m_t}$ at time $t$.
Unless $m_{t+1} < m_t$ (i.e., if $m_{t+1} = m_t$),
$\mu_{P_{t+1}}(b_1) + \mu_{P_{t+1}}(b_{m_t}) > \mu_{P_t}(b_1) + \mu_{P_t}(b_{m_t})$.
Thus this self-loop from B to B cannot repeat more than $n$ times,
and the execution eventually reaches a configuration
whose type is not B.

\item[(III) When $P_t$ is type B$_3$.]

Suppose that $P_t$ is type B$_3$.
Then the type of $P_{t+1}$ is either G, B$_3$, B$_4$, B$_5$, U$_3$, U$_4$, or U.

Let $L_t = L_{P_t}$, which is the length of $CH_t$.
If $P_{t+1}$ is type B$_3$, then $L_{t+1} = L_t/2$.
Thus the execution converges to a point,
if this self-loop from B$_3$ to B$_3$ repeats infinitely many times.

\item[(IV) When $P_t$ is type B$_4$.]

Suppose that $P_t$ is type B$_4$.
Then the type of $P_{t+1}$ is either G, B$_4$, B$_6$, U$_3$, U$_4$, or U.

Let $\lambda_t = \lambda_{P_t}$.
If $P_{t+1}$ is type B$_4$, $\lambda_{t+1} = \lambda_t/2$.
Thus the execution converges to two points,
if this self-loop from B$_4$ to B$_4$ repeats infinitely many times.

\item[(V) When $P_t$ is type B$_5$.]

Suppose that $P_t$ is type B$_5$.
Then the type of $P_{t+1}$ is either G, B$_3$, B$_4$, B$_5$, U$_3$, U$_4$, or U.

If $P_{t+1}$ is type B$_5$,
each activated robot, as long as it is not faulty, 
moves either to $b_2$ or $b_4$ at time $t$.
Unless $m_{t+1} < m_t$ (i.e., if $m_{t+1} = m_t$),
$\mu_{P_{t+1}}(b_2) + \mu_{P_{t+1}}(b_4) > \mu_{P_t}(b_2) + \mu_{P_t}(b_4)$.
Thus this self-loop from B$_5$ to B$_5$ cannot repeat more than $n$ times,
and the execution eventually reaches a configuration
whose type is not B$_5$.

\item[(VI) B$_3$B$_5^+$B$_3$ loop.]

Suppose that $P_t$ is type B$_3$.
Then as observed in (III),
$P_{t+1}$ can be type B$_5$.
Then as observed in (V),
after several repetition of the self-loop of B$_5$,
the execution can reach a configuration $P_{t'}$ of type $B_3$ 
for the first time after $t$.
Let $\overline{P_t} = \{ a,b,c \}$, where $b - a = c - b$.
Then $\overline{P_{t+1}} = \{ a, M_{ab}, b, M_{bc}, c \}$,
and $\overline{P_{t'}} = \{ M_{ab}, b, M_{bc}\}$.
Thus $L_{t'} = L_t/2$.
If this B$_3$B$_5^+$B$_3$ loop repeats infinitely many times,
the execution converges to a point.

\item[(VII) When $P_t$ is type B$_6$.]

Suppose that $P_t$ is type B$_6$.
Then the type of $P_{t+1}$ is either G, B$_4$, B$_6$, U$_3$, W, U$_4$, or U.

If $P_{t+1}$ is type B$_6$,
each activated robot, as long as it is not faulty, 
moves either to $b_2$ or $b_5$ at time $t$.
Unless $m_{t+1} < m_t$ (i.e., if $m_{t+1} = m_t$),
$\mu_{P_{t+1}}(b_2) + \mu_{P_{t+1}}(b_5) > \mu_{P_t}(b_2) + \mu_{P_t}(b_5)$.
Thus this self-loop from B$_6$ to B$_6$ cannot repeat more than $n$ times,
and the execution eventually reaches a configuration whose type is not B$_6$.

\item[(VIII) B$_4$B$_6^+$B$_4$ loop.]

Suppose that $P_t$ is type B$_4$.
Then as observed in (IV),
$P_{t+1}$ can be type B$_6$.
Then as observed in (VII),
after several repetition of the self-loop of B$_6$,
the execution can reach a configuration $P_{t'}$ of type $B_4$ 
for the first time after $t$.
Let $\overline{P_t} = \{ a,b,c,d \}$, 
where $a<b<c<d$ and $b - a = d - c$.
Then $\overline{P_{t+1}} = \{ a, M_{ab}, b, c, M_{cd}, d \}$,
and $\overline{P_{t'}} = \{ x, M_{ab}, M_{cd}, y\}$,
where $x \in \{a, b\}$ and $y \in \{c, d\}$.
(We ignore the order among $x, M_{ab}, M_{cd}, y$ in $\overline{P_{t'}}$.)
Thus $\lambda_{t'} = \lambda_t/2$.
If this B$_4$B$_6^+$B$_4$ loop repeats infinitely many times,
the execution converges to two points.

\item[(IX) Summary of (II)--(VIII).]

Suppose that $k_0 = k_{P_0} = 2$.
Unless the execution converges to a point or two,
by repeating B$_3$B$_5^+$B$_3$ loop or B$_4$B$_6^+$B$_4$ loop infinitely many times,
it eventually reaches a type G configuration or
a configuration $P_t$ such that $k_t = 1$.

\item[(X) When $P_t$ is type U.]

Suppose that $P_t$ is type U.
Then the type of $P_{t+1}$ is either G, U$_3$, W, U$_4$, or U.

If $b_1 \succ_{P_t} b_{m_t}$,
then $\mu_{P_t}(b_1) \geq \mu_{P_t}(b_{m_t})$.
Since LN$_{(n,2)}(P_t) = b_1$,
unless $m_{t+1} < m_t$,
$\mu_{P_{t+1}}(b_1) > \mu_{P_t}(b_1) \geq \mu_{P_t}(b_{m_t}) \geq \mu_{P_{t+1}}(b_{m_t})$.
Thus $b_1 \succ_{P_{t+1}} b_{m_t}$ at $t + 1$, and $P_{t+1}$ is type U.
This self-loop from U to U cannot repeat more than $n$ times.

Otherwise, if $b_{m_t} \succ b_1$,
similarly, $P_{t+1}$ is type U,
and this self-loop from U to U cannot repeat more than $n$ times.

Thus the execution eventually reaches a configuration of type G, U$_3$, W, or U$_4$. 

\item[(XI) When $P_t$ is type U$_3$.]

Suppose that $P_t$ is type U$_3$.
Then $k_{t+1} = 1$ unless the type of $P_{t+1}$ is G.
The type of $P_{t+1}$ is either G, U$_3$, or W (but not U$_4$).

Suppose that the type of $P_{t+1}$ is U$_3$.
Then all the robots at exactly one of $b_1, b_2$, and $b_3$
have moved to LN$_{(n,2)}(P_t)$ at $t$,
which implies that $\lambda_{t+1} \leq 2\lambda_t/3$.
Thus the execution eventually converges to two points,
if this self-loop from U$_3$ to U$_3$ repeats infinitely many times.

\item[(XII) When $P_t$ is type W.]

Suppose that $P_t$ is type W.
By the definition of LN$_{(n,2)}$,
$k_{t+1} = 1$, 
and the type of $P_{t+1}$ is either G, U$_3$, or W.
Furthermore, if $P_{t+1}$ is type W,
then $\lambda_{t+1} = \lambda_t$ holds.

Suppose that $b_3 \leq M_{b_1b_4}$.
Since $\mu_{P_{t+1}}(b_2) > \mu_{P_t}(b_2)$,
this self-loop from W to W can repeat at most $n$ times,
and the execution eventually reaches a configuration of type G or U$_3$.

Suppose otherwise that $b_2 \geq M_{b_1b_4}$.
Then $\mu_{P_{t+1}}(b_3) > \mu_{P_t}(b_3)$,
this self-loop from W to W can repeat at most $n$ times,
and the execution eventually reaches a configuration of type G or U$_3$.

\item[(XIII) U$_3$W$^+$U$_3$ loop.]

Suppose that the type of $P_t$ is U$_3$,
and that of $P_{t+1}$ is W.
As observed in (XII),
the execution can reach a configuration $P_{t'}$ of type U$_3$ 
for the first time after $t$.
Then by (XI) and (XII), $\lambda_{t'} \leq 2\lambda_t/3$.
Thus the execution converges to two points,
if this U$_3$W$^+$U$_3$ loop repeats infinitely many times.

\item[(XIV) When $P_t$ is type U$_4$.]

Suppose that $P_t$ is type U$_4$ and satisfies $\mu_{P_t}(b_1) \geq \mu_{P_t}(b_4)$.
We consider three cases corresponding to the three cases (a)-(c)
in the definition of LN$_{(n,2)}$.

\begin{description}
\item[(a) When $\mu_{P_t}(b_1) \geq \mu_{P_t}(b_3)$:]
Since LN$_{(n,2)}(P_t) = b_1$,
$\mu_{P_{t+1}}(b_1) > \mu_{P_t}(b_1) \geq \mu_{P_t}(b_4) \geq \mu_{P_{t+1}}(b_4)$.
Thus $k_{t+1} = 1$, $m_{t+1} \leq 4$, 
and the type of $P_{t+1}$ is either G, U$_3$, or U$_4$.

Moreover, if $P_{t+1}$ is type U$_4$, then it satisfies the condition (a).
Then this self-loop from U$_4$(a) to U$_4$(a) cannot repeat more than $n$ times,
and the execution eventually reaches a configuration of type G or U$_3$.

\item[(b) When $(\mu_{P_t}(b_1) < \mu_{P_t}(b_3)) \wedge (\mu_{P_t}(b_3) \geq 3)$:]
By the definition of LN$_{(n,2)}$,
only robots at $b_3$ can move (to $b_1$),
and the other robots cannot move (even if they are activated),
since LN$_{(n,2)}(P_t) = 0$.

First confirm that at least one robot $r$ at $b_3$ is non-faulty,
and eventually $r$ is activated to change the configuration,
since there are at most two faulty robots.
Thus $k_{t+1} = 1$, $m_t \leq 4$,
and the type of $P_{t+1}$ is either U$_3$ or U$_4$.

If $P_{t+1}$ is type U$_4$, then it satisfies condition (a) or (b).
If $P_{t+1}$ satisfies condition (b),
this self-loop from U$_4$(b) to U$_4$(b) cannot repeat more than $n$ times,
since $\mu_{P_{t+1}}(b_1) > \mu_{P_t}(b_1) \geq \mu_{P_t}(b_4) \geq \mu_{P_{t+1}}(b_4)$,
and the execution eventually reaches a configuration of type G or U$_3$.

On the other hand, if $P_{t+1}$ satisfies condition (a),
then as observed in case (a), 
the execution eventually reaches a configuration of type G or U$_3$.

Thus the execution eventually reaches a configuration of type G or U$_3$,
regardless of whether or not $P_{t+1}$ satisfies condition $(b)$.

\item[(c) When $(\mu_{P_t}(b_1) < \mu_{P_t}(b_3)) \wedge (\mu_{P_t}(b_3) < 3)$:]
Condition (c) holds, if and only if 
$1 \leq \mu_{P_t}(b_4) = \mu_{P_t}(b_1) < \mu_{P_t}(b_3) = 2$ 
(and $\mu_{P_t}(b_2) \geq 1$) hold.
Since $\mu_{P_t}(b_2) + \mu_{P_t}(b_3) \geq 3$,
there is at least one non-faulty robot $r$ at $b_2$ or $b_3$,
and $r$ eventually moves to $b_1$, 
since LN$_{(n,2)}(P_t) = b_1$, if $(b_2 = 0) \vee (b_3 = 0)$.
Since $\mu_{P_{t+1}}(b_1) \neq \mu_{P_{t+1}}(b_4)$, $k_{t+1} = 1$, 
and $m_{t+1} \leq 4$. 
Then the type of $P_{t+1}$ is G, U$_3$, or U$_4$.

Moreover if $P_{t+1}$ is type U$_4$, 
then it must satisfy condition (a),
since $\mu_{t+1}(b_1) \geq 2 \geq \mu_t(b_3) \geq \mu_{t+1}(b_3)$.
\end{description}

Let us summarize:
When $P_t$ is type U$_4$,
eventually the execution reaches a configuration of type G or U$_3$.

\item[(XV) Summary of (X)--(XIV).]

Suppose that $k_0 = k_{P_0} = 1$.
Unless the execution converges to two points
by repeating U$_3$ self-loop or U$_3$W$^+$U$_3$ loop infinitely many times,
it eventually reaches a type G configuration.
That is, the loop between U-group and U$_4$-group in Overview of Figure ~\ref{F0030}
eventually terminates, by (XI) and the summary of (XIV).
\end{description}

Now, we go on the proof.
The observations summarized in Figure~\ref{F0030} show that 
any execution $\mathcal{E}$ starting from a configuration of any type eventually 
reaches a type G configuration, 
if neither B$_3$B$_5^+$B$_3$-loop, B$_4$B$_6^+$B$_4$-loop, nor
U$_3$W$^+$U$_3$-loop repeats infinitely many times.
We can conclude the correctness of LN$_{(n,2)}$ by observations (IX) and (XV)
which show that any one of these loops cannot repeat infinitely many times 
without reaching a type G configuration.
\end{proof}

\medskip

It is easy to see that $\alpha(\psi_{(n,2)})=1$.
By Lemmas~\ref{L4040} and \ref{L4050},
we have the following theorem.

\begin{theorem}
\label{T4050}
Target function $\psi_{(n,2)}$,
which satisfies $\alpha(\psi_{(n,2)})=1$, is an algorithm 
for the \mbox{\rm FC(2)-PO}.
\end{theorem}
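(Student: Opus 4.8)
The plan is to assemble Theorem~\ref{T4050} from the pieces already in place, since all the genuine combinatorial work has been pushed into the supporting lemmas. Recall that the target function $\psi_{(n,2)}$ is defined by cases on the number $n$ of robots: for $n = 3$ it is the function $\psi_{(3,2)}$ of Section~\ref{SSSA32}, and for $n \geq 4$ it is the function $\psi_{(n,2)}$ of Section~\ref{SSSAn2}, which invokes LN$_{(n,2)}$ on type L configurations. I would first settle the scale claim and then the correctness claim, each by splitting on $n$.

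For the scale, I would prove $\alpha(\psi_{(n,2)}) \leq 1$ by inspecting every branch of the definition and checking that each possible output — $\bm{o}_P$, $\bm{M}_P$, $\bm{a}_P$, the midpoints used by $\psi_{(3,2)}$, and every value returned by LN$_{(n,2)}$ — lies in $CH(P)$; hence $\psi_{(n,2)}(P) \in 1 * CH(P)$ for all $P$, so $\alpha(\psi_{(n,2)}, P) \leq 1$. For the reverse inequality I would exhibit a witnessing configuration. When $n = 3$ this is already contained in Lemma~\ref{L4030}, which establishes $\alpha(\psi_{(3,2)}) = 1$. When $n \geq 4$ I would take any type U line configuration $P$: on such a $P$ the function invokes LN$_{(n,2)}$, which returns an endpoint $b_1$ (or $b_{m_P}$) of the segment. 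Since an endpoint is an extreme point of $CH(P)$, it cannot lie in $d * CH(P)$ for any $d < 1$, forcing $\alpha(\psi_{(n,2)}, P) = 1$. Taking the supremum over $P$ gives $\alpha(\psi_{(n,2)}) = 1$.

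For correctness I would again split on $n$. When $n = 3$, Lemma~\ref{L4030} already states that $\psi_{(3,2)}$ is an FC(2)-PO algorithm, so nothing further is required. When $n \geq 4$, Lemma~\ref{L4040} shows that $\psi_{(n,2)}$ solves the FC(2)-PO \emph{provided} the subroutine LN$_{(n,2)}$ exists and behaves as specified, namely that it drives every type L configuration to a type G configuration while keeping the convex hull nested and never leaving type L. Lemma~\ref{L4050} discharges precisely this hypothesis, proving that LN$_{(n,2)}$ is an FC(2)-PO algorithm on line configurations that never leaves type L. Chaining the two lemmas removes the assumption and yields that $\psi_{(n,2)}$ is an FC(2)-PO algorithm for every $n \geq 4$.

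The combination itself is routine, and the main point of care is the \emph{interface} between the two lemmas: one must verify that the ``provided the existence of LN$_{(n,2)}$'' clause of Lemma~\ref{L4040} matches verbatim the guarantee proved in Lemma~\ref{L4050} — correctness on type L inputs together with the invariant $CH_t \subseteq CH_0$ and the property of never leaving type L. I expect this bookkeeping, rather than any new mathematical content, to be the only delicate step, since the real difficulty (the exhaustive transition-diagram analyses of Lemmas~\ref{L4040} and \ref{L4050} and the rotation-group argument of Lemma~\ref{A1005}) has already been absorbed into the earlier results.
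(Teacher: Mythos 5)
Your proposal is correct and follows essentially the same route as the paper, whose entire proof of Theorem~\ref{T4050} is the one-line combination of Lemmas~\ref{L4040} and \ref{L4050} (with the scale claim dismissed as ``easy to see''); you merely fill in the scale computation and the $n=3$ case explicitly. One cosmetic point: for $n=4$ there are no type U configurations (they require $m_P\geq 5$), so your witness for $\alpha\geq 1$ should instead be, e.g., a type U$_4$(a) configuration, where LN$_{(n,2)}$ likewise returns the endpoint $b_1$.
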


\section{Gathering problem}
\label{Sgather}

We finally investigate the gathering problem,
provided that there are no faulty robots,
to emphasize that the gathering and the convergence problems have
completely different properties from the viewpoint of compatibility.

Since the gathering problem is not solvable if $n = 2$ \cite{SY99},
we assume $n \geq 3$ in this section.
Moreover, we assume that the robots initially occupy distinct points,
as all proposed gathering algorithms assume.
There are many gathering algorithms.
The following algorithm GAT \cite{SY99} is one of them.

Let $P = \{ \bm{p}_1, \bm{p}_2, \ldots , \bm{p}_n \}$ be a multiset 
of $n \geq 3$ points.
We use concepts $\overline{P}, \mu_P, \bm{o}_P, k_P$, and $\succ_P$ 
introduced in Subsection~\ref{SSSAn2}.

\medskip
\noindent
{\bf [Target function GAT]}
\begin{enumerate}
\item
If there is a unique $\bm{p} \in P$ such that $\mu_P(\bm{p}) > 1$,
GAT$(P) = \bm{p}$.

\item
Otherwise, if $\mu_P(\bm{p}) = 1$ for all $\bm{p} \in P$:

\begin{description}
\item[(a)]
If $k_P \leq 1$,
GAT$(P) = \bm{p}$,
where $\bm{p}$ is the largest point in $P$ with respect to $\succ_P$.

\item[(b)]
If $k_P > 1$,
GAT$(P) = \bm{o}_P$.
\end{description}
\end{enumerate}

\noindent
Observe that $\alpha({\rm GAT}) = 1$.
A sketch of the correctness proof of GAT is as follows:
Since the robots initially occupy distinct points by assumption,
and by the definition of Step~2 of GAT,
in any execution of GAT,
there must be a time such that a unique point $\bm{q}$\footnote{
Here $\bm{q}$ is either $\bm{p}$ or $\bm{o}_P$ in Step~2 of GAT.} 
satisfying $\mu(\bm{q}) > 1$ occurs in the configuration for the first time.
Then by the definition of Step~1,
$\mu(\bm{q})$ monotonically increases
(while $\mu(\bm{x})$ of the other points $\bm{x} \in P$ monotonically decreases),
and eventually $\mu(\bm{q}) = n$ is satisfied.

We can modify Step~2(a) of GAT to obtain another algorithm GAT$'$.
For example, GAT$'(P)$ can be the smallest point $\bm{p}'$ in $P$ 
with respect to $\succ_P$, instead of the largest point $\bm{p}$.
Then indeed GAT$'$ is also a gathering algorithm with $\alpha({\rm GAT'}) = 1$,
but obviously $\Phi = \{{\rm GAT, GAT'} \}$ is not compatible 
with respect to the gathering problem;
if some robots take GAT and some others GAT$'$,
then a configuration $P$ such that $\mu_P(\bm{p}), \mu_P(\bm{p}') \geq 2$ may yield.
Let us summarize.

\begin{theorem}~{\rm \cite{SY99}}
\label{T5005}
Let $\Phi = \{ {\rm GAT} \}$ and $\Phi' = \{ {\rm GAT'} \}$.
Then $\Phi$ and $\Phi'$ are compatible with respect to the gathering problem,
but $\Phi \cup \Phi'$ is not.

Here $\alpha(\Phi) = \alpha(\Phi') = \alpha(\Phi \cup \Phi') = 1$.
\end{theorem}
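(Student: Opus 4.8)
The plan is to prove the three assertions separately: (i) GAT is a gathering algorithm, i.e. $\Phi=\{{\rm GAT}\}$ is compatible; (ii) GAT$'$ is a gathering algorithm, i.e. $\Phi'=\{{\rm GAT}'\}$ is compatible; and (iii) $\Phi\cup\Phi'=\{{\rm GAT},{\rm GAT}'\}$ is not compatible with respect to the gathering problem. The scale claims are immediate and I would dispose of them first: since GAT (and GAT$'$) may return a vertex of $CH(P)$, which lies on the boundary of $CH(P)$, one gets $\alpha({\rm GAT})=\alpha({\rm GAT}')=1$ exactly as for the lexicographic examples in the earlier sections, and hence $\alpha(\Phi\cup\Phi')=\sup\{\alpha(\Phi),\alpha(\Phi')\}=1$.

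For (i) I would first record that all robots compute the same target. As established in Subsection~\ref{SSSAn2}, the center $\bm{o}_P$ of the smallest enclosing circle, the rotation order $k_P$, the multiplicity function $\mu_P$, and the order $\succ_P$ are all invariant under the coordinate transformation $\gamma_i$; hence every robot agrees on $\bm{o}_P$ and, when $k_P=1$, on the $\succ_P$-largest point. Granting this consistency, the correctness is the Suzuki--Yamashita argument sketched just before the theorem, which I would carry out in two stages. Starting from a configuration of distinct points, I would show that a configuration containing a \emph{unique} point $\bm{q}$ with $\mu(\bm{q})>1$ is eventually reached: when $k_P=1$ all activated robots head to the common $\succ_P$-largest point, which therefore accumulates multiplicity, and when $k_P>1$ all activated robots head to $\bm{o}_P$, so that either $\bm{o}_P$ becomes such a point or the symmetry is broken and $k_P$ drops, which is exactly the symmetricity analysis of \cite{SY99}. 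Once a unique $\bm{q}$ with $\mu(\bm{q})>1$ exists, Step~1 of GAT sends every activated robot to $\bm{q}$, so by fairness $\mu(\bm{q})$ increases monotonically to $n$ and all robots gather at $\bm{q}$. Part (ii) is identical, reading ``$\succ_P$-smallest'' for ``$\succ_P$-largest'' throughout; the consistency and accumulation arguments are unchanged, since GAT$'$ differs from GAT only in Step~2(a).

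For (iii) I would exhibit a single non-gathering execution of the mixed set. Take $n=4$ robots at four distinct points in general position, so that $k_{P_0}=1$ and every multiplicity is $1$; let $\bm{a}$ and $\bm{b}$ be the $\succ_{P_0}$-largest and $\succ_{P_0}$-smallest points and $\bm{c},\bm{d}$ the remaining two. Assign GAT to the robots at $\bm{a}$ and $\bm{c}$ (both target the $\succ$-largest point $\bm{a}$) and GAT$'$ to the robots at $\bm{b}$ and $\bm{d}$ (both target the $\succ$-smallest point $\bm{b}$), and let the scheduler be ${\cal FSYNC}$. At time $0$ the robot at $\bm{c}$ moves to $\bm{a}$ and the robot at $\bm{d}$ moves to $\bm{b}$, while the robots at $\bm{a}$ and $\bm{b}$ stay, producing $P_1=\{\bm{a},\bm{a},\bm{b},\bm{b}\}$ with $\mu_{P_1}(\bm{a})=\mu_{P_1}(\bm{b})=2$. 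This bivalent configuration defeats the unique-multiplicity-point mechanism driving (i) and (ii): Step~1 of both GAT and GAT$'$ requires a \emph{unique} point of multiplicity $>1$, which no longer exists, so under the natural convention that a robot stays put when neither case applies, $P_1$ persists forever and the two distinct clusters never merge. Hence the mixed set does not always gather, which establishes incompatibility.

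The main obstacle is twofold. In (i) it is the rotationally symmetric case $k_P>1$: one must argue, via the symmetricity $\sigma(P)=\gcd(k_P,\mu_P(\bm{o}_P))$ of \cite{SY99}, that moving to $\bm{o}_P$ either creates the unique multiplicity point or strictly reduces the symmetry, and that this process cannot loop forever. In (iii) the delicate point is justifying rigorously that the bivalent $P_1$ never subsequently gathers; the crucial requirement there is that the conflict be forced to create \emph{two} multiplicity points \emph{simultaneously} (which is why ${\cal FSYNC}$ is essential, as any single prior move would instead create a unique multiplicity point that the shared Step~1 would then unify into a successful gathering). This is also where I would be careful about the behaviour of GAT and GAT$'$ on configurations with two multiplicity points, invoking either the stated default-stay convention or a symmetry (indistinguishability) argument rather than relying on behaviour the description does not pin down.
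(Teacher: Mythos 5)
Your proposal is correct and follows essentially the same route as the paper: the paper likewise disposes of $\alpha(\mathrm{GAT})=\alpha(\mathrm{GAT}')=1$, sketches the \cite{SY99} correctness argument (a unique point of multiplicity $>1$ eventually appears and then monotonically absorbs all robots), and establishes incompatibility by noting that mixing GAT and GAT$'$ can produce a configuration with two points of multiplicity $\geq 2$. Your version is in fact more explicit than the paper's — you give a concrete $n=4$ ${\cal FSYNC}$ execution reaching the bivalent configuration and you flag that GAT's behaviour there is left unspecified by its definition, a point the paper passes over with ``obviously.''
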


\begin{theorem}
\label{T5010} 
Any target function $\phi$ is not a gathering algorithm
if $\alpha(\phi) < 1$, or equivalently,
any set $\Phi$ of target functions such that $\alpha(\Phi) < 1$ is not
compatible with respect to the gathering problem.
\end{theorem}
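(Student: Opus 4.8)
The plan is to reduce the statement to its first, pointwise form: it suffices to show that a \emph{single} target function $\phi$ with $\alpha(\phi)<1$, adopted by all robots, fails to gather. Indeed, if $\alpha(\Phi)<1$ then every $\phi\in\Phi$ satisfies $\alpha(\phi)<1$, and if $\Phi$ were compatible with respect to gathering then, in particular, the assignment in which all robots take one fixed $\phi\in\Phi$ would gather, making that $\phi$ a gathering algorithm. So I would fix an arbitrary $\phi$ with $\alpha(\phi)<1$, let every robot use $\phi$, start from any configuration $P_0$ of $n\ge 3$ \emph{distinct} points, and drive the system with the round-robin schedule that activates exactly one robot at each time step. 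This schedule activates every robot infinitely often, hence is fair and admissible, and I would then prove that the configuration never collapses to a single point.

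The heart of the argument is a one-step impossibility. Suppose, for contradiction, that $t+1$ is the first time the configuration becomes a single point $\bm{c}$, i.e.\ $P_{t+1}=\{\bm{c},\ldots,\bm{c}\}$ while $|\overline{P_t}|\ge 2$. Because only one robot $r$ is activated at time $t$, the other $n-1\ge 2$ robots do not move, so they already sit at $\bm{c}$ in $P_t$; since $P_t$ is not yet a single point, $r$ must be at some $\bm{d}\neq\bm{c}$. Hence $P_t=\{\bm{c},\ldots,\bm{c},\bm{d}\}$ with multiplicity $n-1$ at $\bm{c}$, whose convex hull is the nondegenerate segment $\overline{\bm{c}\bm{d}}$, and $g(P_t)=\frac{(n-1)\bm{c}+\bm{d}}{n}$ lies strictly between $\bm{c}$ and $\bm{d}$, so $g(P_t)\neq\bm{c}$. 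For the collapse to occur, $r$ must move exactly onto the vertex $\bm{c}$ of this segment.

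I would then contradict this using the scale. Working in $r$'s local frame $Q_t^{(r)}=\gamma_r^{-1}(P_t)$ and using that $\gamma_r$ is an orientation-preserving similarity, the convex hull, the center of gravity, and the shrunk hull all transform equivariantly, and $\alpha(\phi,Q_t^{(r)})\le\alpha(\phi)<1$; thus it is equivalent to argue in $Z_0$. Write $d=\alpha(\phi)<1$. If $\bm{c}\in d*CH(P_t)$ then $\bm{c}=d\bm{y}+(1-d)g(P_t)$ for some $\bm{y}\in CH(P_t)$, whence $\bm{y}=g(P_t)+(\bm{c}-g(P_t))/d$ lies on the ray from $g(P_t)$ through $\bm{c}$ strictly beyond $\bm{c}$ (as $1/d>1$ and $\bm{c}\neq g(P_t)$). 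But $\bm{c}$ is an endpoint of the segment $CH(P_t)$, so such $\bm{y}\notin CH(P_t)$, a contradiction. Hence $\phi(P_t)\neq\bm{c}$ (mapped back through $\gamma_r$), so $r$ cannot reach $\bm{c}$ and no gathering time exists. Since $\phi$ was an arbitrary target function with $\alpha(\phi)<1$, the theorem follows.

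I expect the only delicate point to be the bookkeeping between local and global coordinates: making precise that ``$r$ moves onto $\bm{c}$'' is the same as ``$\phi(Q_t^{(r)})$ equals the vertex $\gamma_r^{-1}(\bm{c})$ of $CH(Q_t^{(r)})$'', together with the similarity-invariance of $g$ and of the operator $d*(\cdot)$. Everything else reduces to the elementary fact that a vertex of a convex hull different from the center of gravity can never lie in that hull shrunk by a factor $d<1$, which is immediate here because the relevant hull is always a segment.
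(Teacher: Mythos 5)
Your proof is correct and follows essentially the same route as the paper's: both drive the execution with a scheduler that activates one robot at a time, observe that the final gathering step forces that robot to land exactly on the common position of the others --- an endpoint of the segment $CH(P_t)$ distinct from $g(P_t)$ --- and conclude that this is impossible when the scale is below $1$. You merely spell out the details the paper leaves implicit (the local-coordinate equivariance and the geometric fact that a hull vertex other than the centroid never lies in $d*CH(P)$ for $d<1$) and argue directly for general $n\ge 3$ rather than contrapositively for $n=3$.
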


\begin{proof}
Consider any gathering algorithm $\phi$
and show that $\alpha(\phi) = 1$.

Suppose that $n = 3$.
For any initial configuration $P_0 $ satisfying that all robots occupy distinct positions,
we investigate any execution ${\cal E}: P_0, P_1, \ldots$,
assuming that the scheduler is central,
i.e., exactly one robot is activated each time.

Since ${\cal E}$ achieves the gathering,
there is a time instant $t$ such that $|\overline{P_t}| = 2$ 
and $|\overline{P_{t+1}}| = 1$.
Since exactly one robot, say $r$, is activated at $t$,
it moves to the position of the other robots (since they occupy the same position).
Hence $\alpha(\phi) = 1$.
\end{proof}

\section{Conclusions}
\label{Sconclusion}

We introduced the concept of compatibility 
and investigated the compatibilities of several convergence problems.
A compatible set $\Phi$ of target functions with respect to a problem $\Pi$ 
is an extension of an algorithm $\phi$ for $\Pi$,
in the sense that every target function $\phi \in \Phi$ is an algorithm for $\Phi$,
although a set of algorithms for $\Pi$ is not always a compatible set with respect to $\Pi$.

The problems we investigated are the convergence problem,
the fault tolerant $(n,f)$-convergence problem (FC($f$)),
the fault tolerant $(n,f)$-convergence problem to a convex $f$-gon (FC($f$)-CP), and
the fault tolerant $(n,f)$-convergence problem to $f$ points (FC($f$)-PO),
for crash faults.
The gathering problem was also investigated.
The results are summarized in Table~\ref{Table0010}.
Main observations we would like to emphasize are:
\begin{enumerate}
 \item 
The convergence problem, the FC$(1)$, the FC$(1)$-PO, and the FC($f$)-CP 
share the same property:
Every set $\Phi$ of target functions such that $0\leq \alpha(\Phi) < 1$ is compatible.
\item
The gathering problem and the FC$(f)$-PO for $f \geq 2$ share the same property:
Any set $\Phi$ of target functions such that $0\leq \alpha(\Phi) < 1$ is {\bf not} compatible.
\item
FC($f$) ($f \geq 2$) is in between the FC($f$)-CP and the FC($f$)-PO.
\item
The FC(1)-PO and the FC(2)-PO are completely different problems 
from the viewpoint of compatibility.
\end{enumerate}

A compatible set $\Phi$ with respect to a problem $\Pi$ could be regarded
as an ``algorithm scheme'' $\cal S$ for $\Pi$ such that every algorithm 
in $\Phi$ is an instantiation of $\cal S$.
For example, for any fixed $0 \leq \alpha < 1$, 
the set $\Phi_{\alpha} = \{ \phi: \alpha(\phi) \leq \alpha \}$ 
of target functions $\phi$ is compatible with respect to the convergence problem.
Then an algorithm scheme $\cal S$ defining $\Phi_{\alpha}$ 
(as the set of its instantiations) is 
\begin{quote}
$\phi(P) \in \alpha*CH(P)$ for all $P \in {\cal P}$.
\end{quote}
It is not an algorithm, 
since it does not specify a concrete value of $\phi(P)$.
It however captures the essence how all algorithms in common 
solve the convergence problem,
and we can show the correctness of each algorithm from this description.

For any $0 < \delta \leq 1/2$,
the set $\Lambda_{\delta} = \{ \phi: \phi ~\mbox{\rm is $\delta$-inner} \}$ 
is compatible with respect to the convergence problem \cite{CDFH11}.
For any $0 \leq \alpha < 1$,
there is a $0 < \delta \leq 1/2$ such that
$\Phi_{\alpha} \subset \Lambda_{\delta}$,
which implies that there is another algorithm scheme for the convergence problem
which is more general than algorithm scheme: 
$\phi(P) \in \alpha*CH(P)$ for all $P \in {\cal P}$.
Investigation of an algorithm scheme would lead us a deeper
understanding of robot algorithms.

Note that the concept of algorithm scheme is not new:
In Chapter 21 of \cite{CLRS22}, for example, 
the authors first present GENERIC-MST,
an algorithm scheme to construct a minimum spanning tree
(and show its correctness),
and then derive Kruscal's and Prim's algorithms as its instantiations.

Before closing the paper,
we list some open problems:
\begin{enumerate}
\item
Extend Table~\ref{Table0010} to contain the results for $\alpha(\Phi) > 1$.

\item
Suppose that $\phi$ and $\phi'$ are algorithms for the convergence problem.
Find a necessary and/or a sufficient condition for $\Phi = \{ \phi, \phi' \}$ 
to be compatible with respect to the convergence problem.

More generally,
if a set $\Phi$ of target functions is compatible with respect to the convergence problem,
then every target function $\phi \in \Phi$ is a convergence algorithm.
What is a sufficient condition for a set of convergence algorithms
to be compatible with respect to the convergence algorithm?

\item
Is there an algorithm for the FC$(3)$-PO?

\item
Is the next statement correct?
If a set $\Phi$ of target functions is compatible with respect to the convergence problem,
so is $\beta \Phi = \{ \beta \phi : \phi \in \Phi \}$ for any real number $0 < \beta \leq 1$.

\item
Let $ALG_{{\rm FC(1)}}$ (resp. $ALG_{\textrm{FC(1)-PO}}$) 
be the set of all algorithms for the FC(1) (resp. the FC(1)-PO).
Does $ALG_{\textrm{FC(1)-PO}} = ALG_{{\rm FC(1)}}$ hold?

\item
Investigate the compatibility of convergence problems under the $\cal ASYNC$ model.

\item
Investigate the compatibility of convergence problems in the presence of Byzantine failures.

\item
Investigate the compatibility of fault tolerant gathering problems.

\item
Find interesting problems with a large compatible set.

\end{enumerate}

\end{document}